\documentclass[sigconf]{acmart}

\AtBeginDocument{%
  \providecommand\BibTeX{{%
    \normalfont B\kern-0.5em{\scshape i\kern-0.25em b}\kern-0.8em\TeX}}}

\setcopyright{acmcopyright}
\copyrightyear{2018}
\acmYear{2018}
\acmDOI{XXXXXXX.XXXXXXX}

\acmConference[Conference acronym 'XX]{Make sure to enter the correct
  conference title from your rights confirmation emai}{June 03--05,
  2018}{Woodstock, NY}
\acmPrice{15.00}
\acmISBN{978-1-4503-XXXX-X/18/06}

\usepackage[linesnumbered, ruled]{algorithm2e}
\usepackage[noend]{algpseudocode}

\SetCommentSty{mycommfont}
\SetKwInput{KwInput}{Input}                %
\SetKwInput{KwOutput}{Output}              %

\usepackage{booktabs}
\usepackage{multirow}

\usepackage{subfig}
\usepackage{graphicx}
\usepackage{stfloats}

\usepackage[skip=6pt]{caption}

\newtheorem{definition}{Definition}

\newcounter{example}
\newenvironment{example}[1][]{\refstepcounter{example}\par\medskip   \noindent \textbf{Example~\theexample. #1} \rmfamily}{\qedsymbol  \smallskip}

\let\emptyset\varnothing

\newcommand{\mc}[1]{\mathcal{#1}}

\usepackage{xspace}

\newcommand{\oursystem}{\texttt{DProvDB}\xspace}

\newcommand{\provT}{privacy provenance table\xspace}
\newcommand{\mypaper}{paper\xspace}
\newcommand{\mysection}{section\xspace}
\newcommand{\mySection}{Section\xspace}

\newcommand{\baseline}{vanilla\xspace}
\newcommand{\Baseline}{Vanilla\xspace}

\newcommand{\admin}{administrator\xspace}
\newcommand{\Admin}{Administrator\xspace}

\usepackage{amsmath}

\DeclareMathOperator*{\argmax}{arg\,max}

\newcommand{\databasedom}{\mc{D}}

\newcommand{\xh}[1]{{\color{purple} (XH: #1)}}

\newcommand{\eat}[1]{}

\newcommand{\revise}[1]{{#1}}
\newcommand{\shepherding}[1]{{#1}}

\newcommand{\stitle}[1]{\smallskip \noindent{\bf #1}}

\newtheorem{problem}{Problem}

\newcommand{\squishlist}{
	\begin{list}{$\bullet$}
		{
			\setlength{\itemsep}{0pt}
			\setlength{\parsep}{3pt}
			\setlength{\topsep}{3pt}
			\setlength{\partopsep}{0pt}
			\setlength{\leftmargin}{1.5em}
			\setlength{\labelwidth}{1em}
			\setlength{\labelsep}{0.5em} } }

\newcommand{\squishend}{
	\end{list}  }

\usepackage{enumitem}

\DeclareMathOperator\e{e}

\usepackage{circledtext}
\usepackage{pgffor}
\circledtextset{boxcolor=black,boxtype=ox,charf=\LARGE,boxfill = red!20}

\newcommand{\linelabel}[1]{}

\settopmatter{printfolios=true}
\settopmatter{printacmref=false}

\renewcommand\footnotetextcopyrightpermission[1]{} %

\newif\ifarxiv
\arxivtrue     %

\newif\ifconf
\conffalse   %

\begin{document}

\title{DProvDB: Differentially Private Query Processing with Multi-Analyst Provenance}
\titlenote{This is the full version of the work being accepted for publication in \emph{Proc. of the ACM on Management of Data} (SIGMOD 2024).}

\author{Shufan Zhang}
\affiliation{%
  \institution{University of Waterloo}
  \country{Waterloo, Canada}
}
\email{shufan.zhang@uwaterloo.ca}

\author{Xi He}
\affiliation{%
  \institution{University of Waterloo}
  \country{Waterloo, Canada}
}
\email{xi.he@uwaterloo.ca}

\begin{abstract}
Recent years have witnessed the adoption of differential privacy (DP) in practical database systems like PINQ, FLEX, and PrivateSQL. Such systems allow data analysts to query sensitive data while providing a rigorous and provable privacy guarantee.
However, the existing design of these systems does not distinguish data analysts of different privilege levels or trust levels. This design can have an unfair apportion of the privacy budget among the data analyst if treating them as a single entity, or waste the privacy budget if considering them as non-colluding parties and answering their queries independently. In this \mypaper, we propose \oursystem, a fine-grained privacy provenance framework for the multi-analyst scenario that tracks the privacy loss to each single data analyst. Under this framework, when given a fixed privacy budget, we build algorithms that maximize the number of queries that could be answered accurately and apportion the privacy budget according to the privilege levels of the data analysts.
\end{abstract}

\maketitle

\setcounter{page}{1}

\pagestyle{plain}

\section{Introduction}
\label{sec:introduction}

With the growing attention on data privacy and the development of privacy regulations like GDPR \cite{gdpr}, companies with sensitive data must share their data without compromising the privacy of data contributors. Differential privacy (DP)~\cite{dwork2014algorithmic} has been considered as a promising standard for this setting. Recent years have witnessed the adoption of DP to practical systems for data management and online query processing, such as PINQ \cite{mcsherry2009privacy}, FLEX \cite{johnson2018towards}, PrivateSQL \cite{kotsogiannis2019privatesql}, GoogleDP \cite{amin2022plume}, and Chorus \cite{johnson2020chorus}. In systems of this kind, data curators or system providers set up a finite system-wise privacy budget to bound the overall extent of information disclosure. An incoming query consumes some privacy budget. The system stops processing new queries once the budget has been fully depleted. Thus, the privacy budget is a crucial resource to manage in such a query processing system. 

In practice, multiple data analysts can be interested in the same data, and they have different privilege/trust levels in accessing the data. For instance, 
tech companies need to query their users' data for internal applications like anomaly detection. They also consider inviting  external researchers with low privilege/trust levels to access the same sensitive data for study. Existing query processing systems with DP guarantees would regard these data analysts as a unified entity and do not provide tools to distinguish them or track their perspective privacy loss. 
This leads to a few potential problems. First, a low-privilege external data analyst who asks queries first can consume more privacy budget than an internal one, if the system does not interfere with the sequence of queries. Second, if na\"ively tracking and answering each analyst's queries independently of the others, the system can waste the privacy budget when two data analysts ask similar queries. 

The aforementioned challenges to private data management and analytics are mainly on account of the fact that the systems are ``\emph{stateless}'', meaning none of the existing DP query-processing systems records the individual budget limit and the historical queries asked by the data analysts.
That is, the \textbf{metadata} about \emph{where the query comes from, how the query is computed, and how many times each result is produced}, which is related to the \textbf{provenance information} in database research \cite{buneman2007provenance,cheney2009provenance}. As one can see, without privacy provenance, the query answering process for the multi-analyst use case can be unfair or wasteful in budget allocation.

To tackle these challenges, we propose a ``stateful'' DP query processing system \oursystem, which enables a novel privacy provenance framework designed for the multi-analyst setting.
Following the existing work \cite{kotsogiannis2019architecting}, \oursystem answers queries based on private synopses (i.e., materialized results for views) of data.
Instead of recording all the query metadata, we propose a more succinct data structure --- a \provT, that enforces only necessary privacy tracing as per each data analyst and per view.
The \provT is associated with privacy constraints so that constraint-violating queries will be rejected.
Making use of this privacy provenance framework, \oursystem can maintain global (viz., as per view) and local (viz., as per analyst) DP synopses and update them dynamically according to data analysts' requests.

\revise{\linelabel{line:intro}\oursystem is supported with a new principled method, called \emph{additive Gaussian approach}, to manage DP synopses.
The additive Gaussian approach leverages DP mechanism that adds correlated Gaussian noise to mediating unnecessary budget consumption across data analysts and over time.}
This approach first creates a global DP synopsis for a view query; Then, from this global synopsis, it provides the necessary local DP synopsis to data analysts who are interested in this view by adding more Gaussian noise. In such a way \oursystem is tuned to answer as many queries accurately from different data analysts. Even when all the analysts collude, the privacy loss will be bounded by the budget used for the global synopsis. 
Adding up to its merits, we notice the provenance tracking in \oursystem can help in achieving a notion called proportional fairness.
We believe most of if not all, existing DP query processing systems can benefit from integrating our multi-analyst privacy provenance framework ---
\oursystem can be regarded as a middle-ground approach between the purely interactive DP systems and those based solely on synopses, from which we both provably and empirically show that \oursystem can significantly improve on system utility and fairness for multi-analyst DP query processing.

The contributions of this paper are the following:
\squishlist
    \item We propose a multi-analyst DP model where mechanisms satisfying this DP provide discrepant answers to analysts with different privilege levels. Under this setting, we ask research questions about tight privacy analysis, budget allocation, and fair query processing. \revise{(Section $\S$\ref{sec:problem_setup})}
    
    \item 
    We propose a privacy provenance framework that compactly traces historical queries and privacy consumption as per analyst and as per view. 
    With this framework, the \admin is able to enforce privacy constraints, enabling dynamic budget allocation and fair query processing. \revise{(Section $\S$\ref{sec:system_overview})}
    
    \item We design new accuracy-aware DP mechanisms that leverages the provenance data to manage synopses and inject correlated noise to achieve tight collusion bounds \revise{over time} in the multi-analyst setting. The proposed mechanisms can be seamlessly added to the algorithmic toolbox for DP systems. \revise{(Section $\S$\ref{sec:low_sensitive_algos})}

    \item 
    We implement \oursystem\footnote{The system code is available at \url{https://github.com/DProvDB/DProvDB}.}, as a new multi-analyst query processing interface, and integrate it into an existing DP query system. We empirically evaluate \oursystem, and the experimental results show that our system is efficient and effective compared to baseline systems. \revise{(Section $\S$\ref{sec:evaluation})}
\squishend

\ifarxiv
\revise{
\stitle{Paper Roadmap.}\linelabel{line:roadmap}
The remainder of this \mypaper is outlined as follows.
\mySection \ref{sec:preliminaries} introduces the necessary notations and background knowledge on database and DP.
Our multi-analyst DP query processing research problems are formulated in \mysection \ref{sec:problem_setup} and a high-level overview of our proposed system is briefed in \mysection \ref{sec:system_overview}.
\mySection \ref{sec:low_sensitive_algos} describes the details of our design of the DP mechanisms and system modules.
In \mysection \ref{sec:evaluation}, we present the implementation details and an empirical evaluation of our system against the baseline solutions.
Section \ref{sec:discussion_new} discusses extensions to the compromisation model and other strawman designs.
In \mysection \ref{sec:related_work} we go through the literature that is related and we conclude this work in \mysection \ref{sec:conclusion}.
} 
\fi

\section{Preliminaries}
\label{sec:preliminaries}

Let $\databasedom$ denotes the domain of database and $D$ be a database instance.
A relation $R \in D$ consists of a set of attributes, $attr({R}) = \{ a_1, \dots, a_j \}$.
We denote the domain of an attribute $a_j$ by $Dom(a_j)$ while $|Dom(a_j)|$ denotes the domain size of that attribute.
We introduce and summarize the related definitions of differential privacy.

\begin{definition}[Differential Privacy \cite{dwork2006calibrating}]

We say that a randomized algorithm $\mc{M}: \databasedom \rightarrow  \mc{O}$ 
satisfies $(\epsilon, \delta)$-differential privacy (DP), if for any two
neighbouring databases $(D,D')$ that differ in only 1 tuple, 
and $O\subseteq \mc{O}$, we have
$$
\Pr[\mc{M}(D) \in O] \leq e^\epsilon \Pr[\mc{M}(D') \in O] + \delta.
$$
\end{definition}

\noindent
DP enjoys many useful properties, for example, post-processing and sequential composition \cite{dwork2014algorithmic}.

\begin{theorem}[Sequential Composition \cite{dwork2014algorithmic}]
\label{thm:dp_seq_composition}
Given two mechanisms $\mc{M}_1: \databasedom \rightarrow  \mc{O}_1$ and $\mc{M}_2: \databasedom \rightarrow  \mc{O}_2$, such that $\mc{M}_1$ satisfies $(\epsilon_1, \delta_1)$-DP and $\mc{M}_2$ satisfies $(\epsilon_2, \delta_2)$-DP.
The combination of the two mechanisms $\mc{M}_{1, 2}: \databasedom \rightarrow  \mc{O}_1 \times \mc{O}_2$, which is a mapping $\mc{M}_{1, 2}(D) = (\mc{M}_1(D), \mc{M}_2(D))$, is $(\epsilon_1+\epsilon_2,\delta_1+\delta_2)$-DP.
\end{theorem}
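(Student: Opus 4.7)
The plan is to prove the composition bound by conditioning on the output of $\mc{M}_1$ and invoking the DP guarantee of each mechanism in turn, exploiting the fact that $\mc{M}_1$ and $\mc{M}_2$ use independent randomness so that the joint distribution of $\mc{M}_{1,2}(D)$ factors as a product of marginals.

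First, I would fix neighboring databases $D, D'$ and a measurable set $O \subseteq \mc{O}_1 \times \mc{O}_2$. For each $o_1 \in \mc{O}_1$, define the slice $O_{o_1} := \{ o_2 \in \mc{O}_2 : (o_1, o_2) \in O \}$, and let $\mu_D, \mu_{D'}$ denote the distributions of $\mc{M}_1(D), \mc{M}_1(D')$ respectively. By independence of the two mechanisms' randomness,
$$ \Pr[\mc{M}_{1,2}(D) \in O] \;=\; \int_{\mc{O}_1} \Pr[\mc{M}_2(D) \in O_{o_1}] \, d\mu_D(o_1). $$
The second step is to apply the $(\epsilon_2, \delta_2)$-DP of $\mc{M}_2$ pointwise in $o_1$, yielding $\Pr[\mc{M}_2(D) \in O_{o_1}] \leq e^{\epsilon_2}\Pr[\mc{M}_2(D') \in O_{o_1}] + \delta_2$, and then to bound the outer integral using the $(\epsilon_1,\delta_1)$-DP of $\mc{M}_1$, treating the $[0,1]$-valued map $o_1 \mapsto \Pr[\mc{M}_2(D') \in O_{o_1}]$ as a test function against the marginal laws $\mu_D$ and $\mu_{D'}$.

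The main obstacle is that a na\"ive chaining of the two inequalities produces $e^{\epsilon_2}\delta_1 + \delta_2$ rather than the advertised $\delta_1 + \delta_2$, because DP slack multiplies through the $e^{\epsilon_2}$ factor. To recover the tight additive form, I would instead use the standard equivalent characterization of $(\epsilon,\delta)$-DP in terms of the privacy loss random variable: $\mc{M}$ is $(\epsilon,\delta)$-DP iff there is an event $E$ with $\Pr[\mc{M}(D) \in E^c] \leq \delta$ on which the pointwise density-ratio bound $e^\epsilon$ holds. Applying this to each mechanism produces good events $E_1, E_2$ for $\mc{M}_1, \mc{M}_2$; on $E_1 \cap E_2$ the joint density ratio is bounded by $e^{\epsilon_1 + \epsilon_2}$ (multiplication of the two pointwise ratios), and a union bound on the bad events contributes exactly $\delta_1 + \delta_2$.

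Finally, I would combine these pieces: split $\Pr[\mc{M}_{1,2}(D) \in O]$ into its intersection with the "good" coupled event and its complement, bound the good part by $e^{\epsilon_1+\epsilon_2}\Pr[\mc{M}_{1,2}(D') \in O]$ via the pointwise ratio, and absorb the bad part into $\delta_1 + \delta_2$. For a fully self-contained treatment one could first check the equivalence of the two formulations of $(\epsilon,\delta)$-DP via a level-set / layer-cake decomposition, but since this equivalence is classical (Dwork--Roth) I would simply cite it and proceed, noting that the argument extends inductively to $k$-fold composition giving $(\sum_i \epsilon_i, \sum_i \delta_i)$-DP.
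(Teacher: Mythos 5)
The paper does not prove this statement at all: Theorem~\ref{thm:dp_seq_composition} is quoted as background from Dwork--Roth \cite{dwork2014algorithmic}, so there is no in-paper proof to compare against. Your proposal is essentially the standard textbook argument for basic composition, and it is correct in outline: you correctly identify that na\"ively chaining the two DP inequalities only yields $(\epsilon_1+\epsilon_2,\, e^{\epsilon_2}\delta_1+\delta_2)$, and that the additive $\delta_1+\delta_2$ form comes from the ``good event'' / privacy-loss formulation plus a union bound, with the product of pointwise ratios giving $e^{\epsilon_1+\epsilon_2}$; independence (or, in the adaptive case, conditioning on $o_1$) makes the union bound legitimate. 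One caution: the equivalence you invoke, as literally stated --- an event $E$ with $\Pr[\mc{M}(D)\in E^c]\le\delta$ on which the density ratio is at most $e^{\epsilon}$ --- is slightly stronger than what $(\epsilon,\delta)$-DP gives pointwise; the rigorous classical lemma instead says $\mc{M}(D)$ is within total variation $\delta$ of a distribution that is pointwise $e^{\epsilon}$-dominated by $\mc{M}(D')$ (equivalently, the event lives on a coupling/extended space). Since that is exactly the Dwork--Roth lemma you propose to cite, the argument goes through unchanged, but the statement of the characterization should be phrased in that form rather than as a naive ``iff.''
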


\begin{theorem}[Post Processing \cite{dwork2014algorithmic}]
\label{thm:dp_post_processing}
For an $(\epsilon, \delta)$-DP mechanism $\mc{M}$, applying any arbitrary function $f$ over the output of $\mc{M}$, that is, the composed mechanism $f \circ \mc{M}$, satisfies $(\epsilon, \delta)$-DP.
\end{theorem}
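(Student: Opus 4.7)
The plan is to reduce the claim directly to the DP definition by rewriting events on the post-processed output as events on $\mc{M}$'s output. First, I would dispatch the case where $f$ is a deterministic function. Given any pair of neighboring databases $D, D'$ and any measurable set $S$ in the range of $f \circ \mc{M}$, define the preimage $T = f^{-1}(S) = \{ o \in \mc{O} : f(o) \in S \}$. Since $f$ is deterministic, the events $\{ f(\mc{M}(D)) \in S \}$ and $\{ \mc{M}(D) \in T \}$ coincide, so applying $(\epsilon, \delta)$-DP of $\mc{M}$ to the set $T$ immediately gives
$$\Pr[f(\mc{M}(D)) \in S] = \Pr[\mc{M}(D) \in T] \leq e^{\epsilon} \Pr[\mc{M}(D') \in T] + \delta = e^{\epsilon} \Pr[f(\mc{M}(D')) \in S] + \delta.$$

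Second, I would extend to a randomized $f$ by viewing it as a mixture of deterministic maps. Concretely, represent $f$ by a deterministic map $\tilde{f}(o, r)$ together with a random seed $r$ drawn independently of $\mc{M}$'s internal coins from some distribution $\mu$. Conditioning on the seed reduces each slice to the deterministic case above, so for every $r$ the inequality holds with the same constants $(\epsilon, \delta)$. Integrating over $r$ against $\mu$ and using Fubini (plus independence between $r$ and $\mc{M}$'s randomness) preserves both sides of the bound and yields $(\epsilon, \delta)$-DP for $f \circ \mc{M}$.

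The main subtlety to flag is precisely this last step: one must verify that taking a convex combination of inequalities of the form $p_r \leq e^{\epsilon} q_r + \delta$ against the same measure $\mu$ preserves the same constants, rather than accumulating them. This works because the random seed of $f$ is independent of the choice between $D$ and $D'$, so the $\delta$ slack does not add up across seeds; it is a single post-processing step, not a sequential composition. Everything else is routine bookkeeping, and no additional properties of $\mc{M}$ beyond its DP guarantee are needed.
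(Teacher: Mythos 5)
Your proof is correct and is essentially the standard argument from the cited source (Dwork--Roth, Proposition 2.1): the paper itself states this theorem as background without proof, and the reference proof proceeds exactly as you do, handling deterministic $f$ via preimages and extending to randomized $f$ as a convex combination of deterministic maps, under which the $(\epsilon,\delta)$ guarantee is preserved rather than accumulated. Your explicit flagging of why the $\delta$ slack does not add up across seeds is the right subtlety to address, so there is nothing to correct.
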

The sequential composition bounds the privacy loss of the sequential execution of DP mechanisms over the database. It can naturally be generalized to the composition of $k$ differentially private mechanisms.
The post-processing property of DP indicates that the execution of any function on the output of a DP mechanism will not incur privacy loss.

\begin{definition}[$\ell_2$-Global Sensitivity]
For a query $q: \mathcal{D} \rightarrow \mathbb{R}^d$ %
the $\ell_2$ global sensitivity of this query is 
$$
\Delta q = \max_{D, D': d(D,D') \leq 1} \| q(D) - q(D') \|_2,
$$ 
where $d(\cdot, \cdot)$ denotes the number of tuples that $D$ and $D'$ differ and $\| \cdot \|_2$ denotes the $\ell_2$ norm.
\end{definition}

\begin{definition}[Analytic Gaussian Mechanism \cite{BW18analytic}]
\label{def:analytic_gaussian}
Given a query $q : \mathcal{D} \rightarrow \mathbb{R}^{d}$, the analytic Gaussian mechanism $\mathcal{M}(D) = q(D) + \eta$  where 
$\eta \sim {\mc{N}} \left( 0, \sigma^2 I \right)$ 
is $(\epsilon, \delta)$-DP if and only if 
$$
\Phi_{\mc{N}} \left( \frac{\Delta q}{2 \sigma} - \frac{\epsilon \sigma}{\Delta q} \right)
    - e^{\epsilon} \Phi_{\mc{N}} \left( - \frac{\Delta q}{2 \sigma} - \frac{\epsilon \sigma}{\Delta q} \right) 
    \leq \delta,
$$
where $\Phi_{\mc{N}}$ denotes the cumulative density function (CDF) of Gaussian distribution.
\end{definition}
In this mechanism, the Gaussian variance is determined by $\sigma = \alpha \Delta q / \sqrt{2\epsilon}$ where $\alpha$ is a parameter determined by $\epsilon$ and $\delta$~\cite{BW18analytic}. 

DP mechanisms involve errors in the query answer. A common \revise{and popular} utility function for numerical query answers is the expected mean squared error (MSE), defined as follows:

\begin{definition}[Query Utility]
\label{def:expected_err}
For a query $q: \mathcal{D} \rightarrow \mathbb{R}^d$ and a mechanism $\mc{M}: \mathcal{D} \rightarrow \mathbb{R}^d$, the \textit{query utility} of mechanism $\mc{M}$ is measured as the expected squared error, $v = \mathbb{E}[q(D) - \mc{M}(D)]^2$ $=  \sum_{i=1}^n (q(D)[i] - \mc{M}(D)[i])^2$.
For the (analytic) Gaussian mechanism, the expected squared error equals its variance, that is, $v = d\sigma^2$. 
\end{definition}

\section{Problem Setup}
\label{sec:problem_setup}

We  consider \revise{a new} \textbf{multi-analyst} setting, where there are multiple data analysts $\mc{A} = \{ A_1, \dots, A_m \}$ who want to ask queries on the database $D$. 
The \admin who manages the database wants to ensure that the sensitive data is properly and privately shared with the data analysts $A_1, \dots, A_m$. 
In our threat model, the data analysts can
adaptively select and submit arbitrary queries to the system to infer sensitive information about individuals in the protected database.
Additionally, in our multi-analyst model, data analysts may submit the same query and collude to learn more information about the sensitive data.

In practice, data analysts have varying privilege levels, which the standard DP  does not consider. An analyst with a higher privilege level is allowed more information than an analyst with a lower privilege level. This  motivates us to present the following variant of DP that supports the privacy per analyst provenance framework and guarantees different levels of privacy loss to the analysts.

\begin{definition}[Multi-analyst DP]\label{def:multi_analyst_dp}
We say a randomized mechanism $\mc{M}: \mc{D}  \rightarrow (\mc{O}_1, \ldots, \mc{O}_m)$ satisfies $[(A_1, \epsilon_1, \delta_1), ..., (A_m, \epsilon_m, \delta_m)]$-multi-analyst-DP if for any neighbouring databases $(D,D')$, any $i \in [m]$, and all $O_i \subseteq \mc{O}_i$, we have
$$
\Pr[\mc{M}(D) \in O_i] \leq e^{\epsilon_i} \Pr[\mc{M}(D') \in O_i] + \delta_i,
$$
where $O_i$ are the outputs
released to the $i$th analyst.
\end{definition}

The multi-analyst DP variant supports the composition across different algorithms, as indicated by the following theorem.

\begin{theorem}[Multi-Analyst DP Composition]
\label{thm:sequential_composition_multi_dp}
Given two randomized mechanisms $\mc{M}_1$ and $\mc{M}_2$, where $\mc{M}_1: \mc{D} \rightarrow (\mc{O}_1, \ldots, \mc{O}_m)$ satisfies $[(A_1, \epsilon_1, \delta_1), ..., (A_m, \epsilon_m, \delta_m)]$-multi-analyst-DP, and $\mc{M}_2: \mc{D} \rightarrow (\mc{O}_1', \ldots, \mc{O}_m')$ satisfies $[(A_1, \epsilon_1', \delta_1'), ..., (A_m, \epsilon_m', \delta_m')]$-multi-analyst-DP, then the mechanism $g (\mc{M}_1, \mc{M}_2)$
gives the $[(A_1, \epsilon_1+\epsilon_1', \delta_1+\delta_1'), ..., (A_m, \epsilon_m+\epsilon_m', \delta_m+\delta_m')]$-multi-analyst-DP guarantee.
\end{theorem}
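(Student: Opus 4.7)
The plan is to reduce multi-analyst DP composition to the standard sequential composition (Theorem \ref{thm:dp_seq_composition}) by analyzing the guarantee per analyst, since Definition \ref{def:multi_analyst_dp} is itself a per-analyst condition: it only constrains the marginal output distribution viewed by each individual analyst $A_i$, and imposes no joint condition across analysts. This means the proof can proceed analyst by analyst, ignoring any correlation between the outputs released to different analysts.

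Fix an index $i \in [m]$. First I would define, for each $k \in \{1,2\}$, the ``restricted'' mechanism $\mc{M}_k^{(i)}: \mc{D} \to \mc{O}_k^{(i)}$ that projects $\mc{M}_k$ onto its $i$th output coordinate (the output destined for analyst $A_i$). By the hypothesis, $\mc{M}_1^{(i)}$ is $(\epsilon_i, \delta_i)$-DP and $\mc{M}_2^{(i)}$ is $(\epsilon_i', \delta_i')$-DP in the standard sense of Definition 1, because Definition \ref{def:multi_analyst_dp} applied with fixed $i$ is exactly the standard DP condition on $\mc{M}_k^{(i)}$. Applying Theorem \ref{thm:dp_seq_composition} to these two mechanisms yields that the joint release $(\mc{M}_1^{(i)}(D), \mc{M}_2^{(i)}(D))$ is $(\epsilon_i + \epsilon_i', \delta_i + \delta_i')$-DP.

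Next I would invoke post-processing (Theorem \ref{thm:dp_post_processing}) to handle the combining function $g$: whatever $g$ does to produce the final output for $A_i$, it can only act on $(\mc{M}_1^{(i)}(D), \mc{M}_2^{(i)}(D))$ together with randomness independent of $D$, so the $i$th coordinate of $g(\mc{M}_1, \mc{M}_2)$ remains $(\epsilon_i + \epsilon_i', \delta_i + \delta_i')$-DP. Since $i$ was arbitrary, the per-analyst DP bounds hold simultaneously for all $i \in [m]$, which is exactly the statement that $g(\mc{M}_1, \mc{M}_2)$ satisfies $[(A_1, \epsilon_1+\epsilon_1', \delta_1+\delta_1'), \ldots, (A_m, \epsilon_m+\epsilon_m', \delta_m+\delta_m')]$-multi-analyst-DP.

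The only subtle point, and the main obstacle worth being careful about, is the semantics of $g$ and the role of cross-analyst correlation. One might worry that analyst $A_i$'s final output could depend on components of $\mc{M}_1, \mc{M}_2$ intended for other analysts, thereby leaking more information. I would handle this by clarifying that $g$ is interpreted as a tuple $(g_1, \ldots, g_m)$ where $g_i$ produces $A_i$'s output from only the $i$th coordinates of the two mechanisms (which is the natural reading, since the system is not assumed to leak $A_j$'s output to $A_i$); under this reading the per-analyst post-processing argument above goes through cleanly. If instead $g_i$ were allowed to read all coordinates, then correlation among coordinates of $\mc{M}_k$ could in principle inflate the bound, and the theorem would require additional assumptions, so pinning down this interpretation is the essential step before the composition argument applies.
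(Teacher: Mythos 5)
Your proof is correct and follows essentially the same route as the paper's: fix an analyst $A_i$, observe that the multi-analyst DP definition restricted to coordinate $i$ is standard $(\epsilon_i,\delta_i)$-DP, and apply sequential composition (Theorem~\ref{thm:dp_seq_composition}) coordinate by coordinate. Your extra care about the semantics of $g$ (post-processing, and whether $g_i$ may read other analysts' coordinates) is a reasonable clarification that the paper's terse proof leaves implicit, but it does not change the argument.
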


\revise{
\linelabel{line:setup_1}
Unlike prior DP work for multiple data analysts, our setup considers data analysts who are obliged under laws/regulations should not share their privacy budget/query responses with each other. We provide a detailed discussion on comparing other work in Section~\ref{sec:related_work}.

Under our new multi-analyst DP framework, several, natural but less well-understood, research questions (RQs) are raised 
and problem setups are considered of potential interest. 
}

\stitle{RQ 1: worst-case privacy analysis across analysts.} Under this multi-analyst DP framework, what if all data analysts collude or are compromised by an adversary, how could we design algorithms to account for the privacy loss to the colluded analysts?
When this happens, we can obtain the following trivial lower bound and upper bound for the standard DP measure. 

\begin{theorem}[Compromisation Lower/Upper Bound, Trivial]
    Given a mechanism $\mc{M}$ that satisfies $[(A_1, \epsilon_1, \delta_1), ..., $ $(A_n, \epsilon_n, \delta_n)]$-multi-analyst-DP, when all data analysts collude, its DP loss is (i) lowered bounded by $(\max \epsilon_i, \max \delta_i)$, where $(\epsilon_i,\delta_i)$ is the privacy loss to the $i$-th analyst, and (ii) trivially upper bounded by $(\sum \epsilon_i, \sum \delta_i)$.
\end{theorem}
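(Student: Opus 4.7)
The plan is to dispatch the two bounds via two standard DP principles applied to the joint view $(O_1,\dots,O_n)$ that colluding analysts receive, with $O_i$ the output handed to analyst $A_i$. The lower bound will follow from post-processing (Theorem~\ref{thm:dp_post_processing}); the upper bound will follow from sequential composition (Theorem~\ref{thm:dp_seq_composition}). No new machinery is needed beyond these, which is why the statement is flagged ``trivial''.

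For the lower bound, I would fix $i^\star = \argmax_j \epsilon_j$ and observe that the projection $(O_1,\dots,O_n) \mapsto O_{i^\star}$ is a deterministic post-processing of the colluded view. Hence, if that view satisfied $(\epsilon,\delta)$-DP with $\epsilon < \epsilon_{i^\star}$, then by Theorem~\ref{thm:dp_post_processing} the marginal released to $A_{i^\star}$ would also satisfy those stricter parameters, contradicting the per-analyst tightness asserted by Definition~\ref{def:multi_analyst_dp}. An identical argument with $j^\star = \argmax_j \delta_j$ handles the $\delta$-coordinate. Jointly these yield the $(\max_i \epsilon_i, \max_i \delta_i)$ lower bound on the collusion-time privacy loss.

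For the upper bound, I would decompose $\mc{M}$ into its per-analyst components $\mc{M}_1,\dots,\mc{M}_n$, where $\mc{M}_i$ emits $O_i$ and, by Definition~\ref{def:multi_analyst_dp}, satisfies $(\epsilon_i,\delta_i)$-DP. Iterating Theorem~\ref{thm:dp_seq_composition} over these $n$ sub-mechanisms immediately gives $(\sum_i \epsilon_i, \sum_i \delta_i)$-DP for the joint release $(\mc{M}_1(D),\dots,\mc{M}_n(D))$, which is exactly the object the colluded adversary observes.

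The main subtlety---really the only one---is that the lower bound implicitly relies on $(\epsilon_i,\delta_i)$ being tight per-analyst parameters rather than merely valid upper bounds, so that the contradiction in the post-processing step is well posed. Under the paper's reading, where these values are precisely the budgets the mechanism is designed to saturate for each analyst, this tightness is the intended interpretation. The slack between $\max_i \epsilon_i$ and $\sum_i \epsilon_i$ is exactly what the later correlated-noise design (the additive Gaussian approach) will exploit to collapse the upper bound toward the lower bound under collusion.
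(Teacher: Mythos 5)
Your proposal is correct and matches the paper's (very terse) justification: the paper likewise obtains the upper bound by sequential composition over the per-analyst releases, and its lower bound is exactly your projection/post-processing observation that the colluded view already contains the output handed to the largest-budget analyst, stated informally as ``the least amount of information that has to be released.'' Your explicit caveat that the lower bound presumes the per-analyst parameters are tight (saturated budgets) rather than mere upper bounds is a fair and accurate reading of the paper's intent.
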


\ifarxiv
The lower bound indicates the least amount of information that has to be released (to the analyst) and this upper bound is simply derived from sequential composition.
\fi
Obviously, the trivial upper bound does not match the lower bound, rendering the question of designing multi-analyst DP mechanisms to close the gap. 
Closing this gap means \emph{even if these data analysts break the law and collude, the overall privacy loss of the multi-analyst DP mechanism is still minimized.}
In this paper, we would like to design such algorithms that achieve the lower bound, as shown in Section \ref{sec:low_sensitive_algos}.

\stitle{RQ 2: dynamic budget allocation across views. } 
The DP query processing system should impose constraints on the total privacy loss by all the analysts (in the worst case) and the privacy loss per analyst.
When handling incoming queries, prior work either dynamically allocates the privacy budget based on the budget request per query \cite{mcsherry2009privacy,johnson2018towards} or query accuracy requirements  \cite{ge2019apex} or predefine a set of static DP views that can handle the incoming queries \cite{kotsogiannis2019privatesql}.

The dynamic budget per query approach can deplete the privacy budget quickly as each query is handled with a fresh new privacy budget. The static views spend the budget in advance among them to handle unlimited queries, but they may fail to meet the accuracy requirements of some future queries. Therefore, in our work, we consider the view approach but assign budgets dynamically to the views based on the incoming queries so that more queries can be answered with their accuracy requirements. Specifically, we would like to use the histogram view, which queries the number of tuples in a database for each possible value of a set of attributes. The answer to a view is called a synopsis. We consider a set of views that can answer all incoming queries.

\begin{definition}[Query Answerability \cite{kotsogiannis2019privatesql}]
For a query $q$ over the database $D$, if there exists a query $q'$ over the histogram view $V$ such that $q(D) = q(V(D))$, we say that $q$ is answerable over $V$.
\end{definition}

\begin{example}
Consider two queries $q_1$ and $q_2$ over a database for employees in Figure~\ref{fig:provenance_table}. They are answerable over the $V_1$, a 3-way marginal contingency table over attributes (age, gender, education), via their respective transformed queries $\hat{q}_1$ and $\hat{q}_2$.
\end{example}

Given a set of views, we would like to design algorithms that can dynamically allocate privacy budgets to them and update their corresponding DP synopses over time. We show how these algorithms maximize the queries that can be handled accurately in Section \ref{sec:low_sensitive_algos}.
Since we can dynamically allocate budget to views, our system can add new views to the system as overtime. We discuss this possibility in Section \ref{subsec:algo_col_constraints}.

\stitle{RQ 3: fair query answering among data analysts.} 
A fair system expects data analysts with higher privacy privileges to receive more accurate answers or larger privacy budgets than ones with lower privacy privileges. However, existing DP systems make no distinctions among data analysts. Hence, it is possible that a low-privilege external analyst who asks queries first consumes all the privacy budget and receives more accurate query answers, leaving no privacy budgets for high-privilege internal data analysts. It is also impossible to force data analysts to ask queries in a certain order. In this context, we would like the system to set up the (available and consumed) privacy budgets for data analysts according to their privacy privilege level. In particular, we define privacy privilege levels as an integer in the range of 1 to 10, where a higher number represents a higher privilege level. We also define a fairness notion inspired by the literature on resource allocation \cite{kelly1998rate,banerjee2022proportionally,stoyanovich2018online}.

\begin{definition}[Proportional Fairness]
    \label{def:prop_fairness}
    Consider a DP system handling a sequence of queries $Q$ 
    from multiple data analysts with a mechanism $\mc{M}$, where each data analyst $A_i$ is associated with a privilege level $l_i$.   
    We say the mechanism $\mc{M}$ satisfies proportional fairness, if $\forall A_i, A_j$ ($i\neq j$), $l_i \leq l_j$, we have
$$
\frac{Err_i(M, A_i, Q)}{\mu(l_i)} \leq \frac{Err_j(M, A_j, Q)}{\mu(l_j)},
$$
where $Err_i(M, A_i, Q)$ denotes the analyst $A_i$'s privacy budget consumption and $\mu(\cdot)$ is some linear function. 
\end{definition}

This fairness notion suggests the quality of query answers to data analysts, denoted by $Err_i(M, A_i, Q)$ is proportional to their privilege levels, denoted by a linear function $\mu(\cdot)$ of their privilege levels. We first consider the privacy budget per data analyst as the quality function, as a smaller error to the query answer is expected with a larger privacy budget. 
We show in Section \ref{subsec:algo_col_constraints} how to set up the system to achieve fairness when the analysts ask a sufficient number of queries, which means they finish consuming their assigned privacy budget.

\section{System overview}

\label{sec:system_overview}

In this \mysection, we outline the key design principles of \oursystem and briefly describe the modules of the system.

\subsection{Key Design Principles}

To support the multi-analyst use case and to answer the aforementioned research questions, we identify the four principles %
and propose a system \oursystem that follows these principles.

\stitle{Principle 1: fine-grained privacy provenance.} 
The query processing system should be able to track the privacy budget allocated per each data analyst and per each view in a fine-grained way.
The system should additionally enable a mechanism to compose privacy loss across data analysts and the queries they ask.

\stitle{Principle 2: view-based privacy management.}
The queries are answered based on DP views or synopses. %
Compared to directly answering a query from the database $D$, view-based query answering can answer more private queries \cite{kotsogiannis2019privatesql}, but it assumes the accessibility of a pre-known query workload.
In our system, the view is the minimum data object that we keep track of its privacy loss, and the views can be updated dynamically if higher data utility is required.
The privacy budgets spent on different views during the updating process depend on the incoming queries.

\stitle{Principle 3: dual query submission mode.}
Besides allowing data analysts to submit a budget with their query, the system enables an accuracy-aware mode.
With this mode, data analysts can submit the query with their desired accuracy levels regarding the expected squared error.
The dual mode system supports data analysts from domain experts, who can take full advantage of the budgets, to DP novices, who only care about the accuracy bounds of the query.

\stitle{Principle 4: maximum query answering.}
The system should be tuned to answer as many queries accurately as possible without violating the privacy constraint specified by the \admin as per data analyst and per view based on their privilege levels.

\begin{figure}[t]
    \centering
    \includegraphics[width=\linewidth]{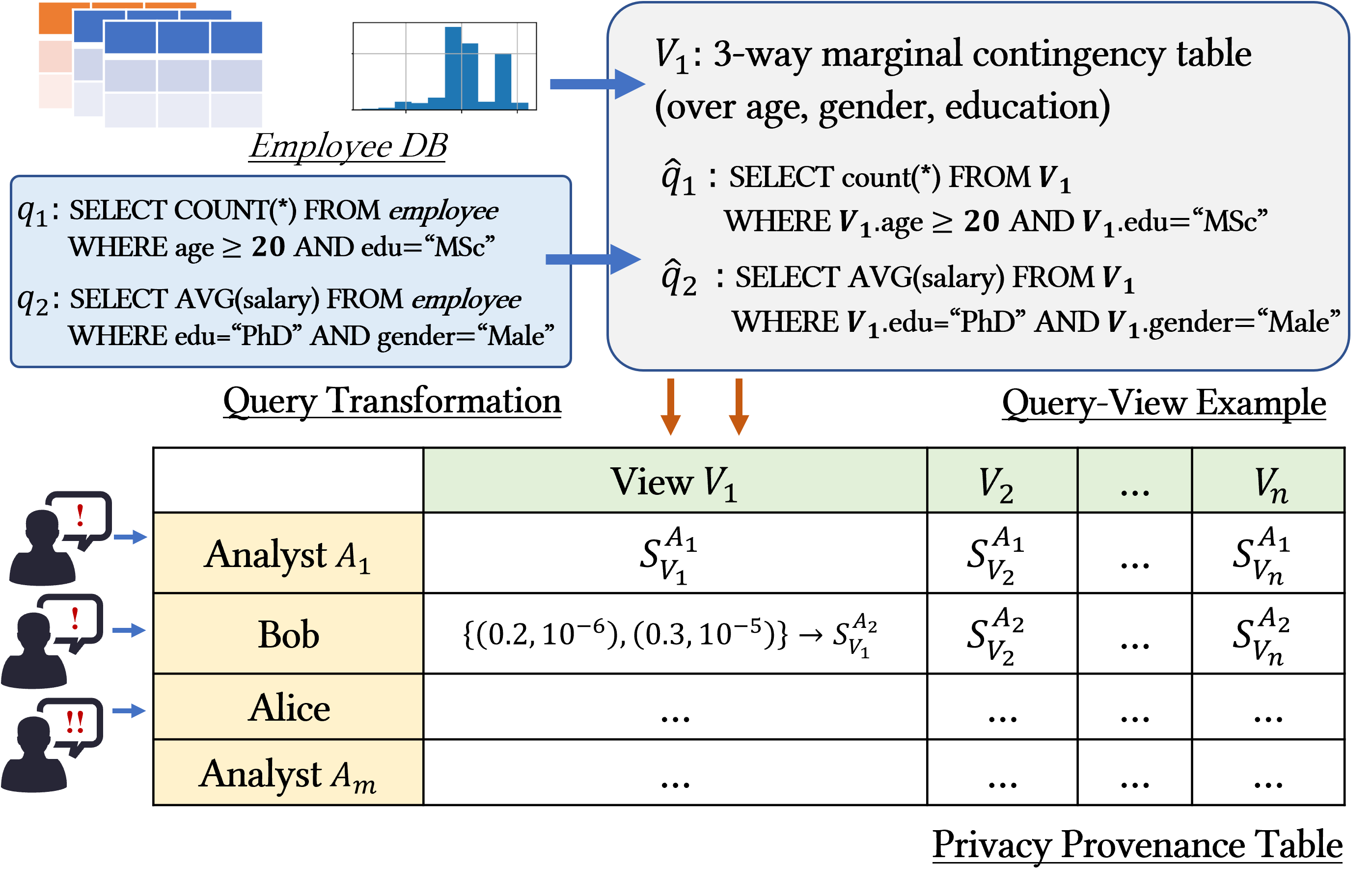}
  
    \caption{Illustration of the Histogram View, the Query Transformation and the Privacy Provenance Table: 
    \circledtext{1} Histogram view $V_1$ over age, gender, and education is built on the database snapshot; 
    \circledtext{2} Forthcoming queries $q_1$ and $q_2$ are transformed into linearly answerable queries $\hat{q}_1$ and $\hat{q}_2$ over $V_1$; 
    \circledtext{3} Analysts ($A_1, A_2$ with low, $A_3$ with high privilege) are recorded in the provenance table -- the privacy loss for each view to each analyst is tracked for real-time queries.
    }
    \label{fig:provenance_table}
   
\end{figure}

\subsection{Privacy Provenance Table}
\label{sec:provT}

To meet the first two principles, we propose a privacy provenance table for \oursystem, inspired by the access matrix model in access control literature \cite{samarati2000access}, to track the privacy loss per analyst and per view, and further bound the privacy loss. 
Particularly, in our model, the state of the overall privacy loss of the system is defined as a triplet $(\mc{A}, \mc{V}, \mc{P})$, where $\mc{A}$ denotes the set of data analysts and $\mc{V}$ represents the list of query-views maintained by the system. 
We denote by $\mc{P}$ the privacy provenance table,  defined as follows.

\begin{definition}[Privacy Provenance Table] 
The privacy provenance table $\mc{P}$ consists of (i) a provenance matrix $P$ that tracks the privacy loss of a view in $\mc{V}$ to each data analyst in $\mc{A}$, where each entry of the matrix $P[A_i, V_j]$
records the current cumulative privacy loss $S^{A_i}_{V_j}$, on view $V_j$ to analyst $A_i$;
(ii) a set of row/column/table constraints, $\Psi$:
a \emph{row constraint} for $i$-th row of $P$, denoted by $\psi_{A_i}$, refers to the allowed maximum privacy loss to a data analyst $A_i\in \mc{A}$ (according to his/her privilege level); a \emph{column constraint} for the $j$-th column, denoted by $\psi_{V_j}$ refers to as the allowed maximum privacy loss to a specific view $V_j$; the \emph{table constraint} over $P$, denoted by $\psi_P$, specifies the overall privacy loss allowed for the protected database.
\end{definition}

The privacy constraints and the provenance matrix are correlated. In particular, the row/column constraints cannot exceed the overall table constraint, and each entry of the matrix cannot exceed row/column constraints. The correlations, such as the composition of the privacy constraints of all views or all analysts, depend on the DP mechanisms supported by the system. We provide the details of DP mechanisms and the respective correlations in \provT in Section~\ref{sec:low_sensitive_algos}.

\begin{example}
\label{example:provT}
Figure~\ref{fig:provenance_table} gives an example of the \provT for $n$ views and $m$ data analysts. When \oursystem receives query $q_1$ from Bob, it plans to use view $V_1$ to answer it. \oursystem first retrieves the previous cumulative cost of $V_1$ to Bob from the matrix, $P[Bob, V_1]$, and then computes the new cumulative cost $S_{V_1}^{Bob}$ for $V_1$ to Bob as if it answers $q_1$ using $V_1$. If the new cost $S_{V_1}^{Bob}$ is smaller than Bob's privacy constraint $\psi_{Bob}$, the view constraint $\psi_{V_1}$, and the table constraint $\psi_P$, \oursystem will answer $q_1$  and update $P[Bob, V_1]$ to $S_{V_1}^{Bob}$; otherwise, $q_1$ will be rejected. 
\end{example}

Due to the privacy constraints imposed by the \provT, queries can be rejected when the cumulative privacy cost exceeds the constraints. \oursystem needs to design DP mechanisms that well utilize the privacy budget to answer more queries. Hence, we formulate the \emph{maximum query answering problem} based on the \provT.

\begin{problem}
Given a \provT $(\mc{A}, \mc{V}, \mc{P})$,  at each time, a data analyst $A_i\in \mc{A}$ submits the query with a utility requirement $(q_i, v_i)$, where the transformed $\hat{q}_i\in \mc{V}$, how can we design a system to answer as many queries as possible without violating the row/column/table privacy constraints in $P$ while meeting the utility requirement per query?
\end{problem}

\ifarxiv
Hence we would like to build an efficient system to satisfy the aforementioned design principles and enable the \provT, and develop algorithms to solve the maximum query answering problem in \oursystem.
We next outline the system modules in \oursystem, and then provide detailed algorithm design in Section~\ref{sec:low_sensitive_algos}. 
\fi

\subsection{System Modules}
\label{sec:overview_modules}

The \oursystem system works as a middle-ware between data analysts and existing DP DBMS systems (such as PINQ, Chorus, and PrivateSQL) to provide intriguing and add-on functionalities, including fine-grained privacy tracking, view/synopsis management, and privacy-accuracy translation.
We briefly summarize the high-level ideas of the modules below.

\stitle{Privacy Provenance Tracking.}
\oursystem maintains the \provT for each registered analyst and each generated view, as introduced in Section \ref{sec:provT}.
Constraint checking is enabled based on this provenance tracking to decide whether to reject 
an analyst's query or not.
We further build DP mechanisms to maintain and update the DP synopses and the \provT.

\stitle{Dual Query Submission Mode.}
\oursystem provides two query submission modes to data analysts.
\textit{Privacy-oriented mode} \cite{mcsherry2009privacy,johnson2018towards,johnson2020chorus,zhang2018ektelo}: queries are submitted with a pre-apportioned privacy budget, i.e., $(A_i, q_i, \{\epsilon_i, \delta_i \})$.
\textit{Accuracy-oriented mode} \cite{ge2019apex,lobo2020programming}:
analysts can write the query with a desired accuracy bound, i.e., queries are in form of $(A_i, q_i, v_i)$. We illustrate our algorithm with the accuracy-oriented mode.

\begin{algorithm}[t]
    \SetAlgoLined
    \DontPrintSemicolon
    \KwInput{Analysts $\mathcal{A} \coloneqq A_1, \dots, A_n$; Database instance $D$; 
    Privacy provenance table $\mc{P}=(P,\Psi)$.}

    \Admin sets up the \provT 
    $\mc{P}$ \\
    
    Initialize all the synopses for $V\in \mc{V}$\\

    \revise{Set $\delta$ in the system} \\
    
    \Repeat{No more queries sent by analysts}{
        Receive $(q_i, v_i)$ from data analyst $A_i$ \\
        Select view $V\in \mc{V}$ to answer query $q_i$ \\
        Select mechanism $M \in \mc{M}$ applicable to $q_i$ \\
        
        $\epsilon_i \gets$ \textsc{M.privacyTranslate}($q_i, v_i, V, p$) \label{alg:overview_privacy_translator} \\

        \uIf{\textsc{\emph{M.constraintCheck(}$P, A_i, V, \epsilon_i, \Psi$\emph{)}}}{
            
            $V_{A_i}^{\epsilon'} \gets$\textsc{M.run($P, A_i, V, \epsilon_i$)} \\
            Answer $q_i$ with $V_{A_i}^{\epsilon'}$ and return answer $r_i$ to $A_i$
        }
        \Else{
            reject the query $q_i$
        }
        
    }
    \caption{System Overview}
    \label{alg:overview}
\end{algorithm}

\stitle{Algorithm Overview.}
Algorithm~\ref{alg:overview} summarizes how \oursystem uses the DP synopses to answer incoming queries. 
At the system setup phase (line 1-3), the \admin initializes the \provT by setting the matrix entry as 0 and the row/column/table constraints $\Psi$.
The system initializes empty %
synopses for each view.  The data analyst specifies a query $q_i$ with its desired utility requirement $v_i$ (line 5). 
Once the system receives the request, it selects the suitable view and mechanism to answer this query (line 6-7)
and uses the function \textsc{privacyTranslate()} to find the minimum privacy budget $\epsilon_i$ for $V$ to meet the utility requirement of $q_i$ 
(line 8). 
Then, \oursystem checks if answering $q_i$ with budget $\epsilon_i$ will violate the privacy constraints $\Psi$ (Line 9).
If this constraint check passes, we run the mechanism to obtain a noisy synopsis (line 10). 
\oursystem uses this synopsis to answer query $q_i$ and returns the answer to the data analyst (line 11). If the constraint check fails, \oursystem rejects the query (line 13). We show concrete DP mechanisms with their corresponding interfaces in the next section. 

\revise{
\stitle{Remark.}\linelabel{line:delta_1}
For simplicity, we drop $\delta$ and focus on $\epsilon$ as privacy loss/budget in privacy composition, but $\delta$'s of DP synopses are similarly composited as stated in Theorem~\ref{thm:sequential_composition_multi_dp}. For the accuracy-privacy translation, we consider a fixed given small delta and aim to find the smallest possible epsilon to achieve the accuracy specification of the data analyst.
}

\section{DP Algorithm Design}
\label{sec:low_sensitive_algos}

In this section, we first describe a \baseline DP mechanism that can  instantiate the system interface but cannot maximize the number of queries being answered.
Then we propose an additive Gaussian mechanism that leverages the correlated noise in query answering to improve the %
utility of the \baseline mechanism.
Without loss of generality, we assume the data analysts do not submit the same query with decreased accuracy requirement (as they would be only interested in a more accurate answer).

\subsection{\Baseline Approach}
\label{sec:vanilla}

\begin{algorithm}[tbp]
    \SetAlgoLined
    \DontPrintSemicolon
    
    Set $\delta$ in the system \\
    \SetKwProg{Fn}{Function}{ :}{end}
    \Fn{\textsc{\emph{run(}$P, A_i, V, \epsilon_i$}\emph{)} \label{func:baseline_run}}{
        Generate a synopsis $V_{A_i}^{\epsilon_i}$ from view $V$
                \\
        Update \provT $P[A_i,V]\gets P[A_i,V]+\epsilon_i$ \\
        \textbf{return} $r_i \gets V_{A_i}^{\epsilon_i}$
        \label{func:baseline_run_end}
    }
    \Fn{\textsc{\emph{privacyTranslate(}}$q_i, v_i, V, p$\emph{)} \label{fuc:baseline_translation}}{
        Set $u = \psi_P,~ l = 0$ \\
        $v \gets$ \textsc{calculateVariance}($q_i, v_i, V$) \label{baseline:calculatevariance} \\
        $\epsilon = $ \textsc{binarySearch}($l, u, $\textsc{testAccuracy}($\cdot, v, \Delta q_i, \delta$), \revise{$p$}) \label{baseline:binary} \\
        \textbf{return} $\epsilon$
        \label{fuc:baseline_translation_end}
    }
    \Fn{\textsc{\emph{constraintCheck(}}$P, A_i, V_j, \epsilon_i, \Psi$\emph{)} \label{func:baseline_constraint} }{

         \uIf{$(P$.\textsc{\emph{composite()}} + $ \epsilon_i \leq \Psi.\psi_{P}) \land$ $(P$.\textsc{\emph{composite(}}axis=Row\emph{)} + $ \epsilon_i \leq \Psi.\psi_{A_i}) \land$ $(P$.\textsc{\emph{composite(}}axis=Column\emph{)} + $ \epsilon_i \leq \Psi.\psi_{V_j})$}{
            \textbf{return} True/Pass
        }
        \label{func:baseline_constraint_end}
    }   
    \caption{\Baseline Approach}
    \label{alg:baseline}
\end{algorithm}

The \baseline approach is based on 
the Gaussian mechanism (applied to both the basic Gaussian mechanism \cite{dwork2014algorithmic} and the analytic Gaussian mechanism \cite{BW18analytic}).
We describe how the system modules %
are instantiated with the \baseline approach.

\subsubsection{Accuracy-Privacy Translation}
This module interprets the user-specified utility requirement into the minimum privacy budget \revise{\linelabel{line:translation_1}
(Algorithm \ref{alg:baseline}: line \ref{fuc:baseline_translation}-\ref{fuc:baseline_translation_end}).
Note that instead of perturbing the query result to $q_i$ directly, we generate a noisy DP synopsis and use it to answer a query (usually involving adding up a number of noisy counts from the synopsis). 
Hence, we need to translate the accuracy bound $v_i$ specified by the data analyst over the query to the corresponding accuracy bound $v$ for the synopsis before searching for the minimal privacy budget (Algorithm \ref{alg:baseline}: line \ref{baseline:calculatevariance}). 
Here, $v$ represents the variance of the noise added to each count of the histogram synopsis.
Next, we search for the minimal privacy budget $\epsilon$ that results in a noisy synopsis with  noise variance not more than $v$, based on the following analytic Gaussian translation.
}

\begin{definition}[Analytic Gaussian Translation]
\label{def:analytic_gaussian_translation}
Given a %
query $q: \mathcal{D} \rightarrow \mathbb{R}^d$ %
to achieve an expected squared error bound $v$ for this query, the minimum privacy budget for the analytic Gaussian mechanism should satisfy
$
\Phi_{\mc{N}} \left( \frac{\Delta q}{2 v} - \frac{\epsilon v}{\Delta q} \right)
    - e^{\epsilon} \Phi_{\mc{N}} \left( - \frac{\Delta q}{2 v} - \frac{\epsilon v}{\Delta q} \right) 
    \leq \delta.
$
That is, given $\Delta q, \delta, v$, to solve the following optimization problem to find the minimal $\epsilon$.
\begin{eqnarray}
    \min_{\epsilon\in (0,\psi_P]} \epsilon  
    \mbox{ s.t. }  \Phi_{\mc{N}} \left( \frac{\Delta q}{2 v} - \frac{\epsilon v}{\Delta q} \right)
    - e^{\epsilon} \Phi_{\mc{N}} \left( - \frac{\Delta q}{2 v} - \frac{\epsilon v}{\Delta q} \right) 
    \leq \delta \label{AGT_constraint} %
\end{eqnarray}
\end{definition}
Finding a closed-form solution to the problem above is not easy. However, we observe that the LHS of the constraint in Equation \eqref{AGT_constraint} is a monotonic function of $\epsilon$. 
\revise{\linelabel{line:translation_2}
Thus, we use binary search (Algorithm \ref{alg:baseline}: line \ref{baseline:binary}) to look for the smallest possible value for $\epsilon$. 
For each tested value of $\epsilon$, we compute its analytic Gaussian variance~\cite{BW18analytic}, denoted by $v'$. If $v'>v$, then this value is invalid and we search for a bigger epsilon value; otherwise, we search for a smaller one. We stop at an epsilon value with a variance $v'\leq v$ and a distance $p$ from the last tested invalid epsilon value. We have the following guarantee for this output. 
}

\revise{\linelabel{line:translation_3}
\begin{proposition}[Correctness of Translation]\label{thm:translation_correctness}
Given a query $(q_i, v_i)$ 
and the view $V$ for answering $q_i$, the translation function {(Algorithm~\ref{alg:baseline}, \textsc{privacyTranslation})} outputs a privacy budget $\epsilon$.
The query $q_i$ can then be answered with expected square error $v_q$ 
over the updated synopsis $V^{\epsilon}_{A}$ such that: i) meets the accuracy requirement $v_q \leq v_i$, and ii) $\epsilon - \epsilon^* \leq p$, where $\epsilon^*$ is the minimal privacy budget to meet the accuracy requirement for Algorithm~\ref{alg:baseline} (\textsc{run} function).
\end{proposition}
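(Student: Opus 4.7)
The plan is to decompose the claim into two independent parts, one for each bulleted guarantee, and prove them in sequence using (i) the linear-query/histogram structure relating synopsis variance to query variance, and (ii) monotonicity of the analytic Gaussian translation constraint together with the termination precision $p$ of the binary search.

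First I would handle the accuracy claim (i). The call to \textsc{calculateVariance} converts the user-level bound $v_i$ (expected squared error on $q_i$) into a per-cell variance bound $v$ for the histogram view $V$. Because $q_i$ is answerable over $V$, there is a linear transformation $\hat{q}_i$ with some coefficient vector $c$ such that $q_i(D)=\hat{q}_i(V(D))$, and under an independent Gaussian perturbation of $V$ with per-cell variance $v'$ the resulting query has expected squared error $\|c\|_2^2 \, v'$. Thus the correct translation is $v = v_i / \|c\|_2^2$, and any synopsis with per-cell variance $v' \leq v$ automatically yields $v_q = \|c\|_2^2 v' \leq v_i$. I would then argue that \textsc{binarySearch} only returns an $\epsilon$ for which \textsc{testAccuracy} passed, i.e.\ the analytic Gaussian variance $v'(\epsilon)$ at budget $\epsilon$ satisfies $v'(\epsilon) \leq v$; combining these gives (i).

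Next I would prove the near-optimality claim (ii). The key structural fact is that the LHS of the constraint in Equation~\eqref{AGT_constraint}, viewed as a function of $\epsilon$ for fixed $\Delta q_i, \delta, v$, is monotonically decreasing in $\epsilon$ (this is already observed in the paragraph above Definition~\ref{def:analytic_gaussian_translation}). Equivalently, the analytic Gaussian variance $v'(\epsilon)$ is strictly decreasing in $\epsilon$, so the feasible set $\{\epsilon : v'(\epsilon) \leq v\}$ is an interval $[\epsilon^*, \psi_P]$ with $\epsilon^*$ the unique minimizer. Binary search on $[0,\psi_P]$ with the monotone predicate \textsc{testAccuracy} is guaranteed to converge to $\epsilon^*$ from above, and because the search terminates once the distance between the current candidate and the last infeasible point is at most $p$, the returned value $\epsilon$ satisfies $\epsilon \geq \epsilon^*$ and $\epsilon - \epsilon^* \leq p$. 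Combining with (i), the same $\epsilon$ that witnesses (ii) also satisfies the accuracy requirement, completing the proof.

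The main obstacle I anticipate is the translation step inside \textsc{calculateVariance}: one must be careful that the linear reformulation $\hat{q}_i$ over $V$ is unique (or that a canonical one is chosen) and that the independence of per-cell Gaussian noises is what licenses the $\|c\|_2^2$ scaling. If the system instead uses correlated noise across cells (which is relevant later for the additive Gaussian mechanism), the variance identity becomes $c^\top \Sigma c$ and the translation formula must be amended accordingly; I would therefore state the per-cell-independent Gaussian case explicitly in this proposition and defer the correlated case to the additive-Gaussian analysis. Apart from this, the remaining arguments (monotonicity, binary-search precision) are routine once the analytic Gaussian variance $\sigma^2 = \alpha(\epsilon,\delta)^2 \Delta q^2 / (2\epsilon)$ from Definition~\ref{def:analytic_gaussian} is substituted for $v'(\epsilon)$.
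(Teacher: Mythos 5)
Your proposal is correct and follows essentially the same route as the paper's own proof sketch: translate the per-query accuracy bound $v_i$ into a per-bin variance bound (line~\ref{baseline:calculatevariance}), note data-independence so any synopsis whose variance passes the test condition yields $v_q \leq v_i$, and then use monotonicity of the constraint in Equation~\eqref{AGT_constraint} plus the binary-search stopping criterion to get $\epsilon - \epsilon^* \leq p$. You merely spell out details the paper leaves implicit (the $\|c\|_2^2$ scaling inside \textsc{calculateVariance} and the interval structure of the feasible set), which is consistent with, not divergent from, the paper's argument.
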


\ifconf
\linelabel{line:translation_note}Note that our DP mechanism and the accuracy requirement are data-independent and the translated epsilon value always guarantees the accuracy requirement is met.
\fi

\ifarxiv
\begin{proof}[Proof Sketch]
First, line~\ref{baseline:calculatevariance} derives the per-bin accuracy requirement based on the submitted per-query accuracy, and plugs it into the search condition (Equation~\eqref{AGT_constraint}).
Note that our DP mechanism and the accuracy requirement are data-independent.
As long as the searching condition holds, the noise added to the query answer in \textsc{run} satisfies $v_q \leq v_i$.
Second, the stopping condition of the search algorithm guarantees 1) there is a solution, 2) the searching range is reduced to $\leq p$. Thus we have $\epsilon - \epsilon^* \leq p$. 
\end{proof}
\fi
}

\subsubsection{Provenance Constraint Checking}
\label{subsubsec:prov_checking}

As mentioned, the \admin can specify privacy constraints in \provT. \oursystem decide whether \emph{reject or answer a query} using the provenance matrix $P$ and the privacy constraints $\Psi$ in \provT, as indicated in Algorithm \ref{alg:baseline}: \ref{func:baseline_constraint}-\ref{func:baseline_constraint_end} (the function \textsc{constraintCheck}).   
This function checks whether the three types of constraints would be violated when the current query was to issue.
The \textsc{composite} function in this constraint-checking algorithm can refer to the basic sequential composition or tighter privacy composition given by Renyi-DP \cite{mironov2017renyi} or zCDP \cite{dwork2016concentrated,bun2016concentrated}.
We suggest to use advanced composition for accounting privacy loss over time, but not for checking constraints, because the size of the provenance table $n * m$ is too small for a tight bound by this composition.

\subsubsection{Putting Components All Together}
The \baseline approach is aligned with existing DP query systems in the sense that it adds independent noise to the result of each query. Hence, it can be quickly integrated into these systems to provide privacy provenance and accuracy-aware features with little overhead.
Algorithm \ref{alg:baseline}: \ref{func:baseline_run}-\ref{func:baseline_run_end} (the function \textsc{run}) outlines how the \baseline method runs. 
It first generates the DP synopsis $V^{\epsilon_i}_{A_i}$ using analytic Gaussian mechanism for the chosen view $V$ and updates the corresponding entry $P[A_i, V]$ in the \provT by compositing the consumed privacy loss $(\epsilon_i, \delta_i)$ on the query (depending on the specific composition method used).
We defer the analysis for the accuracy and privacy properties of the \baseline mechanism to Section~\ref{sec:guarantees}.

\subsection{Additive Gaussian Approach}
\label{subsec:additive_gaussian_mechanism}
\shepherding{\linelabel{re-position}
While ideas of using correlated Gaussian noise have been exploited \cite{LiCLZ12multilevel}, we adopt similar statistical properties into an additive Gaussian DP mechanism, a primitive to build our additive Gaussian approach for synopses maintenance. Then, we describe how \oursystem generates and updates the (local and global) DP synopses with this algorithm across analysts and over time.}

\subsubsection{Additive Gaussian Mechanism} \label{sec:aGM}
The additive Gaussian mechanism (additive GM or aGM) modifies the standard Gaussian mechanism, based on the nice statistical property of the Gaussian distribution ---  the sum of i.i.d. normal random variables is still normally distributed. We outline this primitive mechanism in Algorithm \ref{alg:additive_gaussian_noise_adding}.
This primitive takes a query $q$, a database instance $D$, a set of privacy budgets $\mc{B}$ corresponding to the set of data analysts $\mc{A}$ as input, and this primitive outputs a noisy query result to each data analyst $A_i$, which consumes the corresponding privacy budget $(\epsilon_i,\delta)$.
Its key idea is only to execute the query (to get the true answer on the database) once, and cumulatively inject noises to previous noisy answers, when multiple data analysts ask the same query.
In particular, we sort the privacy budget set specified by the analysts.
Starting from the largest budget, we add noise w.r.t the Gaussian variance $\sigma_i^2$ calculated from the query sensitivity $\Delta q$ and this budget $(\epsilon_i, \delta)$.
For the rest of the budgets in the set, we calculate the Gaussian variance $\sigma_j^2$ in the same approach but add noise w.r.t $\sigma_j^2 - \sigma_i^2$ to the previous noisy answer.
The algorithm then returns the noisy query answer to each data analyst.
The privacy guarantee of this primitive is stated as follows.

\begin{algorithm}[t]
    \SetAlgoLined
    \DontPrintSemicolon
    \KwInput{Analysts $\mathcal{A} \coloneqq A_1, \dots, A_n$; A query $q$; Database instance $D$; A set of privacy budgets $\mc{B} \revise{ \coloneqq \{ \mc{E}, \mc{D} \} =} (\epsilon_1, \delta), (\epsilon_2, \delta), \dots, (\epsilon_n, \delta)$.}
    \KwOutput{A set of noisy answers $r_1, r_2, \dots, r_n$.}
    \SetKwProg{Fn}{Function}{ :}{end}
    \Fn{\textsc{\emph{additiveGM(}$\mathcal{A}, \mc{B}, q, D$}\emph{)}}{
        $r \gets$ \textsc{queryExec}($q, D$) \Comment{Obtain true query answer.} \\
        $\Delta q \gets$ \textsc{sensCalc}($q$) \Comment{Sensitivity calculation.} \\
        $\mc{B}' \gets$ \textsc{sort}($\mc{B}, \epsilon_i$) \Comment{Sort $\mc{B}$ on the desc order of $\epsilon$'s.} \label{alg_line:agm_sort} \\
        $(\epsilon_i, \delta) \gets$ \textsc{pop}($\mc{B}'$) \Comment{Pop the 1st element.} \\
        $\sigma_i \gets $ \textsc{analyticGM}($\epsilon_i, \delta, \Delta q $) 
        \Comment{\cite{BW18analytic}} \label{alg_line:agm_sigma} \\
        $r_i \gets r + \eta_i \sim \mathcal{N}(0, \sigma_i^2)$ \Comment{Add Gaussian noise.} \\
        \While{$\mc{B}' \neq \emptyset $}{
            $(\epsilon_j, \delta) \gets$ \textsc{pop}($\mc{B}'$) \\
            $\sigma_j \gets $ \textsc{analyticGM}($\epsilon_j, \delta, \Delta q $) %
             \Comment{\cite{BW18analytic}} \\
            $r_j \gets r_i + \eta_j \sim \mathcal{N}(0, \sigma_j^2 - \sigma_i^2)$;
        }
        \textbf{return} $\mc{R} \coloneqq \{ r_i | i \in [n] \}$;
    }
    \caption{Additive Gaussian Noise Calibration}
    \label{alg:additive_gaussian_noise_adding}
\end{algorithm}

\begin{theorem}
\label{thm:additiveGM_privacy}
Given a database $D$, a set of privacy budgets $\mc{B} \coloneqq (\epsilon_1, \delta), (\epsilon_2, \delta), \dots, (\epsilon_n, \delta)$ and a query $q$, the additive Gaussian mechanism (Algorithm \ref{alg:additive_gaussian}) that returns a set of noisy answers $r_1, r_2, \dots, r_n$ to each data analyst $A_i$ %
satisfies $[(A_1, \epsilon_1, \delta), ...,$ $(A_n, \epsilon_n, \delta)]$-multi-analyst-DP and $(\max \{ \epsilon_1, \dots, \epsilon_n \}, \delta)$-DP.
\end{theorem}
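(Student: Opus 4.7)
The plan is to reduce both claims to two well-known facts: (i) the sum of independent Gaussians is Gaussian with added variances, and (ii) the analytic Gaussian mechanism of Definition~\ref{def:analytic_gaussian} is tight, so an output distributed as $\mc{N}(q(D), \sigma^2 I)$ with $\sigma$ computed by \textsc{analyticGM}$(\epsilon,\delta,\Delta q)$ is exactly $(\epsilon,\delta)$-DP. With these in hand, the theorem becomes largely a bookkeeping argument about the joint distribution produced by Algorithm~\ref{alg:additive_gaussian_noise_adding}.

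First I would set up notation. Let $\pi$ be the permutation implied by the descending sort on line~\ref{alg_line:agm_sort}, so that $\epsilon_{\pi(1)} \geq \epsilon_{\pi(2)} \geq \cdots \geq \epsilon_{\pi(n)}$ and therefore $\sigma_{\pi(1)} \leq \sigma_{\pi(2)} \leq \cdots \leq \sigma_{\pi(n)}$ (since the analytic Gaussian variance is monotone decreasing in $\epsilon$). Write $i^{\star} = \pi(1)$ for the ``anchor'' analyst with the largest budget. By construction, $r_{i^{\star}} = q(D) + \eta_{i^{\star}}$ with $\eta_{i^{\star}} \sim \mc{N}(0, \sigma_{i^{\star}}^2 I)$, while for every other $j$ we have $r_j = r_{i^{\star}} + \zeta_j$ with $\zeta_j \sim \mc{N}(0, (\sigma_j^2 - \sigma_{i^{\star}}^2) I)$ drawn independently of $\eta_{i^{\star}}$ and of $D$. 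The variance gap is nonnegative precisely because $i^{\star}$ is the minimum-variance index, so this is well-defined.

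For the multi-analyst-DP claim, I would argue the marginal for each $A_j$: since $\eta_{i^{\star}}$ and $\zeta_j$ are independent Gaussians, their sum is $\mc{N}(0, \sigma_j^2 I)$, so $r_j \sim \mc{N}(q(D), \sigma_j^2 I)$. This is exactly the output law of the analytic Gaussian mechanism with parameters $(\epsilon_j, \delta, \Delta q)$, hence Definition~\ref{def:analytic_gaussian} gives $(\epsilon_j,\delta)$-DP on $r_j$. Applied to every $j\in[n]$, this establishes $[(A_1,\epsilon_1,\delta),\ldots,(A_n,\epsilon_n,\delta)]$-multi-analyst-DP per Definition~\ref{def:multi_analyst_dp}.

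For the collusion/worst-case claim, the key observation is that the entire tuple $(r_1,\ldots,r_n)$ is a post-processing of $r_{i^{\star}}$: the map $r_{i^{\star}} \mapsto (r_{i^{\star}} + \zeta_j)_{j\in[n]}$ depends only on $r_{i^{\star}}$ and on the fresh noises $\{\zeta_j\}$, which are sampled independently of $D$. Since $r_{i^{\star}}$ alone is $(\max_i \epsilon_i, \delta)$-DP (by the argument in the previous paragraph with $j=i^{\star}$), Theorem~\ref{thm:dp_post_processing} yields the same guarantee for the joint output observed by any coalition of analysts.

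The only subtlety I expect is making the post-processing step airtight: one must emphasize that the randomness used to produce the $\zeta_j$'s is sampled without any further access to $D$, so the mapping is data-oblivious and Theorem~\ref{thm:dp_post_processing} applies verbatim. Everything else is routine manipulation of Gaussian variances plus a direct appeal to the tightness of the analytic Gaussian mechanism; no new composition inequality is needed.
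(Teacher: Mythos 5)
Your proposal is correct and follows essentially the same route as the paper's own proof: each analyst's marginal is shown to be exactly an analytic Gaussian mechanism output because the sum of the independent Gaussian noises has variance $\sigma_j^2$, giving the multi-analyst guarantee, and the collusion bound $(\max_i \epsilon_i,\delta)$-DP follows because the database is accessed only once and all other outputs are data-independent post-processing of the anchor release. Your write-up is, if anything, slightly more explicit than the paper's sketch about why the post-processing step is data-oblivious, but it introduces no new idea or different decomposition.
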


\begin{proof}[Proof Sketch]
To each data analyst $A_i$, the DP mechanism is equivalent to the standard Gaussian mechanism with a proper variance that satisfies $(\epsilon_i,\delta)$-DP. 
Since the data is looked at once, the $(\max \{ \epsilon_1, \dots, \epsilon_n \}, \delta)$-DP is guaranteed by post-processing.
\end{proof}

\ifarxiv
\revise{
\stitle{Discussion on $\delta$.} \linelabel{line:delta_2}
If the $\delta$ is not a fixed parameter in the system, it could happen in privacy budget $(\epsilon_i, \delta_i)$ that $\epsilon_i = \max \mc{E}$ but $\delta_i = \min \mc{D}$.
Algorithm~\ref{alg:additive_gaussian_noise_adding} can be simply modified to handle this by not sorting $\mc{B}$ based on descending $\epsilon$'s (line \ref{alg_line:agm_sort}) but according to the ascending order of the calculated $\sigma$ (line \ref{alg_line:agm_sigma}).
}
\fi

\subsubsection{Synopses Management}
\label{subsubsec:synopses_updating}

We introduce the concept of global and local DP synopses and then discuss the updating process in our additive GM.
A DP synopsis (or synopsis for short) is a noisy answer to a (histogram) view over a database instance. We first use the privacy-oriented mode to explain the synopses management for clarity, and then elaborate the accuracy-oriented mode in the accuracy-privacy translation module (Section~\ref{subsubsec:agm_translation}).

\stitle{Global and Local DP Synopses.} To solve the maximum query answering problem, 
for each view $V\in \mc{V}$, 
\oursystem maintains a \emph{global DP synopsis} with a cost of $(\epsilon,\delta)$, denoted by $V^{\epsilon,\delta}(D)$ or $V^{\epsilon}$, where $D$ is the database instance. For simplicity, we drop 
$\delta$ by considering the same value for all $\delta$ and $D$. For this veiw,  \oursystem also maintains a \emph{local DP synopsis} for each analyst $A_i\in \mc{A}$, denoted by $V_{A_i}^{\epsilon'}$, where the local synopsis is always generated from the global synopsis $V^{\epsilon}$ of the view $V$ by adding more noise. Hence, we would like to ensure $\epsilon\geq\epsilon'$. This local DP synopsis  $V_{A_i}^{\epsilon'}$ will be used to answer the queries asked by the data analyst $A_i$.

The process of updating synopses consists of two parts.
The first part is to update the local synopses based on the global synopses.
The second part is to update the global synopses by relaxing the privacy guarantee, in order to answer a query with a higher accuracy requirement.
We discuss the details below.

\stitle{Generating Local Synopses from Global Synopses.}
We leverage our %
additive GM 
primitive to release a local DP synopsis $V_{A_i}^{\epsilon'}$ from a given global synopsis $V^{\epsilon}$, where $V^{\epsilon}$ is generated by a Gaussian mechanism.
Given the privacy guarantee $\epsilon$ (and $\delta$) and the sensitivity of the view, the Gaussian mechanism can calculate a proper variance $\sigma^2$ for adding noise and ensuring DP.
The additive GM calculates $\sigma^2$ and $\sigma'^2$ based on $\epsilon$ and $\epsilon'$ respectively, and then generates the local synopsis $V_{A_i}^{\epsilon'}$ by injecting independent noise drawn from $\mc{N}(0, \sigma^{\prime 2} - \sigma^2)$ to the global synopsis $V^{\epsilon}$. As the global synopsis is hidden from all the analysts, the privacy loss to the analyst $A_i$ is $\epsilon'$. Even if all the analysts collude, the maximum privacy loss is bounded by the budget spent on the global synopsis. 

\begin{figure*}
    \centering
    \includegraphics[width=0.9\textwidth]{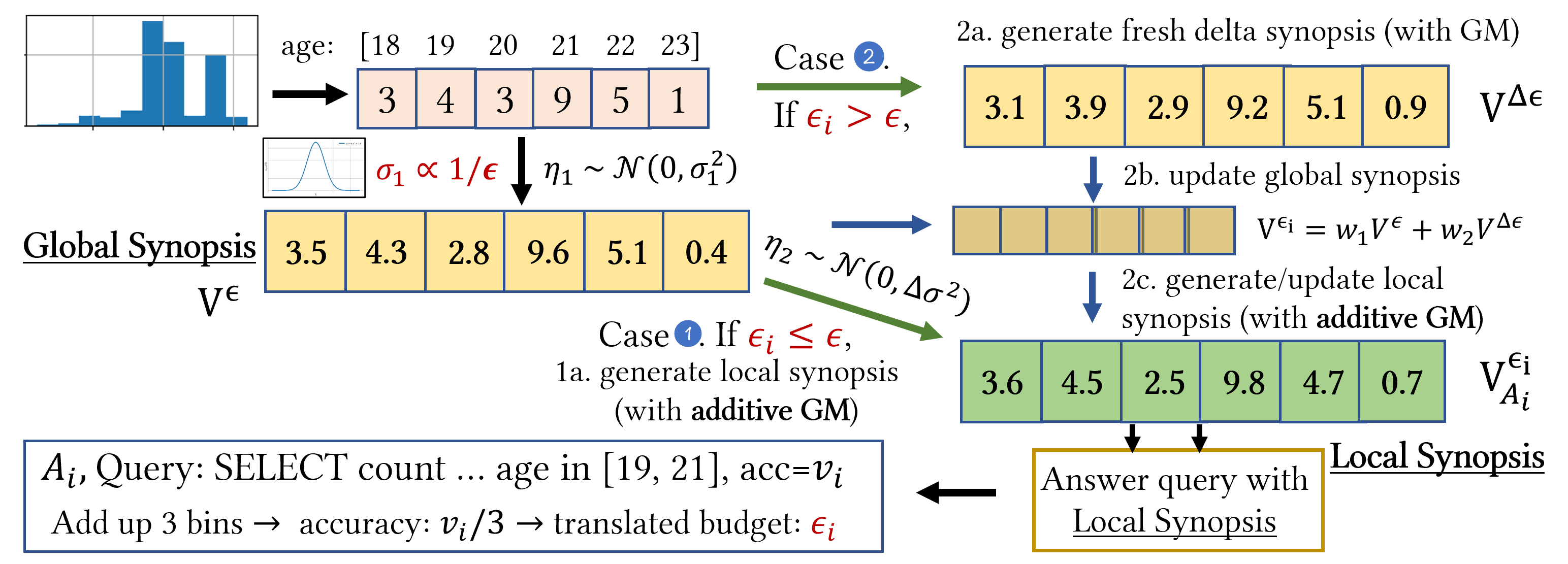}
    \caption{Illustration of the Additive Gaussian Approach}
    \label{fig:agm}
\end{figure*}

\begin{example}
    Alice and Bob are asking queries to \oursystem.
    Alice asks the first query $q_1$ (which is answerable on $V_1$) with \emph{budget requirement} $\epsilon_{V_1, Alice} = 0.5$.
    \oursystem generates a global synopsis $V^{0.5}$ for $V$ with budget 0.5 and then generate a local synopsis $V^{0.5}_{Alice}$ from the global synopsis $V^{0.5}$ for Alice.
    Bob next asks query $q_2$ (which is also answerable on $V_1$) with budget $\epsilon_{V_1, Bob} = 0.3$. Since the budget $0.3 < 0.5$, we use the additive GM to generate a local synopsis $V^{0.3}_{Bob}$ from the global synopsis $V^{0.5}$ for Bob and return the query answer based on the local synopsis $V^{0.3}_{Bob}$.
    This example follows ``Case \circledtext{1}'' in Fig. \ref{fig:agm}.
\end{example}

\stitle{Updating Global Synopses by Combining Views.}
When the global DP synopsis $V^{\epsilon}$ is not sufficiently accurate to handle a local synopsis request at privacy budget $\epsilon_t$, \oursystem spends additional privacy budget $\Delta\epsilon$ to update the global DP synopsis to 
$V^{\epsilon+\Delta\epsilon}$, where $\Delta\epsilon=\epsilon_t -\epsilon$. 
We still consider Gaussian mechanism, which generates an intermediate DP synopsis $V^{\Delta\epsilon}$ with a budget $\Delta\epsilon$.
Then we combine the previous synopses with this intermediate synopsis into an updated one.
The key insight of the combination is to properly involve the fresh noisy synopses by assigning each synopsis
with a weight proportional to the inverse of its noise variance, which gives the smallest expected square error based on UMVUE \cite{kiefer1952minimum,rao1949sufficient}.
That is, for the $t$-th release, we combine these two synopses:  
\begin{equation}
\label{equ:view_combine}
    V^{\epsilon_t} = (1-w_t) V^{\epsilon_{t-1}} + w_t V^{\Delta\epsilon}.
\end{equation}
The resulted expected square error for $V^{\epsilon_t}$ is $v_t=(1-w_t)^2 v_{t-1} + w_t^2 v_\Delta$, where $v_{t-1}$ is the noise variance of view $V^{\epsilon_{t-1}}$, and
$v_\Delta$ is derived from $V^{\Delta\epsilon}$.
To minimize the resulting error, $w_t=\frac{v_{t-1}}{v_\Delta + v_{t-1}}$.

\begin{example}
    At the next time stamp, Bob asks a query $q_1$ with budget $\epsilon_{V_1, Bob} = 0.7$.
    Clearly the current global synopsis $V^{0.5}$ is not sufficient to answer this query because $0.7 > 0.5$.
    Then the system needs to update $V^{0.5}$ and this is done by: 1) first generating a fresh global synopsis $V^{0.2}$ using analytic GM from $V(D)$;
    2) then combining it with $V^{0.5}$ to form $V^{0.7}$ using Equation~\eqref{equ:view_combine}.
    This example is illustrated as steps 2a and 2b of ``Case \circledtext{2}'' in Fig. \ref{fig:agm}.
\end{example}

\begin{lemma}[Correctness of View Combination]
Releasing the combined DP synopsis in the $t$-th update is $(\epsilon_{t-1} + \Delta\epsilon, \delta_{t-1} + \Delta\delta)$-DP.
\end{lemma}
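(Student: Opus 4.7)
The plan is to prove this lemma by combining the two standard DP properties stated earlier in the paper: sequential composition (Theorem~\ref{thm:dp_seq_composition}) and post-processing (Theorem~\ref{thm:dp_post_processing}). The key observation is that the combined synopsis $V^{\epsilon_t}$ is, by construction in Equation~\eqref{equ:view_combine}, a deterministic linear function of two separately generated DP synopses, namely $V^{\epsilon_{t-1}}$ (the previous synopsis maintained by the system) and $V^{\Delta\epsilon}$ (the freshly generated synopsis obtained from the database via the Gaussian mechanism with parameters $(\Delta\epsilon,\Delta\delta)$).

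First, I would argue inductively (or simply by the statement of the update) that $V^{\epsilon_{t-1}}$ satisfies $(\epsilon_{t-1},\delta_{t-1})$-DP with respect to the underlying database $D$, and that $V^{\Delta\epsilon}$ satisfies $(\Delta\epsilon,\Delta\delta)$-DP by construction of the analytic Gaussian mechanism (Definition~\ref{def:analytic_gaussian}). Next, I would invoke sequential composition (Theorem~\ref{thm:dp_seq_composition}) on the mechanism that releases the pair $(V^{\epsilon_{t-1}}, V^{\Delta\epsilon})$. This joint release is $(\epsilon_{t-1}+\Delta\epsilon,\delta_{t-1}+\Delta\delta)$-DP.

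Finally, I would observe that the weighted combination $V^{\epsilon_t} = (1-w_t)V^{\epsilon_{t-1}} + w_t V^{\Delta\epsilon}$ is a deterministic function (with $w_t$ determined only by the variances $v_{t-1}$ and $v_\Delta$, which are public quantities depending on the privacy parameters and not on $D$) of the pair $(V^{\epsilon_{t-1}}, V^{\Delta\epsilon})$. Hence, by the post-processing property (Theorem~\ref{thm:dp_post_processing}), releasing $V^{\epsilon_t}$ also satisfies $(\epsilon_{t-1}+\Delta\epsilon,\delta_{t-1}+\Delta\delta)$-DP, giving the claim.

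The only subtle point—what I expect to be the main thing to flag—is ensuring that $w_t$ is treated as a data-independent constant so that the combination is legitimate post-processing. Since $w_t = v_{t-1}/(v_\Delta + v_{t-1})$ depends only on the noise variances prescribed by the privacy parameters (not on $D$), this holds, and the rest of the argument is a direct application of composition followed by post-processing. No new mechanism-level analysis is required.
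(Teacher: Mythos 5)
Your proposal is correct and follows essentially the same route as the paper's own (very brief) proof sketch, which simply invokes sequential composition of the $(\epsilon_{t-1},\delta_{t-1})$-DP synopsis and the fresh $(\Delta\epsilon,\Delta\delta)$-DP synopsis. Your explicit appeal to post-processing and your note that $w_t$ is data-independent merely spell out steps the paper leaves implicit, so there is nothing to flag.
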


\ifarxiv
\begin{proof}[Proof Sketch]
    The $t$-th update combines an $(\epsilon_{t-1}, \delta_{t-1})$-DP synopsis and a fresh synopsis that is $(\Delta\epsilon, \Delta\delta)$-DP.
    By sequential composition, the combined synopsis is $(\epsilon+\Delta\epsilon, \delta + \Delta\delta)$-DP.
\end{proof}
\fi

The view combination is not \emph{frictionless}.
Although the combined synopsis $V^{\epsilon+\Delta\epsilon}$ achieves $(\epsilon+\Delta\epsilon, \delta + \Delta\delta)$-DP, if we spend the whole privacy budget on generating a synopsis all at once, this one-time synopsis $V^*$ has the same privacy guarantee but is with less expected error than $V^{\epsilon+\Delta\epsilon}$.
We can show by sequentially combining and releasing synopses over time it is optimal among all possible linear combinations of synopses,
however, designing a frictionless updating algorithm for Gaussian  mechanisms is non-trivial in its own right, which remains for our future work.

\begin{theorem}[Optimality of Linear View Combination]
Given a sequences of views, $V^{\epsilon_1}, V^{\Delta \epsilon_2}, \ldots, V^{\Delta \epsilon_t}$, the expected squared error of our $t$-th release is less than or equal to that of releasing $w_1' V^{\epsilon_1} + \sum_{i=2}^t w_i'V^{\Delta \epsilon_i}$ for all $\{ w_i \mid i = 1, \ldots, t\}$ s.t. $\sum_i w_i' = 1$.  
\end{theorem}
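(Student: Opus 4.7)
The plan is to reduce the claim to the classical fact that, among linear unbiased combinations of independent estimators, the \emph{inverse-variance weighted} combination uniquely minimizes the variance. Observe first that each of $V^{\epsilon_1}$ and $V^{\Delta\epsilon_2},\dots,V^{\Delta\epsilon_t}$ is obtained by adding independent zero-mean Gaussian noise to the deterministic vector $V(D)$; hence they are mutually independent unbiased estimators of $V(D)$, with respective variances that I will denote $v_1$ and $v_{\Delta_2},\dots,v_{\Delta_t}$. Any linear combination $\hat V = w_1' V^{\epsilon_1} + \sum_{i=2}^t w_i' V^{\Delta\epsilon_i}$ satisfying $\sum_i w_i' = 1$ is therefore unbiased, so its expected squared error equals its variance, which by independence is
\[
\mathrm{Var}(\hat V) \;=\; (w_1')^2 v_1 + \sum_{i=2}^t (w_i')^2 v_{\Delta_i}.
\]

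The first main step is to solve the constrained minimization $\min \sum_i (w_i')^2 v_i$ subject to $\sum_i w_i' = 1$ (writing $v_i$ for $v_1$ when $i=1$ and $v_{\Delta_i}$ otherwise). A one-line Lagrange multiplier argument yields the unique optimum $w_i^\star = (1/v_i) / \sum_j (1/v_j)$, with optimal variance $v^\star = \bigl(\sum_j 1/v_j\bigr)^{-1}$; equivalently $1/v^\star = 1/v_1 + \sum_{i=2}^t 1/v_{\Delta_i}$. This is the classical UMVUE / Gauss–Markov statement already cited via \cite{kiefer1952minimum,rao1949sufficient}.

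The second main step is to verify, by induction on $t$, that the sequential update from Equation~\eqref{equ:view_combine} with the stated weight $w_t = v_{t-1}/(v_{\Delta_t}+v_{t-1})$ realizes exactly these optimal weights. The base case $t=1$ is trivial. For the inductive step, assume $V^{\epsilon_{t-1}}$ is the inverse-variance combination of $V^{\epsilon_1},V^{\Delta\epsilon_2},\dots,V^{\Delta\epsilon_{t-1}}$, so $1/v_{t-1}=1/v_1+\sum_{i=2}^{t-1}1/v_{\Delta_i}$. Since $V^{\epsilon_{t-1}}$ is a deterministic function of noises independent from the one generating $V^{\Delta\epsilon_t}$, these two are independent unbiased estimators, and the two-estimator case of the inverse-variance rule gives the optimal weight $w_t = v_{t-1}/(v_{\Delta_t}+v_{t-1})$ with resulting variance satisfying $1/v_t = 1/v_{t-1} + 1/v_{\Delta_t} = 1/v_1 + \sum_{i=2}^{t}1/v_{\Delta_i}$, matching $v^\star$. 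By expanding $V^{\epsilon_{t-1}}$ using its own representation, the combined estimator is a linear combination of $V^{\epsilon_1},V^{\Delta\epsilon_2},\dots,V^{\Delta\epsilon_t}$ whose weights must therefore coincide with $w_i^\star$ (by uniqueness of the minimizer of the strictly convex quadratic above).

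The main technical obstacle is the justification that the sequentially-attained optimum within the two-estimator family at step $t$ is globally optimal over all $t$-way linear combinations, rather than merely optimal conditional on having already committed to the earlier weights. I plan to handle this via the uniqueness of $w_i^\star$: since $v_t=v^\star$ by the recursion on $1/v_t$ and since the sequential estimator is \emph{some} unbiased linear combination of the $t$ views, the strict convexity of the quadratic objective forces its coefficients to equal $w_i^\star$, so no alternative choice of $\{w_i'\}$ can yield a smaller expected squared error. This completes the argument.
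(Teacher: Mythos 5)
Your proposal is correct and follows essentially the same route the paper indicates: a reduction to the inverse-variance (UMVUE-style) optimal unbiased linear combination, together with an induction showing the sequential weights $w_t = v_{t-1}/(v_{\Delta_t}+v_{t-1})$ realize the harmonic-sum recursion $1/v_t = 1/v_1 + \sum_{i=2}^{t} 1/v_{\Delta_i}$, i.e.\ the global optimum. The closing uniqueness discussion is harmless but unnecessary, since matching the globally minimal variance already yields the stated inequality.
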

Intuitively, this theorem is proved by a reduction to the best unbiased estimator and re-normalizing weights based on induction.

\stitle{Updating Local Synopses and Accounting Privacy.}
When a local DP synopsis is not sufficiently accurate to handle a query, but the budget request for this query $\epsilon_i$ is still smaller than or equal to the budget $\epsilon_t$ for the global synopsis of $V$, \oursystem generates a new local synopsis $V^{\epsilon_i}_{A_i}$ from $V^{\epsilon_t}$ using additive GM.
The analyst $A_i$ is able to combine query answers for a more accurate one, but the privacy cost for this analyst on view $V$ is bounded as $\min (\epsilon_t, P[A_i, V]+ \epsilon_i)$ which will be updated to $P[A_i, V]$.

\begin{example}
    Analysts' queries are always answered on local synopses.
    To answer Bob's query $(q_1, \epsilon_{V_1, Bob} = 0.7)$, \oursystem uses additive GM to generate a fresh local synopsis $V^{0.7}_{Bob}$ from $V^{0.7}$ and return the answer to Bob.
    Alice asks another query $(q_1, \epsilon_{V_1, Alice} = 0.6)$. \oursystem updates $V^{0.5}_{Alice}$ by generating $V^{0.6}_{Alice}$ from $V^{0.7}$.
    Both analysts' privacy loss on $V$ will be accounted as $0.7$.
    This example complements the illustration of step 2c of ``Case \circledtext{2}'' in Fig. \ref{fig:agm}.
\end{example}

\subsubsection{Accuracy-Privacy Translation}
\label{subsubsec:agm_translation}

The accuracy translation algorithm should consider the friction at the combination of global synopses.
We propose an accuracy-privacy translation paradigm (Algorithm \ref{alg:additive_gaussian}: \ref{func:aGM_translation}) involving this consideration. 
This translator module takes the query $q_i$, the utility requirement $v_i$, the synopsis $V$ 
for answering the query, and additionally the current global synopsis $V^{\epsilon}$ (we simplify the interface in Algorithm \ref{alg:overview}) as input, and outputs the corresponding budget \revise{$\epsilon_i$ for \textsc{run}} (omitting the same value $\delta$).

As the first query release for the view does not involve the frictional updating issue, we separate the translation into two phases where the first query release directly follows the analytic Gaussian translation in our \baseline approach.
For the second phase, given a global DP synopsis $V^{\epsilon_c}$ at hand (with \emph{tracked} expected error $v'$)
for a specific query-view and a new query is submitted by a data analyst with expected error $\revise{v_i} < v'$, we solve an optimization problem to find the Gaussian variance of the fresh new DP synopsis. 
We first calculate the Gaussian variance of the current DP synopsis $v'$ (line \ref{line:aGM_translation_estimate}) and solve the following optimization problem (line \ref{line:aGM_translation_optimize}).
\begin{eqnarray}
    \argmax_{\sigma} ~~  \revise{v_i} = w^2 v' + (1-w)^2 v_t ~~ 
    \mbox{ s.t. }  
    w \in [0, 1]
\end{eqnarray}

The solution gives us the minimal error variance $v_t=\sigma^2$ (line \ref{line:aGM_translation_vanilla}). 
By translating $\sigma^2$ into the privacy budget using the standard analytic Gaussian translation technique (Def. \ref{def:analytic_gaussian_translation}), we can get the minimum privacy budget that achieves the required accuracy guarantee (line \ref{func:aGM_translation_end}).
Note that when the requested accuracy $\revise{v_i} > v'$, the solution to this optimization problem is $w=0$, which automatically degrades to the \baseline translation.

\begin{theorem}\label{thm:agm_translation_correctness}
    Given a query $(q_i, v_i)$ 
and the view $V$ for answering $q_i$, the translation function (Algorithm~\ref{alg:additive_gaussian}, \textsc{privacyTranslation}) outputs a privacy budget $\epsilon$.
The query $q_i$ can then be answered with expected square error $v_q$ 
over the updated synopsis $V^{\epsilon}_{A}$ such that: i) meets the accuracy requirement $v_q \leq v_i$, and ii) $\epsilon - \epsilon^* \leq p$, where $\epsilon^*$ is the minimal privacy budget to meet the accuracy requirement for Algorithm~\ref{alg:additive_gaussian} (\textsc{run} function).
\end{theorem}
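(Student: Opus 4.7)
The plan is to mirror the proof of Proposition~\ref{thm:translation_correctness}, but split into the two regimes handled by the translator: (i) the first release on view $V$, where there is no current global synopsis and the additive routine degenerates to the \baseline translator, and (ii) a subsequent release, where an existing global synopsis $V^{\epsilon_c}$ with tracked variance $v'$ is linearly combined with a fresh synopsis of variance $v_t=\sigma^2$. In case (i), the argument is identical to Proposition~\ref{thm:translation_correctness}: line~\ref{baseline:calculatevariance} converts the per-query accuracy $v_i$ into a per-bin target, and the binary search returns some $\epsilon$ whose analytic Gaussian variance satisfies the target, with search-gap bounded by $p$ by construction.

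For case (ii), first I would argue the accuracy half. The translator solves the one-dimensional program that maximizes $\sigma$ subject to $w^2 v' + (1-w)^2 \sigma^2 \leq v_i$ with $w\in[0,1]$, which has a closed-form optimum at $w^* = v_t/(v'+v_t)$ matching the inverse-variance weight used in Equation~\eqref{equ:view_combine}. Substituting $w^*$ gives $w^{*2} v' + (1-w^*)^2 v_t = v' v_t /(v'+v_t) \leq v_i$, so any $\sigma^2 \leq v_t$ chosen by subsequent analytic Gaussian translation yields a combined synopsis with variance at most $v_i$. Since our DP mechanism and the per-bin accuracy translation are data-independent (as noted for the \baseline), plugging the resulting synopsis into \textsc{run} answers $q_i$ with expected squared error $v_q \leq v_i$.

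Next I would handle the optimality-gap half. Let $\epsilon^*$ be the minimal budget that, when fed to the \textsc{run} function of Algorithm~\ref{alg:additive_gaussian}, produces a combined synopsis meeting the accuracy requirement. The combined variance is monotone in the fresh budget: increasing $\epsilon$ only shrinks $\sigma^2$, and the inverse-variance combination at weight $w^*$ is monotone decreasing in $\sigma^2$. Thus the set of feasible $\epsilon$ is an upper interval $[\epsilon^*, \psi_P]$, and the analytic Gaussian translation's binary search (inherited from the \baseline) localizes the smallest feasible value to within $p$ of $\epsilon^*$, giving $\epsilon - \epsilon^* \leq p$.

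The main obstacle I expect is justifying that the optimization in line~\ref{line:aGM_translation_optimize} is the right reduction, namely that letting the translator pick the optimal combination weight $w^*$ genuinely characterizes the feasible set of fresh budgets, rather than artificially restricting it. I would handle this by observing that \textsc{run} itself uses the inverse-variance weight (Equation~\eqref{equ:view_combine}), so the weight chosen inside the translator exactly matches the weight used at release time; hence the accuracy predicted by the translator coincides with the accuracy actually achieved, which closes the loop between the accuracy guarantee and the near-optimality of $\epsilon$.
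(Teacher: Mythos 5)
Your proposal is correct and takes essentially the same route as the paper's proof: it reduces correctness to the vanilla translation (Proposition~\ref{thm:translation_correctness}) by showing that the optimization step in Algorithm~\ref{alg:additive_gaussian} feeds the subroutine the right target variance (with the first-release case degenerating to the \baseline argument), merely filling in the closed-form optimum and the monotonicity details that the paper leaves implicit. One small nuance: at release time the inverse-variance weight is recomputed from the actual fresh variance $\sigma^2 \leq v_t$ rather than being literally the translator's $w^*$, but this only decreases the combined error, so your accuracy and gap conclusions still stand.
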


\ifarxiv
\begin{proof}[Proof Sketch]
    The \textsc{privacyTranslation} in additive Gaussian approach calls the translation function in the vanilla approach as subroutine (Algorithm~\ref{alg:additive_gaussian}: \ref{line:aGM_translation_vanilla}).
    The correctness of the additive Gaussian \textsc{privacyTranslation} depends on inputting the correct expected square error, which is calculated based on the accuracy requirement $v_i$ while considering the frictions, into the subroutine. The calculation of the expected squared error with an optimization solver has been discussed and analyzed above.
\end{proof}
\fi

\begin{algorithm}[t]
    \SetAlgoLined
    \DontPrintSemicolon
    \SetKwProg{Fn}{Function}{ :}{end}
    Set $\delta$ in the system \\
    \Fn{\textsc{\emph{run(}$q, \mc{P}, A_i, V_{A_i}, \epsilon_i$}\emph{)} \label{func:aGM_run}}{
        \uIf{$\epsilon_i>\epsilon$ for global synopsis $V^{\epsilon}$}{
                $\Delta\epsilon = \epsilon_i - \epsilon$ \\
                 $V^{\Delta\epsilon} \gets V + \eta \sim \mc{N}(0, $  \textsc{analyticGM}($\Delta\epsilon, \delta, 1$)$^2I)$              \\
                Update global synopsis to $V^{\epsilon\gets\epsilon_i}$ with $V^{\Delta\epsilon}$
                \\
            }
            $\_\text{,} ~V_{A_i}^{\epsilon_i} \gets$ \textsc{additiveGM}($\{ A_i, A_{j, V^{\epsilon} = V^{\epsilon}_{A_j} } \}, \{\epsilon_i, \epsilon \}, q, D$) 
            \\
            Update local synopsis to 
        $V_{A_i}^{\epsilon_i}$ 
                \\
            Update $P[A_i,V]\gets \min (\epsilon, P[A_i,V] + \epsilon_i)$ \\
            \textbf{return} $r_i \gets V_{A_i}^{\epsilon_i}$
        \label{func:aGM_run_end}
    }
    \Fn{\textsc{\emph{privacyTranslate(}$\revise{q_i, v_i}, V, V^{\epsilon}, p$}\emph{)} \label{func:aGM_translation} }{
        $v' \gets$ \textsc{estimateError}($\revise{q_i, v_i}, V^{\epsilon}$) \label{line:aGM_translation_estimate}\\
        $w \gets$ \textsc{minimizer}(func=$-\frac{(\revise{v_i}-w^2v')}{(1-w)^2}$, bounds=$[0, 1]$) \label{line:aGM_translation_optimize} \\
        $\epsilon \gets$ \textsc{privacyTranslate}($q, \frac{(\revise{v_i}-w^2v')}{(1-w)^2}, V, p$) \Comment{C.f. Def \ref{def:analytic_gaussian_translation}.} \label{line:aGM_translation_vanilla} \\
        \textbf{return} $\epsilon$
    \label{func:aGM_translation_end}
    }
    \Fn{\textsc{\emph{constraintCheck(}}$P, A_i, V^\epsilon, \epsilon_i, \Psi$\emph{)} \label{func:aGM_constraint} }{

        $\epsilon' = \min(\epsilon, P[A_i,V]+\epsilon_i) -  P[A_i,V]$\\
        \uIf{$(P$.\textsc{\emph{max(}}axis=Column, $c$=$V_j$\emph{)} +  $ \epsilon' \leq \Psi.\psi_{V_j}) \land$
        $(P$.\textsc{\emph{composite(}}axis=Row, $r$=$P$.\textsc{\emph{max}}\emph{())} + $\epsilon' \leq \Psi.\psi_{P}) \land$
        $(P$.\textsc{\emph{composite(}}axis=Row\emph{)} +  $ \epsilon' \leq \Psi.\psi_{A_i})$ }{
            \textbf{return} True/Pass
        }
        
        \label{func:aGM_constraint_end}
    }
    \caption{Additive Gaussian Approach}
    \label{alg:additive_gaussian}
\end{algorithm}

\subsubsection{Provenance Constraint Checking}
The provenance constraint checking for the additive Gaussian approach (line \ref{func:aGM_constraint}) is similar to the module for the \baseline approach.
We would like to highlight 3 differences. %
1) Due to the use of the additive Gaussian mechanism, the composition across analysts on the same view is bounded as tight as the $\max P[A_i, V], \forall A_i \in \mc{A}$.
Therefore, we check the view constraint by taking the \textsc{max} over the column retrieved by the index of view $V_j$.
2) To check table constraint, we composite a vector where each element is the maximum recorded privacy budget over every column/view.
3) The new cumulative cost is no longer $\epsilon_i$, but $\epsilon' = \min(\epsilon, P[A_i,V]+\epsilon_i) -  P[A_i,V]$.

\begin{theorem}
Given the same sequence of queries $Q$ and at least 2 data analysts in the system and the same system setup, the total number of queries answered by the additive Gaussian approach is always more than or equal to that answered by \baseline approach.
\end{theorem}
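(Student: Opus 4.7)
The plan is to proceed by induction on the index $k$ of the query in the sequence $Q$, maintaining the invariant that after processing the first $k$ queries, the additive-Gaussian provenance matrix is entry-wise dominated by the \baseline one, i.e. $P^{agm}[A_i,V_j] \leq P^{van}[A_i,V_j]$ for every $(A_i,V_j)$, and moreover every query answered by \baseline up to step $k$ has also been answered by the additive-Gaussian system. The base case is immediate since both matrices start at zero and no query has been processed. For the inductive step, fix a query $(q,v_i)$ from $A_i$ routed to view $V$; I will compare the two systems component by component along the pipeline of Algorithm~\ref{alg:overview}: accuracy-privacy translation, constraint check, and provenance update.

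First, I would argue the translation inequality $\epsilon_i^{agm} \leq \epsilon_i^{van}$. The additive-Gaussian translator of Algorithm~\ref{alg:additive_gaussian} solves an optimization over a weight $w \in [0,1]$ that interpolates between reusing the current global synopsis and drawing fresh Gaussian noise, then passes the resulting target variance into the same analytic-Gaussian subroutine used by \baseline (Definition~\ref{def:analytic_gaussian_translation}). Setting $w=0$ recovers exactly the \baseline variance, so the optimizer's value can only be smaller, hence (by monotonicity of the analytic-Gaussian translation in the target variance) the translated epsilon is no larger. Second, the actual cost applied to $P^{agm}[A_i,V]$ is $\epsilon' = \min(\epsilon_{\mathrm{global}},\, P^{agm}[A_i,V]+\epsilon_i^{agm}) - P^{agm}[A_i,V] \leq \epsilon_i^{agm} \leq \epsilon_i^{van}$, which already implies that the invariant is preserved whenever both systems accept the query.

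Next, I would show that the additive-Gaussian constraint check is weaker than the \baseline one under the inductive hypothesis. For the column constraint, $P^{agm}$ uses $\max$ over the column whereas $P^{van}$ uses the sum/composite; for the table constraint, $P^{agm}$ composites only the per-column maxima whereas $P^{van}$ composites the full matrix; and for the row constraint both use row composition. Combining $P^{agm} \leq P^{van}$ entry-wise with $\max_i(\cdot) \leq \sum_i(\cdot)$ for non-negative entries, and with $\epsilon' \leq \epsilon_i^{van}$, each of the three inequalities checked inside \textsc{constraintCheck} of Algorithm~\ref{alg:additive_gaussian} is implied by the corresponding inequality inside \textsc{constraintCheck} of Algorithm~\ref{alg:baseline}. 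Therefore, whenever \baseline accepts the $(k{+}1)$-th query, the additive-Gaussian system also accepts it. Updating the two matrices then preserves the entry-wise domination $P^{agm} \leq P^{van}$, completing the induction and yielding the desired count inequality by summing the per-step implications.

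The main obstacle I anticipate is making the translation inequality fully rigorous in the presence of the view-combination ``friction,'' because the expected error of the combined global synopsis is not a plain sum of variances but a weighted combination, and the tracked variance $v'$ in \textsc{estimateError} must be consistent with the actual posterior variance of $V^{\epsilon}$. I would handle this by invoking the optimality-of-linear-combination theorem stated just before (so that the tracked $v'$ is indeed achievable), and by explicitly checking that when the current global synopsis is absent or insufficient, the additive-Gaussian translator degrades smoothly to the \baseline case with $w=0$, so the inequality never reverses. A minor subtlety is that updating the global synopsis inside \textsc{run} can raise $\epsilon_{\mathrm{global}}$ by $\Delta\epsilon$, but since per-analyst cost is still capped by $\min(\epsilon_{\mathrm{global}},\cdot)$ and the global increment is shared across columns via the $\max$ aggregator in the agm checks, the entry-wise invariant is not disturbed.
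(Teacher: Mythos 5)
Your plan follows essentially the same route as the paper's proof sketch: both hinge on the per-query cost comparison $\min(\epsilon, P[A_i,V]+\epsilon_i) - P[A_i,V] \leq \epsilon_i$ together with the observation that the additive Gaussian constraint check aggregates by column maxima rather than full composition, so any query the vanilla approach can answer the additive Gaussian approach can also answer. Your write-up simply makes explicit, via the inductive entry-wise domination of the provenance matrices, the state comparability that the paper's sketch leaves implicit.
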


\ifarxiv
\begin{proof}[Proof Sketch]
    \oursystem processes incoming queries with DP synopsis (\baseline approach) or local synopsis (additive Gaussian approach). For each query $q$ in $Q$, if it can be answered (w.r.t the accuracy requirement) with cached synopsis, both approaches will process it in the same manner; otherwise, \oursystem needs to update the synopses.
    Comparing the cost of synopses update in both methods, $\min(\epsilon, P[A_i,V]+\epsilon_i) -  P[A_i,V] \leq \epsilon_i$ always holds.
    Therefore, given the same privacy constraints, if a synopsis can be generated to answer query $q$ with \baseline approach, the additive Gaussian approach must be able to update a global synopsis and generate a local synopsis to answer this query, which proves the theorem.
\end{proof}

We note that \baseline approach generates independent synopses for different data analysts, while in the additive Gaussian approach, we only update global synopses which saves privacy budgets when different analysts ask similar queries. We empirically show the benefit in terms of answering more queries with the additive Gaussian approach in Section~\ref{sec:evaluation}. 
\fi

\subsubsection{Putting Components All Together}
The additive Gaussian approach is presented in Algorithm \ref{alg:additive_gaussian}: \ref{func:aGM_run}-\ref{func:aGM_run_end} (the function \textsc{run}).
At each time stamp, the system receives the query $q$ from the analyst $A_i$ and selects the view that this query can be answerable.
If the translated budget $\epsilon_i$ is greater than the budget allocated to the global synopsis of that view (line 3), we generate a delta synopsis (lines 4-5) and update the global synopsis (line 6).
Otherwise, we can use additive GM to generate the local synopsis based on the (updated) global synopsis (lines 7-9).
We update the provenance table with the consumed budget $\epsilon_i$ (line 10) and answer the query based on the local synopsis to the analyst (line 11).

\subsubsection{Discussion on Combining Local Synopses}
We may also update a local synopsis $V_{A_i}^{\epsilon'}$ (upon request $\epsilon_i$) by first generating an intermediate local synopsis $V_{A_i}^{\Delta\epsilon}$ from $V^{\epsilon_t}$ using additive GM, where $\Delta\epsilon=\epsilon_i - \epsilon'$ and
then combining $V_{A_i}^{\Delta\epsilon}$ with the previous local synopsis in a similar way as it does for the global synopses, which leads to a new local synopsis 
$V_{A_i}^{\epsilon'+\Delta\epsilon}$. 
\begin{example}
    To answer Bob's query $(q_1, \epsilon_{V_1, Bob} = 0.7)$, we can use additive GM to generate a fresh local synopsis $V^{0.4}_{Bob}$ from $V^{0.7}$, and combine $V^{0.4}_{Bob}$ with the existing $V^{0.3}_{Bob}$ to get $V^{0.7}_{Bob}$.
\end{example}

However, unlike combining global synopses, these local synopses share correlated noise. We must solve a different optimization problem to find an unbiased estimator with minimum error. For example, given the last combined global synopsis (and its weights) $V' = w_{t-1}V^{\epsilon_{t-1}} + w_t V^{\epsilon_t - \epsilon_{t-1}}$, if we know $V_{A}^{\epsilon_{t-1}}$ is a fresh local synopsis generated from $V^{\epsilon_{t-1}}$, we consider using the weights $k_{t-1}, k_t$ for local synopsis combination:
\begin{align}
    V_{A}' = & ~ k_{t-1} V_{A}^{\epsilon_{t-1}} + k_t V_{A}^{\Delta \epsilon} \nonumber \\
    = & ~ k_{t-1} (V^{\epsilon_{t-1}} + \eta_{A}^{t-1}) + k_t (w_{t-1}V^{\epsilon_{t-1}} + w_t V^{\epsilon_t - \epsilon_{t-1}} + \eta_{A}^{t}) \nonumber \\ 
    = & ~ (k_{t-1} + k_t w_{t-1}) V^{\epsilon_{t-1}} + k_t w_{t} V^{\epsilon_t - \epsilon_{t-1}} + 
    k_{t-1}\eta_{A}^{t-1}+k_t\eta_{A}^t,
    \nonumber 
\end{align}
where $\eta_{A}^{t-1}$ and $\eta_{A}^{t}$ are the noise added to the local synopses in additive GM with variance $\sigma_{t-1}^2$ and $\sigma_t^2$. We can find the adjusted weights that minimize the expected error for $V_{A}'$ is $v_{A, t} = (k_{t-1} + k_t w_{t-1})^2 v_{t-1} + k_t^2 w_{t}^2 v_{\Delta} + k_{t-1}^2\sigma_{t-1}^2 + k_t^2 \sigma_t^2$ subject to $k_{t-1} + k_t w_{t-1}+k_t w_{t} =1$, using an optimization solver. 
Allowing the optimal combination of local synopses tightens the cumulative privacy cost of $A_i$ on $V$, i.e., $\epsilon_i < \min (\epsilon_t, P[A_i, V]+ \epsilon_i)$.
However, if the existing local synopsis $V_{A}^{\epsilon_{t-1}}$ is a combined synopsis from a previous time stamp, the correct variance calculation requires a nested analysis on \emph{from where the local synopses are generated and with what weights the global/local synopses are combined}.
This renders too many parameters for \oursystem to keep track of and puts additional challenges for accuracy translation since solving an optimization problem with many weights is hard, if not intractable.
For practical reasons, \oursystem adopts the approach described in Algorithm \ref{alg:additive_gaussian}.

\subsection{Configuring Provenance Table}
\label{subsec:algo_col_constraints}

In existing systems, the \admin is only responsible for setting a single parameter specifying the overall privacy budget.
\oursystem, however, requires the \admin to configure more parameters.

\subsubsection{Setting Analyst Constraints for Proportional Fairness}

We first discuss guidelines to establish per-analyst constraints in \oursystem, by proposing two specifications that achieve proportional fairness.

\begin{definition}[Constraints for \Baseline Approach]
\label{def:vanilla_analyst_constraint}
    For the \baseline approach, we propose to specify each analyst's constraint by proportional indicated normalization.
    That is, for the table-level constraint $\psi_P$ and analyst $A_i$ with privilege $l_i$, %
    $\psi_{A_i} = \frac{l_i}{\sum_{j\in [n]} l_j} \psi_P$.
\end{definition}

Note that this proposed specification is not optimal for the additive Gaussian approach, since the maximum utilized budget will then be constrained by $\max \psi_{A_i} < \psi_P$ when more than 1 analyst is using the system.
Instead, we propose the following specification.

\begin{definition}[Constraints for Additive Gaussian]
\label{def:agm_analyst_constraint}
    For the additive Gaussian approach, each analyst $A_i$'s constraint can be set to as $\frac{l_i}{l_{max}}\psi_P$, where $l_{max}$ denotes the maximum privilege in the system.
\end{definition}

\stitle{Comparing Two Specifications.}
We compare the 
two analyst-constraint specifications (Def. \ref{def:vanilla_analyst_constraint} and \ref{def:agm_analyst_constraint}) 
with experiments in Section \ref{subsec:results}.
Besides their empirical performance, the \baseline constraint specification (Def. \ref{def:vanilla_analyst_constraint}) requires all data analysts to be registered in the system before the provenance table is set up.
The additive Gaussian approach (with specification in Def. \ref{def:agm_analyst_constraint}), however, allows for the inclusion of new data analysts at a later time.

\subsubsection{Setting View Constraints for Dynamic Budget Allocation}

Existing privacy budget allocator \cite{kotsogiannis2019privatesql} adopts a static splitting strategy such that the per view privacy budget is equal or proportional to their sensitivity.
\oursystem subsumes theirs by, similarly, setting the view constraints to be equal or proportional to the sensitivity of each view, i.e., $\{\psi_{V_j}|V_j\in \mc{V} \} = \{ \lambda_{V_j} \cdot \epsilon/\hat{\Delta}_{V_j} \}_{\forall V_j \in \mc{V}}$, where $\hat{\Delta}_{V_j}$ is the upper bound of the sensitivity of the view $V_j$.
We therefore propose the following water-filling view constraint specification for a better view budget allocation.

\begin{definition}[Water-filling View Constraint Setting]
\label{def:view_constraint_water_filling}
The table constraint $\psi_P$ has been set up as a constant (i.e., the overall privacy budget) in the system.
The \admin simply set all view constraints the same as the table constraint, $\psi_{V_j} \coloneqq \psi_P, \forall V_j \in \mc{V}$. 
\end{definition}

With water-filling constraint specification, the provenance constraint checking will solely depend on the table constraint and the analyst constraints.
The overall privacy budget is then dynamically allocated to the views based on analysts' queries.
Compared to existing budget allocation methods on views, our water-filling specification based on the \provT reflects the actual accuracy demands on different views.
Thus it results in avoiding the waste of privacy budget and providing better utility:
i) \oursystem spends fewer budgets on unpopular views, whose consumed budget is less than $\lambda_{V_j} \cdot \epsilon/\hat{\Delta}_{V_j}$;
ii) \oursystem can answer queries whose translated privacy budget $\epsilon > \lambda_{V_j} \cdot \epsilon/\hat{\Delta}_{V_j}$.
In addition, the water-filling specification allows \oursystem to add views over time.

\subsection{Privacy and Fairness Guarantees}
\label{sec:guarantees}

\begin{theorem}[System Privacy Guarantee]
\label{thm:privacy_guarantee}
Given the \provT and its constraint specifications,   
$\Psi=\{\psi_{A_i}|A_i\in \mc{A}\}\cup \{\psi_{V_j}|V_j\in \mc{V} \}\cup \{\psi_P \}$, both mechanisms for \oursystem ensure $[\ldots, (A_i, \psi_{A_i},\delta),\ldots]$-multi-analyst-DP; they also ensure $\min (\psi_{V_j}, \psi_P)$-DP for view $V_j\in \mc{V}$ and overall $\psi_P$-DP if all the data analysts collude. 
\end{theorem}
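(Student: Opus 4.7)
The plan is to decompose the theorem into its three claims and establish each one by combining the constraint enforcement of \textsc{constraintCheck} with the multi-analyst sequential composition (Theorem~\ref{thm:sequential_composition_multi_dp}) and the post-processing property (Theorem~\ref{thm:dp_post_processing}). Throughout, I would use the invariant that every entry $P[A_i, V_j]$ in the \provT reflects the cumulative privacy cost charged for outputs directed at $A_i$ that were computed using $V_j$, and that \textsc{constraintCheck} rejects any query that would cause the relevant row, column, or table sum to exceed its specified constraint in $\Psi$.

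\emph{Per-analyst guarantee.} First, fix an arbitrary analyst $A_i$ and consider the entire transcript of outputs returned to $A_i$. Each accepted query produces either a freshly noised synopsis (\baseline) or a local synopsis $V_{A_i}^{\epsilon'}$ generated from the current global synopsis via additive GM; in both cases the sub-mechanism whose output goes to $A_i$ is an analytic Gaussian mechanism with a known DP cost $(\epsilon_i,\delta)$. Multi-analyst DP composition then implies that the cumulative cost to $A_i$ equals the sum of the $\epsilon_i$'s charged to row $i$ of $P$, which the row constraint bounds by $\psi_{A_i}$. This yields the $[\ldots,(A_i,\psi_{A_i},\delta),\ldots]$-multi-analyst-DP guarantee.

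\emph{Per-view guarantee.} Second, for a fixed view $V_j$, I would bound the privacy cost of the joint transcript of everything the system ever produces using $V_j$. For the \baseline approach each release draws independent noise, so sequential composition together with the column-sum check gives $\sum_i P[A_i,V_j]\le \psi_{V_j}$. For the additive Gaussian approach the key observation is that every local synopsis $V_{A_i}^{\epsilon'}$ is obtained from the current global synopsis $V_j^{\epsilon}$ by adding independent $\mc{N}(0,\sigma'^2-\sigma^2)$ noise, so by post-processing the full transcript is a post-processing of the sequence of global synopses of $V_j$; since the column-$\max$ check in Algorithm~\ref{alg:additive_gaussian} guards exactly the cumulative DP cost of these global updates, the bound $\psi_{V_j}$ still holds. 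Combining with the table constraint gives the view-level guarantee $\min(\psi_{V_j},\psi_P)$.

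\emph{Collusion guarantee and main obstacle.} Finally, if all analysts collude they pool every output ever returned. For the \baseline approach this transcript is bounded by the sum of all entries of $P$, which the table check caps at $\psi_P$ via sequential composition. For the additive Gaussian approach, the pooled transcript is a post-processing of the sequence of all global synopses across all views; their total cost is precisely what the additive Gaussian \textsc{constraintCheck} controls by compositing along the row of column maxima against $\psi_P$, giving $\psi_P$-DP overall. I expect the subtle step to be this last case together with the per-view argument for additive GM: one must verify that the correlated noise structure does not leak any extra information, i.e., that each local synopsis (and each updated global synopsis obtained from the linear combination in Equation~\eqref{equ:view_combine}) is genuinely a post-processing of the sequence of fresh Gaussian releases from $D$. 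This is handled by an induction on time stamps, showing that at every step the analyst-visible outputs lie in the sigma-algebra generated by the fresh global noises accounted for in $P$, so that Theorem~\ref{thm:dp_post_processing} and Theorem~\ref{thm:sequential_composition_multi_dp} can be invoked cleanly.
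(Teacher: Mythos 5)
Your proposal is correct and follows essentially the same route as the paper's proof: it analyzes the \baseline and additive Gaussian mechanisms separately, derives the per-analyst bound from the row constraint check, the per-view bound from the column constraint (summed entries for \baseline, column maxima plus the post-processing structure of additive GM for the other), and the collusion bound from the table constraint, invoking Theorem~\ref{thm:sequential_composition_multi_dp}, Theorem~\ref{thm:dp_post_processing}, and the additive GM privacy guarantee exactly as the paper does. Your explicit induction showing that all analyst-visible outputs are post-processings of the accounted fresh global releases is a slightly more careful spelling-out of a step the paper delegates to Theorem~\ref{thm:additiveGM_privacy} and the view-combination lemma, but it is not a different argument.
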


With the provenance table, \oursystem can achieve proportional fairness when analysts submit a sufficient number of queries, stated as in the following theorem.
\ifarxiv
Both proofs are deferred to appendices.
\fi

\begin{theorem}[System Fairness Guarantee]
\label{thm:fairness_guarantee}
Given the \provT $P$ and the described approaches of setting analyst constraints, both mechanisms achieve proportional fairness, when the data analysts finish consuming their assigned privacy budget.
\end{theorem}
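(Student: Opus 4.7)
The plan is to choose an explicit linear function $\mu$ in Definition~\ref{def:prop_fairness} and then argue that, under either constraint specification, the ratio $Err_i(M,A_i,Q)/\mu(l_i)$ takes a single value that is independent of the analyst index once every $A_i$ has exhausted their assigned row budget $\psi_{A_i}$. Since $Err_i$ is defined as analyst $A_i$'s total privacy budget consumption, the phrase ``finishing consuming the assigned budget'' translates to $Err_i = \psi_{A_i}$, where $\psi_{A_i}$ is the row constraint in the \provT and is enforced by \textsc{constraintCheck}. I take $\mu(l) = l$, which is linear; then the proportional-fairness inequality reduces to showing that $\psi_{A_i}/l_i$ is the same for every analyst.

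For the \baseline approach I would substitute Definition~\ref{def:vanilla_analyst_constraint}:
$$\frac{Err_i}{\mu(l_i)} \;=\; \frac{\psi_{A_i}}{l_i} \;=\; \frac{1}{l_i}\cdot \frac{l_i}{\sum_{j\in[n]} l_j}\,\psi_P \;=\; \frac{\psi_P}{\sum_{j\in[n]} l_j},$$
which is a constant independent of $i$. Hence for any pair $A_i,A_j$ with $l_i\le l_j$ the fairness inequality holds (in fact with equality), proving the claim for this approach. For the additive Gaussian approach I would apply the analogous computation with Definition~\ref{def:agm_analyst_constraint} and get
$$\frac{Err_i}{\mu(l_i)} \;=\; \frac{\psi_{A_i}}{l_i} \;=\; \frac{\psi_P}{l_{max}},$$
again a constant independent of $i$, so proportional fairness follows the same way.

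The only point that deserves explanation---but is not a genuine obstacle---is that under the additive Gaussian approach the per-cell accounting uses $P[A_i,V] \gets \min(\epsilon,\, P[A_i,V]+\epsilon_i)$ rather than a raw sum, and the row-constraint check in Algorithm~\ref{alg:additive_gaussian} (line~\ref{func:aGM_constraint}) operates on the composited row total. What matters for the fairness statement is only the value of this row total at the moment it reaches $\psi_{A_i}$: by construction of \textsc{constraintCheck}, any further query that would push it past $\psi_{A_i}$ is rejected, so an analyst who has ``finished consuming'' satisfies $Err_i = \psi_{A_i}$ (up to the translation precision $p$, which is absorbed in the theorem statement). With this remark the computations above go through unchanged for both mechanisms, completing the proposed argument.
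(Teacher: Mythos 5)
Your proposal is correct and follows essentially the same route as the paper's own proof: identify $Err_i$ with the row constraint $\psi_{A_i}$ once the analyst's budget is exhausted, then substitute Definitions~\ref{def:vanilla_analyst_constraint} and~\ref{def:agm_analyst_constraint} to see that $\psi_{A_i}$ is linear in $l_i$, so the ratio is constant across analysts and the fairness inequality holds (with equality). Your added remark on the $\min$-based accounting in the additive Gaussian approach is a harmless clarification beyond what the paper states, but the core argument is the same.
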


\ifconf
Due to space constraints, proofs in this section are deferred to the full version~\cite{full_paper}.
\fi

\section{Experimental Evaluation}
\label{sec:evaluation}

In this section, 
we compare \oursystem with baseline systems and conduct an ablation investigation for a better understanding of different components in \oursystem. 
Our goal is to show that \oursystem can improve existing query answering systems for multi-analyst in terms of the number of queries answered and fairness.

\subsection{Experiment Setup}
We implement \oursystem in Scala with PostgreSQL for the database system, and 
deploy it as a middle-ware between Chorus \cite{johnson2020chorus} and multiple analysts. 
\revise{\linelabel{line:delta_exp}Our implementation follows existing practice \cite{johnson2020chorus} to set all $\delta$ parameter to be the same and small (e.g., 1e-9) for all queries.
The $\delta$ parameters in the privacy constraints (column/row/table) in the privacy provenance table are set to be capped by the inverse of the dataset size.}
\ifconf
Other details of experiments can be found in the full version~\cite{full_paper}.
\fi

\subsubsection{Baselines}
Since we build on top of Chorus~\cite{johnson2020chorus}, we develop a number of baseline approaches for the multi-analyst framework using Chorus as well for fair comparisons.

\squishlist
\item  \textit{Chorus \cite{johnson2020chorus}:} This baseline is the plain Chorus, which uses GM and makes no distinction between data analysts and uses no views. The system only sets the overall privacy budget. %

\item \textit{\oursystem minus Cached Views (ChorusP):} This baseline enables privacy provenance tracking for each data analyst but does not store any synopses. 
We use Def. \ref{def:vanilla_analyst_constraint} to set up analyst constraints and Def. \ref{def:view_constraint_water_filling} for view constraints.

\item \textit{\oursystem minus Additive GM (Vanilla):} We equip Chorus with \provT and the cached views, but with our vanilla approach to update and manage the provenance table and the views. 
The \provT is configured the same as in ChorusP.

\item \textit{Simulating PrivateSQL  (sPrivateSQL)\cite{kotsogiannis2019privatesql}:} 
We simulate PrivateSQL by generating the static DP synopses at first. The budget allocated to each view is proportional to the view sensitivities~\cite{kotsogiannis2019privatesql}. Incoming queries that cannot be answered accurately with these synopses  will be rejected.
\squishend

\begin{figure*}[t]
    \centering
    \includegraphics[width=\textwidth]{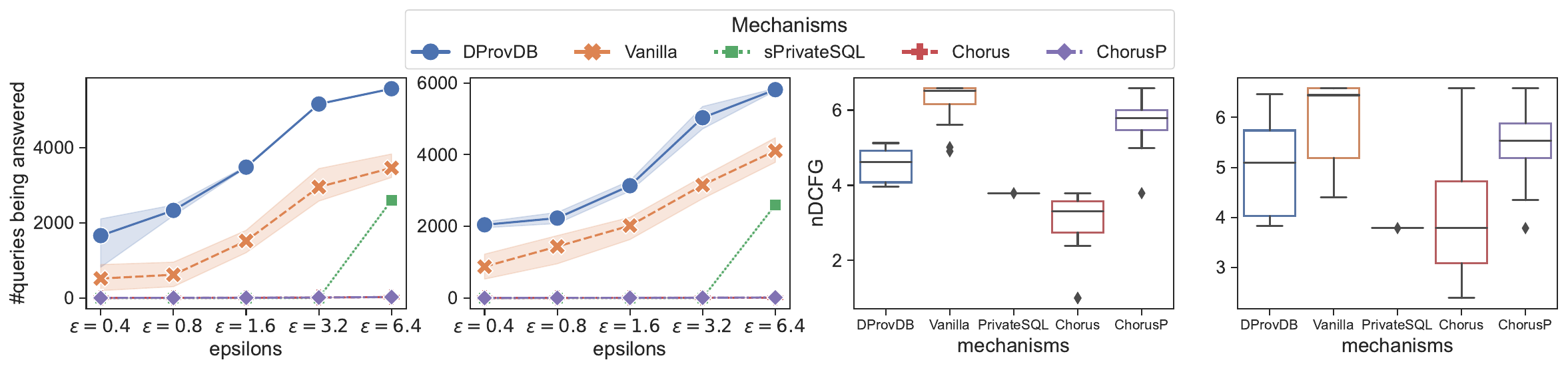}
    \caption{End-to-end Comparison (\emph{RRQ} task, over \emph{Adult} dataset), from left to right: a) utility v.s. overall budget, round-robin; b) utility v.s. overall budget, randomized; c) fairness against baselines, round-robin; d) fairness against baselines, randomized. 
    }
    \label{fig:end_to_end_RRQ_adult}
\end{figure*}

\subsubsection{Datasets and Use Cases} We use the Adult dataset \cite{uci_dataset} (a demographic data of 15 attributes and 45,224 rows) and the TPC-H dataset (a synthetic data of 1GB)~\cite{tpch_dataset} for experiments.
We consider the following use cases. %

\squishlist
\item \textit{Randomized range queries (RRQ):}
We randomly generate 4,000 range queries \emph{per analyst}, each with one attribute randomly selected with bias.
Each query has the range specification $[s, s+o]$ where $s$ and the offset $o$ are drawn from a normal distribution. 
We design two types of query sequences from the data analysts: 
a) \textbf{round-robin}, where the analysts take turns to ask queries; 
b) \textbf{random}, where a data analyst is randomly selected each time. 

\item \textit{Breadth-first search (BFS) tasks:} Each data analyst explores a dataset by traversing a decomposition tree of the cross product over the selected attributes, aiming to find the (sub-)regions with underrepresented records. That is, the data analyst traverses the domain and terminates only if the returned noisy count is within a specified threshold range.

\squishend
To answer the queries, we generate one histogram view on each attribute.
We use two analysts with privileges 1 and 4 by default setting and also study the effect of involving more analysts.

\subsubsection{Evaluation Metrics}
We use four metrics for evaluation.

\squishlist
\item \textit{Number of queries being answered}: We report the number of queries that can be answered by the system when
no more queries can be answered as the utility metric to the system. %

\item \textit{Cumulative privacy budget}: 
For BFS tasks that have fixed workloads, it is possible that the budget is not used up when the tasks are complete. Therefore, we report the total cumulative budget consumed by all data analysts when the tasks end.  

\item \textit{Normalized discounted cumulative fairness gain (nDCFG)}: We coined an empirical fairness measure here. First, we introduce DCFG measure for a mechanism $\mc{M}$ as
$\text{DCFG}_\mc{M} = \sum^n_{i=1} \frac{| Q_{A_i} |}{\log_2 (\frac{1}{l_i} +1)}$, where 
$l_i$ is the privilege level of analyst $A_i$ and $| Q_{A_i} |$ is the total number of queries of $A_i$ being answered. Then nDCFG is DCFG normalized by the total number of queries answered. 

\item \textit{Runtime:} We measures the run time in milliseconds.

\squishend

We repeat each experiment 4 times using different random seeds. 

\begin{figure}[t]
    \centering
    \includegraphics[width=\linewidth]{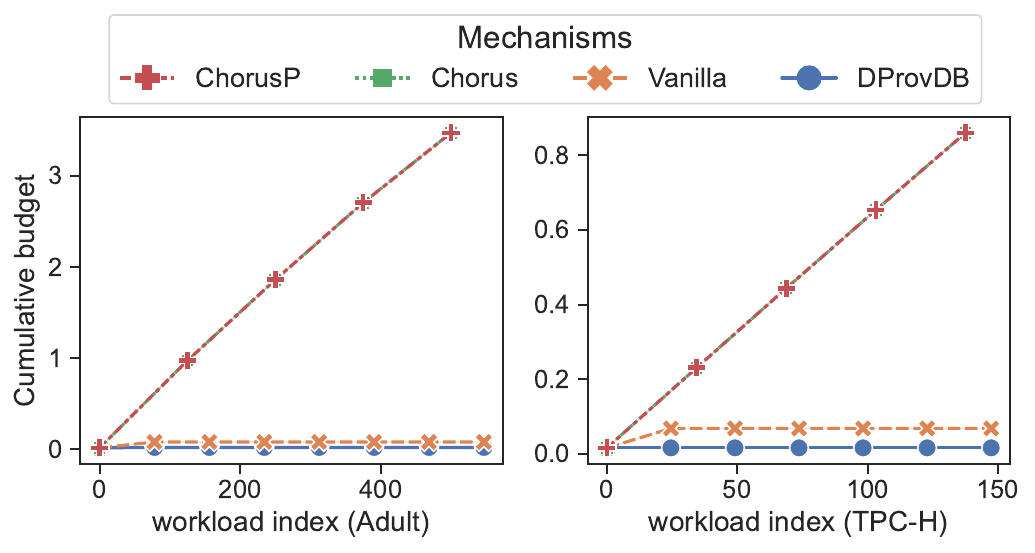}
    \caption{End-to-end Comparison (BFS task): Cumulative privacy budget assumption v.s. workload indices. Left: over \textit{Adult} dataset; Right: over \textit{TPC-H} dataset. 
    }
    \label{fig:e2e_bfs}
\end{figure}

\subsection{Empirical Results}
\label{subsec:results}

We initiate experiments on the default settings of \oursystem and the baseline systems for comparison, and then study the effect of modifying the components of \oursystem.

\subsubsection{End-to-end Comparison}
\label{subsec:exp_end_to_end}

This comparison is with the setup of the
analyst constraints in line with Def. \ref{def:agm_analyst_constraint} for \oursystem, and with Def. \ref{def:vanilla_analyst_constraint} for \baseline approach. We brief our findings.

\stitle{Results of RRQ task.}
We present Fig. \ref{fig:end_to_end_RRQ_adult} for this experiment.
We fix the entire query workload and run the query processing on the five mechanisms.
\oursystem outperforms all competing systems, in both round-robin or random use cases.
Chorus and ChorusP can answer very few queries because their privacy budgets are depleted quickly.
Interestingly, our simulated PrivateSQL can answer a number of queries which are comparable to the \baseline appraoch when $\epsilon=6.4$, but answers a limited number of queries under higher privacy regimes.
This is intuitive because if one statically fairly pre-allocates the privacy budget to each view when the overall budget is limited, then possibly every synopsis can hardly answer queries accurately.
One can also notice the fairness score of ChorusP is significantly higher than Chorus, meaning the enforcement of \provT can help achieve fairness.

\stitle{Results of BFS task.}
An end-to-end comparison of the systems on the BFS tasks is depicted in Fig. \ref{fig:e2e_bfs}.
Both \oursystem and \baseline can complete executing the query workload with near constant budget consumption, while Chorus(P) spends privacy budget linearly with the workload size.
\oursystem saves even more budget compared to \baseline approach, based on results on the TPC-H dataset.

\stitle{Runtime performance.}
Table \ref{tab:tpc_run_time_performance} summarizes the runtime of \oursystem and all baselines. \oursystem and sPrivateSQL %
use a rather large amount of time to set up views, however, answering large queries based on views saves more time on average compared to Chorus-based mechanisms.
Recall that aGM requires solving optimization problems, where empirical results show incurred time overhead is only less than 2 ms per query.

We draw the same conclusion for our RRQ experiments and runtime performance test on the other dataset. 
\ifconf
Due to page constraints, we present the results in the full version~\cite{full_paper}.
\fi
\ifarxiv
We defer the results to the appendices.
\fi

\begin{figure*}[t]
    \centering
    \includegraphics[width=\textwidth]{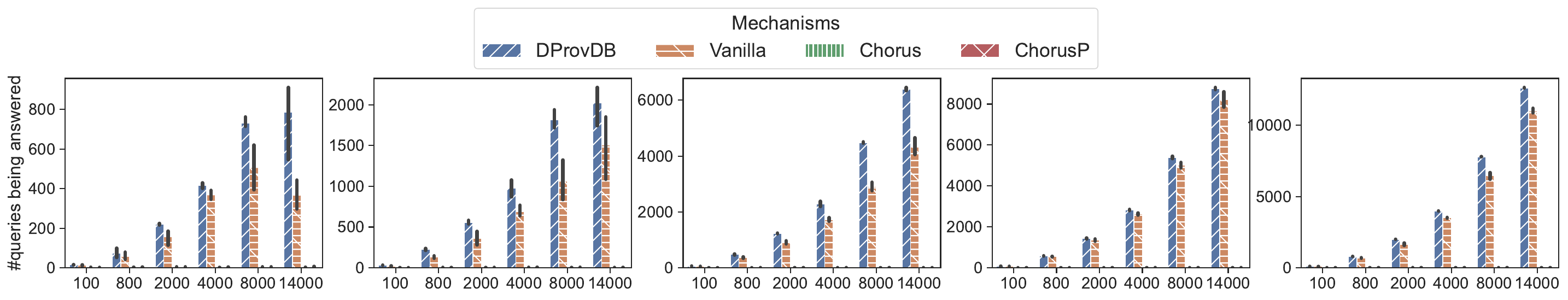}
    \caption{Component Comparison (\emph{RRQ} task, over \emph{Adult} dataset): Enabling Cached Synopses. Utility v.s. the size of query workload (round-robin)From left to right: $\epsilon = \{0.4, 0.8, 1.6, 3.2, 6.4\}$. 
        }
    \label{fig:caches_RRQ_adult}
\end{figure*}

\begin{figure}[t]
    \centering
    \includegraphics[width=\linewidth]{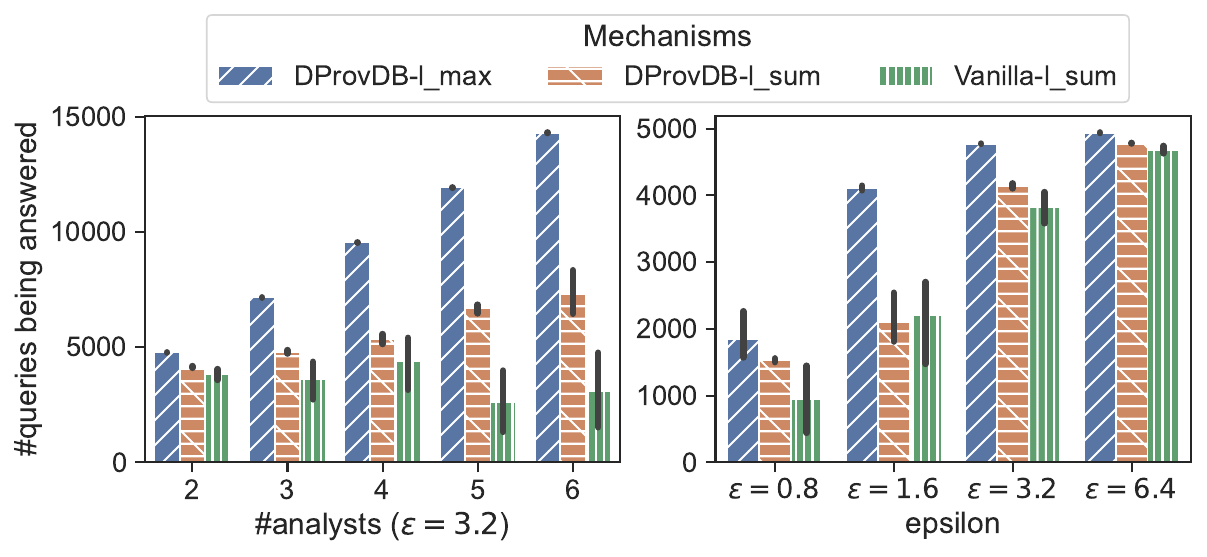}
    \caption{Component Comparison (\emph{RRQ} task, over \emph{Adult} dataset): Additive GM v.s. \Baseline. Left: utility v.s. \#analysts, round-robin; Right: utility v.s. overall budgets, round-robin.
    }
    \label{fig:agm_RRQ_adult}
\end{figure}

\begin{table}[tbp]
    \centering
    \caption{Runtime Performance Comparison over Different Mechanisms on TPC-H Dataset (running on a Linux server with 64 GB DDR4 RAM and AMD Ryzen 5 3600 CPU, measured in milliseconds)}
    \resizebox{1.0\linewidth}{!}{%
    \begin{tabular}{ccccc}
    \toprule \hline
    Systems  & Setup Time   & Running Time & No. of Queries & Per Query Perf  \\ \hline
    \oursystem     &  20388.65 ms    &  297.30 ms  & 86.0  & 3.46 ms \\ 
    Vanilla &  20388.65 ms    & 118.04 ms  & 86.0   & 1.37 ms  \\ 
    sPrivateSQL &  20388.65 ms &  166.51 ms & 86.0 & 1.94 ms \\ 
    Chorus &  N/A   &  7380.47 ms  & 62.0   & 119.04 ms \\ 
    ChorusP & N/A   &  7478.69 ms  & 62.0  & 120.62 ms \\ \hline
    \bottomrule
\end{tabular}
    }
    \label{tab:tpc_run_time_performance}
\end{table}

\subsubsection{Component Comparison}
\label{subsec:exp_comp}

We consider three components in \oursystem to evaluate separately.

\smallskip

\stitle{Cached Synopses.}
Given the same overall budget, mechanisms leveraging cached synopses outperform those without caches in terms of utility when the size of the query workload increases.
This phenomenon can be observed for all budget settings, as shown in Fig. \ref{fig:caches_RRQ_adult}.
Due to the use of cached synopses, \oursystem can potentially answer more queries if the incoming queries hit the caches.
Thus, the number of queries being answered would increase with the increasing size of the workload, given the fixed overall budget.

\stitle{Additive GM v.s. \Baseline.}
Given the same overall budget, additive GM outperforms the \baseline mechanism on utility. 
The utility gap increases with the increasing number of data analysts presented in the system.
The empirical results are presented in Fig. \ref{fig:agm_RRQ_adult}.
When there are 2 analysts in the system, additive GM can only gain a marginal advantage over \baseline approach; when the number of data analysts increases to 6, additive GM can answer around $\sim$2-4x more queries than \baseline.
We also compare different settings of analyst constraints among different numbers of analysts and different overall system budgets (with 2 analysts).
It turns out that additive GM with the setting in Def. \ref{def:agm_analyst_constraint} (DProvDB-l\_max), is the best one, outperforming the setting from Def. \ref{def:vanilla_analyst_constraint} (DProvDB-l\_sum, Vanilla-l\_sum) all the time for different epsilons and with $\sim$4x more queries being answered when \#analysts=6.

\stitle{Constraint Configuration.}
If we allow an expansion parameter $\tau \geq 1$ on setting the analyst constraints (i.e., overselling privacy budgets than the constraint defined in Def.~\ref{def:agm_analyst_constraint}), we can obtain higher system utility by sacrificing fairness, as shown in 
 Fig. \ref{fig:fairness_RRQ_adult}.
With 2 data analysts, the number of total queries being answered by additive GM is slightly increased when we gradually set a larger analyst constraint expansion rate.
Under a low privacy regime (viz., $\epsilon=3.2$), this utility is increased by more than 15\% comparing setting $\tau=1.9$ and $\tau=1.3$;
on the other hand, the fairness score decreases by around 10\%.

This result can be interpretable from a basic economic view, that we argue, as a system-wise public resource, the privacy budget could be \emph{idle resources} when some of the data analysts stop asking queries.
Thus, the \emph{hard privacy constraints} enforced in the system can make some portion of privacy budgets unusable.
Given the constraint expansion, we allow \oursystem to tweak between a fairness and utility trade-off, while the overall privacy is still guaranteed by the table constraint.

\begin{figure}[t]
    \centering
    \includegraphics[width=\linewidth]{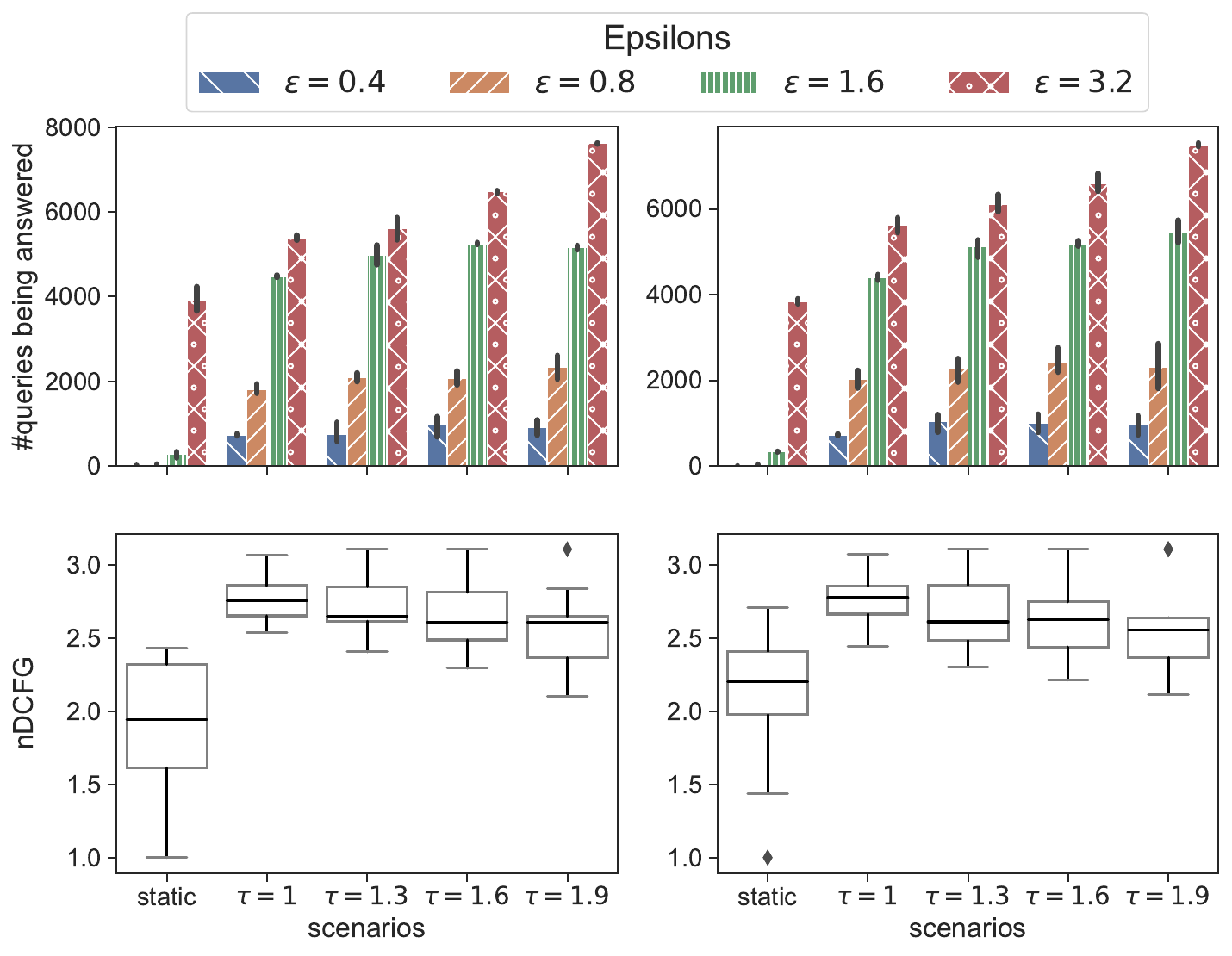}
    \caption{Component Comparison (\emph{RRQ} task, over \emph{Adult} dataset): Constraint Configuration. First row: utility v.s. constraint settings. Second row: fairness v.s. constraint settings. Left: round-robin. Right: randomized. 
    }
    \label{fig:fairness_RRQ_adult}
\end{figure}

\revise{
\stitle{Varying $\delta$ parameter.} \linelabel{line:experiment_delta}
In this experiment, we control the overall privacy constraint $\epsilon=6.4$ and vary the $\delta$ parameter as per query.
We use the BFS workload as in our end-to-end experiment and the results are shown in Fig.~\ref{fig:delta_paper}. 
While varying a small $\delta$ parameter does not much affect the number of queries being answered, observe that increasing $\delta$, \oursystem can slightly answer more queries. This is because, to achieve the same accuracy requirement, the translation module will output a smaller $\epsilon$ when $\delta$ is bigger, which will consume the privacy budget slower.
Note that the overall privacy constraint on $\delta$ should be set no larger than the inverse of the dataset size.
Setting an unreasonably large per-query $\delta$ can limit the total number of queries being answered.

\stitle{Other experiments.} \linelabel{line:experiment_other}
We also run experiments to evaluate \oursystem on data-dependent utility metric, i.e. relative error \cite{XiaoBHG11relativeerror}, and empirically validate the correctness of the accuracy-privacy translation module. 
\ifconf
The plots can be found in the full paper \cite{full_paper}.
\fi
\ifarxiv
In particular, we performed experiments to show that the noise variance $v_q$ (derived from the variance of the noisy synopsis) of the query answer,  according to the translated privacy budget, is always no more than the accuracy requirement $v_i$ submitted by the data analyst.
As shown in Fig. \ref{fig:correctness_of_translation} (a), the difference between the two values, $v_q - v_i$, is always less than 0, and very small for a given BFS query workload (where the accuracy requirement is set to be above $v_i > 10000$).

Furthermore, we consider the following data-dependent utility, namely relative error \cite{XiaoBHG11relativeerror}, which is defined as 
$$\text{Relative Error} = \frac{|\text{True Answer}-\text{Noisy Answer}|}{\max \{\text{True Answer}, c\} },$$
where $c$ is a specified constant to avoid undefined values when the true answer is 0. 

Note that \oursystem does not specifically support analysts to submit queries with \emph{data-dependent} accuracy requirements.
The translated query answer can have a large relative error if the true answer is small or close to zero.
We thereby merely use this utility metric to empirically evaluate the answers of a BFS query workload, as a complementary result [Fig.~\ref{fig:correctness_of_translation} (b)] to the paper. 
\oursystem and the Vanilla approach have a larger relative error than Chorus and ChorusP because they can answer more queries, many of which have a comparatively small true answer -- incurring a large relative error, than Chorus-based methods.
\fi
}

\begin{figure}[t]
    \centering
    \includegraphics[width=\linewidth]{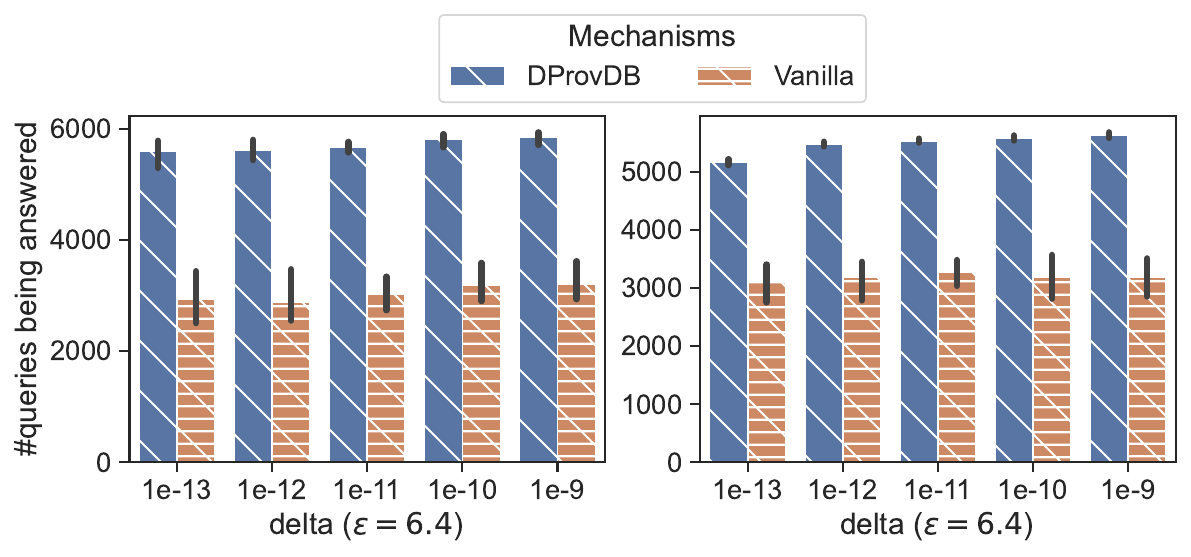}
    \caption{\revise{\#queries being answered vs. varying delta parameter (BFS task,  \textit{Adult}). Left: round-robin, Right: randomized.}}
    \label{fig:delta_paper}
\end{figure}

\ifarxiv
\begin{figure}
    \centering
    \includegraphics[width=0.22\textwidth]{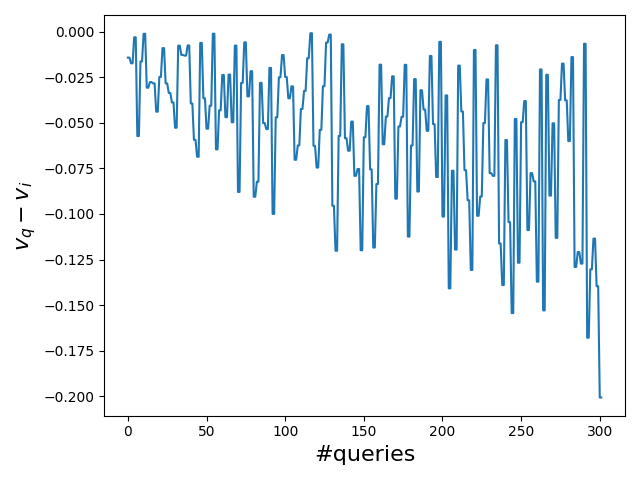}
    \includegraphics[width=0.21\textwidth]{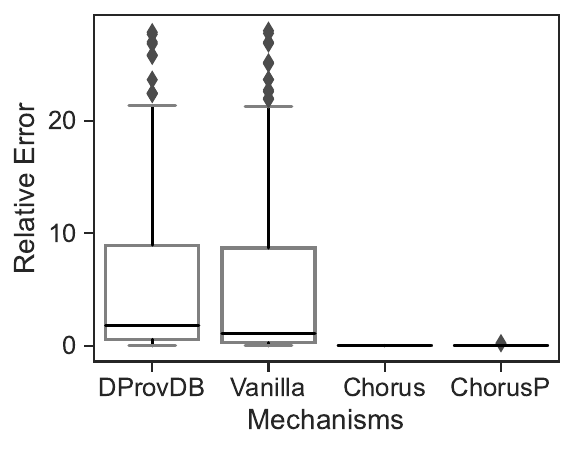}
    \caption{\revise{(a) The cumulative average of $v_q - v_i$ for a BFS query workload (on \textit{Adult} dataset), where $v_i$ represents the submitted accuracy requirement and $v_q$ denotes the noise variance of the query answer. (b) Relative error of processing the BFS query workload (on \textit{Adult} dataset) among different mechanisms.}}
    \label{fig:correctness_of_translation}
\end{figure}
\fi

\ifconf
\section{Threat Model Discussion}
\fi

\ifarxiv
\section{Discussion}
In this section, we discuss a weaker threat model of the multi-analyst corruption assumption, with which additional utility gain is possible.
We also discuss other strawman solutions toward a multi-analyst DP system.
\fi

\label{sec:discussion_new}

\ifarxiv
\subsection{Relaxation of Multi-analyst Threat Model}
\fi

So far we have studied that all data analysts can collude.
A more practical setting is, perhaps, considering a subset of data analysts that are compromised by the adversary.
This setting is common to see in the multi-party computation research community \cite{dwork2006our} (\emph{a.k.a} active security \cite{damgaard2007scalable,beaver1989multiparty}), where $t$ out of $n$ participating parties are assumed to be corrupted.

\begin{definition}[$(t, n)$-compromised Multi-analyst Setting]
    We say a multi-analyst setting is $(t, n)$-compromised, if there exist $n$ data analysts where at most $t$ of them are  \textbf{malicious} (meaning they can collude in submitting queries and sharing the answers).
\end{definition}

\ifarxiv
The $(t ,n)$-compromised multi-analyst setting makes weaker assumptions on the attackers. 
Under this setting, the privacy loss is upper bounded by $(\sum_t \epsilon_i, \sum_t \delta_i)$, which is the summation over $t$ largest privacy budgets.
However, we cannot do better than the current \oursystem algorithms with this weaker setting under worst-case privacy assumption.
\fi

\begin{theorem}[Hardness on Worst-case Privacy]
Given a mechanism $\mc{M}$ which is $[\ldots, (A_i, \epsilon_i, \delta_i), \dots]$-multi-analyst-DP, the worst-case privacy loss under $(t, n)$-compromised multi-analyst DP is lower bounded by $(\max \epsilon_i, \max \delta_i)$, which is the same as under the all-compromisation setting.
\end{theorem}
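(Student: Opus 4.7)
The plan is to reduce the $(t,n)$-compromised setting to a single-analyst corruption and invoke the same mechanics as the trivial lower bound stated in the earlier Compromisation Lower/Upper Bound theorem. The key observation is that the lower bound $(\max \epsilon_i, \max \delta_i)$ in the all-compromisation model arises from corrupting a single analyst --- specifically, the one with the largest per-analyst privacy parameters --- so any compromisation model that allows at least one corruption inherits the same floor.

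First, I would fix the adversary strategy: since $t \geq 1$, the adversary is allowed to corrupt any subset of analysts of size up to $t$. Let $i^\star \in \arg\max_i \epsilon_i$ (breaking ties to simultaneously maximize $\delta_i$ where possible, or treating the two maxima independently as in the earlier trivial bound). Consider the adversary that corrupts exactly the singleton $\{A_{i^\star}\}$. The adversary's view is precisely $O_{i^\star}$, i.e., the $i^\star$-th component of $\mc{M}(D)$. Any privacy guarantee against such an adversary is therefore a guarantee on the marginal distribution of $O_{i^\star}$ under neighboring databases $(D,D')$.

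Second, I would instantiate the matching hard instance. To show a genuine \emph{lower bound} and not merely an upper bound on what a mechanism promises, I would exhibit a mechanism $\mc{M}$ satisfying $[(A_1,\epsilon_1,\delta_1),\ldots,(A_n,\epsilon_n,\delta_n)]$-multi-analyst-DP together with neighboring databases $(D,D')$ and an event $O \subseteq \mc{O}_{i^\star}$ for which
\[
\Pr[\mc{M}(D) \in O] = e^{\epsilon_{i^\star}} \Pr[\mc{M}(D') \in O] + \delta_{i^\star}.
\]
The standard Gaussian or Laplace mechanism calibrated to $(\epsilon_{i^\star},\delta_{i^\star})$ on a counting query, released only to $A_{i^\star}$, suffices --- such tight examples are well known and are what justified the analogous statement in the all-compromisation trivial bound. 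This witnesses that the worst-case privacy loss against the $(t,n)$-compromised adversary cannot be smaller than $(\max \epsilon_i, \max \delta_i)$.

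Third, I would conclude by observing that this lower bound coincides exactly with the lower bound for the all-compromisation setting stated earlier, so weakening the threat model from ``all colluding'' to ``at most $t$ colluding'' yields no reduction in the worst-case privacy floor --- even though it does reduce the worst-case \emph{upper} bound (from $\sum_{i=1}^n \epsilon_i$ down to the sum of the $t$ largest). The main subtlety, and the step I would be most careful about, is separating the two senses of ``privacy loss'' at play: the upper bound guaranteed by the mechanism (which does shrink when fewer analysts collude) versus the lower bound imposed by the utility promised to the single most-privileged analyst (which does not). Once this is disentangled, the proof is essentially immediate from the singleton-corruption reduction combined with tightness of per-analyst DP on any standard noise-addition mechanism.
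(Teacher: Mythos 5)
Your proposal is correct and follows essentially the same route as the paper's proof sketch: the worst-case coalition of size $t\ge 1$ contains the analyst with the largest budget, so the coalition's privacy loss floor is already $(\max \epsilon_i, \max \delta_i)$, matching the all-compromisation lower bound. Your additional step of exhibiting a tight per-analyst mechanism as a witness only makes explicit what the paper's sketch leaves implicit, so the two arguments coincide in substance.
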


\begin{proof}[Proof Sketch]
Under the worst-case assumption, the analyst with the largest privacy budget $(\max \epsilon_i, \max \delta_i)$ is sampled within the $t$ compromised data analysts.
Then it is natural to see that the lower bound of the privacy loss is $(\max \epsilon_i, \max \delta_i)$.
\end{proof}

At first glance, the relaxation of the $(t, n)$-compromisation does not provide us with better bounds.
A second look, however, suggests the additional trust we put in the data analyst averages the privacy loss in compromisation cases.
Therefore, with this relaxed privacy assumption, it is possible to design mechanisms to achieve better utility using policy-driven privacy.

\stitle{Policies for multi-analyst.}
Blowfish privacy \cite{he2014blowfish} specifies different levels of protection over sensitive information in the curated database.
In the spirit of Blowfish, we can use policies to specify different levels of trust in data analysts using \oursystem.

\begin{definition}[$(t, n)$-Analysts Corruption Graph]
    Given $n$ data analysts and assuming $(t, n)$-compromised setting, we say an undirected graph $G = (V, E)$ is a $(t, n)$-analysts corruption graph if, 
    \begin{itemize}
        \item Each node in the vertex set $v_i \in V$ represents an analyst $A_i$;

        \item An edge is presented in the edge set $\e(v_i, v_j) \in E$ if data analysts $A_i$ and $A_j$ can collude;

        \item Every connected component in $G$ has less than $t$ nodes.
    \end{itemize}
        
\end{definition}

The corruption graph models the prior belief of the policy designer (or DB administrator) to the system users.
Groups of analysts are believed to be not compromised if they are in the disjoint sets of the corruption graph.
Based on this corruption graph, we can specify the analysts' constraints as assigning the overall privacy budget $\psi_P$ to each connected component. 

\begin{theorem}
\label{thm:corruption_graph}
    There exist mechanisms with $(t, n)$-multi-analyst DP perform at least as well as with multi-analyst DP.
\end{theorem}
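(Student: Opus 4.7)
The plan is to prove the existence claim by explicit construction: given any mechanism $\mc{M}$ satisfying multi-analyst DP (under the worst-case all-compromisation assumption), I would build a companion mechanism $\mc{M}'$ tailored to the corruption graph $G$ and show that it (i) satisfies $(t,n)$-multi-analyst DP and (ii) answers at least as many queries (or more broadly, achieves at least the same utility per analyst) as $\mc{M}$.

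First, I would set up the construction. Let $G = (V, E)$ be the $(t, n)$-analyst corruption graph, and let $C_1, \ldots, C_k$ denote its connected components. For each component $C_\ell$, I instantiate an independent copy $\mc{M}_\ell$ of the base multi-analyst DP mechanism (e.g., the additive Gaussian approach of Section~\ref{sec:low_sensitive_algos}), \emph{each with its own full table-level budget $\psi_P$}, drawing independent fresh randomness. A query from analyst $A_i \in C_\ell$ is routed to $\mc{M}_\ell$ and answered using its local synopses and provenance subtable. This is well defined because by definition of $G$ no edges cross components, so every analyst belongs to exactly one $C_\ell$.

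Next, I would carry out the privacy analysis. The adversary controls at most $t$ analysts, whose vertices form a set $S \subseteq V$; since each component has fewer than $t$ nodes and no edges cross, the analysts in $S$ can only pool answers within a single component (by the assumed semantics of the corruption graph). Restricted to the analysts inside $C_\ell$, mechanism $\mc{M}_\ell$ satisfies the worst-case collusion bound of Theorem~\ref{thm:privacy_guarantee}, namely $\psi_P$-DP. Because the $\mc{M}_\ell$'s use independent randomness and the adversary's view lies entirely inside one component, the overall privacy loss to any $(t, n)$-admissible coalition is bounded by $\psi_P$, giving the desired $(t, n)$-multi-analyst DP guarantee with the same per-analyst budgets $\{\psi_{A_i}\}$ as $\mc{M}$.

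Finally, I would argue utility dominance. Running the base mechanism under the all-compromisation assumption forces one to share a single budget $\psi_P$ across all $n$ analysts (via the table constraint and Definition~\ref{def:agm_analyst_constraint}). In the constructed $\mc{M}'$, each component independently enjoys the full $\psi_P$, so any sequence of queries that $\mc{M}$ would answer can also be answered by $\mc{M}'$ (simulate $\mc{M}$ within each $\mc{M}_\ell$), while $\mc{M}'$ can in addition answer queries that $\mc{M}$ would reject due to cross-component budget contention. This establishes the ``at least as well'' conclusion. The main obstacle I anticipate is making the adversarial observation model precise: one must formally argue that an adversary controlling analysts in different components cannot combine their outputs into a stronger attack, which ultimately reduces to the independence of the randomness used by the different $\mc{M}_\ell$'s together with the graph-theoretic assumption that colluding sets cannot span components.
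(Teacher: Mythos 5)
Your proposal is correct under the paper's modeling assumptions, but it takes a genuinely different (and more ambitious) route than the paper's own proof. The paper proves the literal statement in one line by degradation: ignoring the corruption graph entirely and splitting $\psi_P$ across analysts proportionally to their privilege weights, any multi-analyst-DP mechanism is itself a valid mechanism in the $(t,n)$-compromised setting, so the relaxed model trivially performs at least as well (in fact, exactly as well) --- existence is witnessed by the original mechanism. You instead build a per-component construction that gives each connected component of $G$ its own full budget $\psi_P$ with independent randomness, and then argue both privacy (admissible coalitions are confined to a single component, so their loss is bounded by that component's $\psi_P$) and utility dominance (each component's constraint checks are only looser than in the shared-budget mechanism, so every query the base mechanism accepts is still accepted, and possibly more). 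This is essentially a formalization of the paper's \emph{post}-theorem discussion (the $\lceil k\rceil\cdot\psi_P$ remark), so it proves something stronger and more informative --- the possibility of strict improvement --- at the cost of needing exactly the modeling precision you flag at the end: that coalitions cannot span components and that the component mechanisms' randomness is independent, which is an assumption of the corruption-graph semantics rather than something provable. Your simulation argument for utility is sound for both the vanilla and additive Gaussian instantiations, since per-component consumption (sum or max over a subset of analysts) never exceeds the corresponding quantity over all analysts. In short: the paper buys brevity by proving only the weak "at least as well" claim via the trivial embedding; your construction buys the stronger separation but leans on the threat-model assumptions being taken as given.
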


\ifarxiv
\begin{proof}
    Given a $(t, n)$-analysts corruption graph $G$, we show a na\"ive budget/constraint that makes $(t, n)$-multi-analyst DP degrade to multi-analyst DP.
    Ignoring the graph structure, we split the overall privacy budget $\psi_P$ and assign a portion to each node proportional to the privilege weight on each node.
\end{proof}
\fi

Let $k$ be the number of disjoint connected components in $G$. 
Then we have at most $\lceil k \rceil \cdot \psi_P$ privacy budget to assign to this graph.
Clearly, the mechanisms with $(t, n)$-multi-analyst DP achieve $(\lceil k \rceil \cdot \psi_P)$-DP.
When $n > t$, we have more than 1 connected component, meaning the overall privacy budget we could spend is more than that in the all-compromisation scenario.

\ifarxiv
\subsection{Comparison with Strawman Solutions}

One may argue for various alternative solutions to the multi-analyst query processing problem, as opposed to our proposed system. 
We justify these possible alternatives and compare them with \oursystem.

\stitle{Strawman \#1: Sharing Synthetic Data.} 
Recall that \oursystem generates global and local synopses from which different levels of noise are added with additive GM, and answers analysts' queries based on the local synopses.
We show our proposed algorithm is optimal in using privacy budgets when all data analysts collude.
One possible alternative may be just releasing the global synopses (or generating synthetic data using all privacy budgets) to every data analyst, which also can achieve optimality in the all-compromisation setting.
We note that this solution is $\min \left(\psi_P, (\max \epsilon_i, \max \delta_i)\right)$-DP (same as the overall DP guarantee provided by \oursystem), however, it does not achieve the notion of \emph{Multi-analyst DP} (all data analysts get the same output).

\stitle{Strawman \#2: Pre-calculating Seeded Caches.}
To avoid the cost of running the algorithm in an online manner, one may consider equally splitting the privacy budgets into small ones and using additive GM to pre-compute all the global synopses and local synopses.
This solution saves all synopses and for future queries, it can directly answer by finding the appropriate synopses.
If the space cost is too large, alternatively one may store only the seeds to generate all the synopses from which a synopsis can be quickly created.

This scheme arguably achieves the desired properties of \oursystem, if one ignores the storage overhead incurred by the pre-computed synopses or seeds.
However, for an online query processing system, it is usually unpredictable what queries and accuracy requirements the data analysts would submit to the system. 
This solution focuses on doing most of the calculations offline, which may, first, lose the precision in translating accuracy to privacy, leading to a trade-off between precision and processing time regarding privacy translation.
We show in experiments that the translation only happens when the query does not hit the existing cache (which is not too often), and the per-query translation processing time is negligible.
Second, to pre-compute all the synopses, one needs to pre-allocate privacy budgets to the synopses.
We has shown as well using empirical results that this approach achieves less utility than \oursystem.

\fi

\section{Related Work}
\label{sec:related_work}

Enabling DP in query processing is an important line of research in database systems~\cite{near2021differential}.
Starting from the first end-to-end interactive DP query system \cite{mcsherry2009privacy}, existing work has been focused on generalizing to a larger class of database operators \cite{proserpio2014calibrating,johnson2018towards,dong2021residual,dong2022r2t,amin2022plume}, building a programmable toolbox for experts \cite{johnson2020chorus, zhang2018ektelo}, or providing a user-friendly accuracy-aware interface for data exploration \cite{ge2019apex,lobo2020programming}.
Another line of research investigated means of saving privacy budgets in online query answering, including those based on cached views \cite{kotsogiannis2019privatesql,mazmudar2022cache} and the others based on injecting correlated noise to query results \cite{xiao2022answering,koufogiannis2015gradual,bao2022skellam}.
Privacy provenance (on protected data) is studied in the personalized DP framework \cite{ebadi2015differential} as a means to enforce varying protection on database records, which is dual to our system. 
The multi-analyst scenario is also studied in DP literature~\cite{xiao2021optimizing,knopf2021framework,pujol2021budget,pujol2022multi}. 
Our multi-analyst DP work focuses on the online setting, which differs from the offline setting, where the entire query workload is known in advance~\cite{xiao2021optimizing,knopf2021framework,pujol2021budget}. The closest related work~\cite{pujol2022multi} considers an online setup for multi-analyst DP, but the problem setting differs. The multi-analyst DP in~\cite{pujol2022multi} assumes the data analysts have the incentive to share their budgets to improve the utility of their query answers, but our multi-analyst DP considers data analysts who are obliged under laws/regulations should not share their budget/query responses to each other (e.g., internal data analysts should not share their results with an external third party). Our mechanism ensures that (i) even if these data analysts break the law and collude, the overall privacy loss is still minimized; (ii) if they do not collude, each of the analysts $A_i$ has a privacy loss bounded by $\epsilon_i$ (c.f. our multi-analyst DP, Definition~\ref{def:multi_analyst_dp}). However,~\cite{pujol2022multi} releases more information than the specified $\epsilon_i$ for analyst $A_i$ (as~\cite{pujol2022multi} guarantees DP).
\ifarxiv
Some other DP systems, e.g. Sage \cite{LecuyerSVG019sage} and its follow-ups \cite{LuoPTCGL21scheduling,Pierre2022packing,nicolas2023cohere}, involve multiple applications or end-users, and they care about the budget consumption \cite{LecuyerSVG019sage,nicolas2023cohere} or fairness constraints \cite{LuoPTCGL21scheduling} in such scenarios. Their design objective is orthogonal to ours --- they would like to avoid running out of privacy budget and maximize utility through batched execution over a growing database.
\fi

\revise{\linelabel{line:related_work2}
The idea of adding correlated Gaussian noise has been exploited in existing work \cite{BaoYXD21CGM,LiCLZ12multilevel}.
However, they all solve a simpler version of our problem. 
Li et al. \cite{LiCLZ12multilevel} resort algorithms to release the perturbed results to different users \textit{once} and Bao et al. \cite{BaoYXD21CGM} study the sequential data collection of \textit{one user}. When putting the two dimensions together, \textit{understudied} questions, such as how to properly answer queries to an analyst that the answer to the same query with lower accuracy requirements has already been released to another analyst, arise and are not considered by existing work. Therefore, we propose the provenance-based additive Gaussian approach (Section \ref{subsec:additive_gaussian_mechanism}) to solve these challenges, which is not merely injecting correlated noise into a sequence of queries.
}

\section{Conclusion and Future Work}
\label{sec:conclusion}

We study how the query meta-data or provenance information can assist query processing in multi-analyst DP settings.
We developed a privacy provenance framework for answering online queries, which tracks the privacy loss to each data analyst in the system using a provenance table.
Based on the \provT, we proposed DP mechanisms that leverage the provenance information for noise calibration and built \oursystem to maximize the number of queries that can be answered.
\oursystem can serve as a middle-ware to provide a multi-analyst interface for existing DP query answering systems.
We implemented \oursystem 
and our evaluation shows that \oursystem can significantly improve the number of queries that could be answered and fairness, compared to baseline systems.

\ifconf
\revise{\linelabel{line:open_problems}
This paper, as an initial work, considers approximate DP, a restricted but popular DP guarantee due to the nature of Gaussian properties, but it can be extended to Renyi DP~\cite{mironov2017renyi} or zCDP~\cite{bun2016concentrated} for tighter privacy composition. Future work can also consider other noise distributions, e.g. Skellam~\cite{bao2022skellam}, to support different DP variants, or other utility metrics\linelabel{line:future_utility}, e.g., confidence intervals~\cite{ge2019apex} or relative errors, for accuracy-privacy translation. 
{\linelabel{line:future_high_sensitive}Moreover, while \oursystem can be extended to answer multi-relational queries in the same way as in prior work \cite{kotsogiannis2019privatesql,cai23privlava,Ghazi23mutlti-table} that uses histogram views, processing these highly-sensitive queries optimally requires special attention in accuracy-privacy translation, because (instance) optimal answering of join queries are data-dependent \cite{dong2022r2t,dong2021residual}. We leave this research problem for future work.}

In addition, more research questions may be spawned by a deeper intertwinement between privacy provenance and access/leakage control~\cite{Pappachan2022tattletale}.
For instance, an analyst can submit queries with different privilege levels, which requires a more expressive model for privacy provenance. Another use case is that the user can temporarily delegate his/her privilege to others \cite{WangLC08delegation1,TheimerNT92delegation2}.
One potential approach to resolving the first scenario may be to enable hierarchical privacy accounting over a more fine-grained provenance table, i.e., composite privacy loss as per each privilege level and per analyst, and the privacy loss per each data analyst is still bounded in a similar way.
For the second use case, the system may further allow a ``grant'' operator that records more provenance information -- for example, the privacy budget consumed by a lower-privileged analyst during delegation is accounted to the analyst who grants this delegation.

}
\fi

\ifarxiv
\revise{
While as an initial work, this paper considers a relatively restricted but popular setting in DP literature, we believe our work may open a new research direction of using  provenance information for multi-analyst DP query processing. 
We thereby discuss our ongoing work on \oursystem and envision some potential thrusts in this research area.
\squishlist
\item \textbf{Tight privacy analysis.} In the future, we would like to tighten the privacy analysis when composing the privacy loss of the local synopses generated from correlated global synopses.

\item \textbf{Optimal processing for highly-sensitive queries.} While currently \oursystem can be extended to answer these queries na\"ively (by a truncated and fixed sensitivity bound), instance optimal processing of these queries \cite{dong2022r2t} requires data-dependent algorithms, which is not supported by our current approaches. Our ongoing work includes enabling \oursystem with the ability to optimally answer these queries.

\item \textbf{System utility optimization.} We can also optimize the system utility further by considering a more careful design of the structure of the cached synopses \cite{mazmudar2022cache,Ghazi23mutlti-table}, e.g. cumulative histogram views, making use of the sparsity in the data itself \cite{xiao2010differential}, or using data-dependent views \cite{li2014data}.

\item \linelabel{line:DP_future} \textbf{Other DP settings.} \oursystem considers minimizing the collusion among analysts over time in an online system. The current design enforces approximate DP due to the nature of Gaussian properties. Our ongoing work extends to use Renyi DP or zCDP for privacy composition.
Future work can also consider other noise distributions, e.g. Skellam~\cite{bao2022skellam}, to support different DP variants, or other utility metrics\linelabel{line:future_utility}, e.g., confidence intervals~\cite{ge2019apex} or relative errors, for accuracy-privacy translation\ifarxiv
, or other application domain, e.g., location privacy \cite{yu2022thwarting,he2015dpt}
\fi
. 

\item \linelabel{line:open_problem_ac} \textbf{Other problems with access control settings.} More research questions and works may be spawned by a deeper intertwinement between privacy provenance and access/leakage control \cite{Pappachan2022tattletale}.
For instance, an analyst can submit queries with different privilege levels, which requires a more expressive model for privacy provenance. Another use case is that the user can temporarily delegate his/her privilege to others \cite{WangLC08delegation1,TheimerNT92delegation2}.
One potential approach to resolving the first scenario may be to enable hierarchical privacy accounting over a more fine-grained provenance table, i.e., composite privacy loss as per each privilege level and per analyst, and the privacy loss per each data analyst is still bounded in a similar way.
For the second use case, the system may further allow a ``grant'' operator that records more provenance information -- for example, the privacy budget consumed by a lower-privileged analyst during delegation is accounted to the analyst who grants this delegation.

\squishend
}
\fi

\begin{acks}
This work was supported by NSERC through a Discovery Grant. 
We would like to thank the anonymous reviewers for their detailed comments which helped
to improve the paper during the revision process.
We also thank Runchao Jiang, Semih Salihoğlu, Florian Kerschbaum, Jiayi Chen, Shuran Zheng for helpful conversations or feedback at the early stage of this project.

\end{acks}

\bibliographystyle{ACM-Reference-Format}
\bibliography{contents/ref.bib}

\ifarxiv
\clearpage
\appendix
\section{Other DP Notions}

\ifarxiv
\begin{table}[t]
\caption{Notation Summary}
\label{tab:notation}
\begin{tabular*}{\columnwidth}{r|l}
\toprule \hline
{\em Notation} & {\em Definition}  \\
 \hline \\[-1em]
 $\epsilon, \delta$ & The privacy budget \\
 $\mc{A}, A_i$ & (A set of) Data analyst(s) \\
 $\databasedom, D$ & Database domain, database instance \\
 $q_i, v_i$ & Query and accuracy requirement associated\\
 $\mc{P}$ & The \provT\\
  $\Psi, \psi$ & (A set of) Privacy constraint(s) \\
 $\mc{V}, V$ & (A set of) View(s) \\
 $V^{\epsilon}, V^{\epsilon}_{A_i}$ & Global/Local synopsis \\ \hline

\bottomrule
\end{tabular*}
\end{table}

We summarize the common notations in Table \ref{tab:notation} and some other DP notions that are also included and supported in \oursystem.
\fi

\begin{theorem}[Advanced Composition \cite{kairouz15,KairouzOV17}]
    For any $\epsilon \geq 0$ and $\delta \in (0, 1)$, the $k$-fold composition of $(\epsilon, \delta)$-DP mechanisms satisfies $((k-2i)\epsilon, 1- (1-\delta)^k(1-\delta_i))$-DP, for all $i=\{ 0, 1, \ldots, \lfloor k/2 \rfloor \}$, where $\delta_i = \frac{\sum_{\ell = 0}^{i-1}\binom{k}{\ell}(e^{(k-\ell)\epsilon}-e^{(k-2i+\ell)\epsilon})}{(1+e^{\epsilon})^k}$.
\end{theorem}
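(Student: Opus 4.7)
The plan is to reduce the analysis to a canonical worst-case mechanism and then compute the composition bound directly, following the approach of Kairouz, Oh, and Viswanath. First, I would establish a dominating-mechanism lemma: for any $(\epsilon, \delta)$-DP mechanism $\mc{M}$, the $k$-fold privacy loss distribution of $\mc{M}$ is stochastically dominated by that of a specific $(\epsilon, \delta)$-randomized response mechanism $\tilde{\mc{M}}$. The key observation is that the $(\epsilon, \delta)$-RR can be written as a mixture: with probability $1-\delta$, it behaves as a pure $\epsilon$-DP randomized response with Bernoulli parameters $e^{\epsilon}/(1+e^{\epsilon})$ and $1/(1+e^{\epsilon})$, and with probability $\delta$ it outputs a ``blatant disclosure'' symbol revealing the input. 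This reduction is what allows us to restrict attention to one concrete mechanism.

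Next, I would analyze the $k$-fold composition of the pure $\epsilon$-DP randomized response piece. The privacy loss random variable under $k$-fold composition is then a sum of $k$ i.i.d.\ variables each taking value $+\epsilon$ or $-\epsilon$ with probabilities $p = e^{\epsilon}/(1+e^{\epsilon})$ and $1-p$. To show $((k-2i)\epsilon, \delta_i)$-DP it suffices to bound the hockey-stick divergence at level $e^{(k-2i)\epsilon}$ between the product distributions corresponding to neighboring inputs. Expanding this divergence via the binomial theorem and collecting the terms where exactly $\ell < i$ coordinates disagree, one obtains the partial sum
\begin{equation*}
\delta_i = \frac{1}{(1+e^{\epsilon})^k}\sum_{\ell=0}^{i-1} \binom{k}{\ell}\bigl(e^{(k-\ell)\epsilon} - e^{(k-2i+\ell)\epsilon}\bigr),
\end{equation*}
which is exactly the stated expression. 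The combinatorial identity making the negative terms align with $e^{(k-2i+\ell)\epsilon}$ comes from pairing each configuration with $\ell$ disagreements with its ``reflected'' configuration having $k-\ell$ disagreements.

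Finally, I would stitch together the $\delta$ failure events with the pure-DP bound. Since each of the $k$ mechanisms fails (outputs the disclosure symbol) independently with probability $\delta$, the event that no mechanism fails has probability $(1-\delta)^k$. Conditioning on this event, the pure-DP composition above gives $((k-2i)\epsilon, \delta_i)$-DP, so the conditional non-violation probability is at least $1-\delta_i$. By a union-bound style argument, the overall non-violation probability is at least $(1-\delta)^k(1-\delta_i)$, which yields the stated $\delta$-slack of $1 - (1-\delta)^k(1-\delta_i)$.

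The main obstacle will be the first step, namely showing that the $(\epsilon, \delta)$-randomized response is indeed the worst case among all $(\epsilon, \delta)$-DP mechanisms under composition. Intuitively this should follow from a post-processing argument: any $(\epsilon, \delta)$-DP mechanism can be simulated by applying a (randomized) data-independent post-processing to $\tilde{\mc{M}}$, so by Theorem~\ref{thm:dp_post_processing} applied coordinatewise across the $k$ compositions, the bound for $\tilde{\mc{M}}^{\otimes k}$ transfers to $\mc{M}^{\otimes k}$. Making this simulation fully rigorous (specifying the post-processing kernels and verifying they respect the joint distribution across the $k$-fold product) is the technical crux, whereas the binomial calculation in the second step, while tedious, is routine once the worst-case mechanism is pinned down.
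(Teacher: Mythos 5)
The paper does not prove this theorem—it is imported verbatim from Kairouz--Oh--Viswanath as background—so there is no in-paper proof to compare against; your sketch correctly reconstructs the argument of the cited source: the reduction of any $(\epsilon,\delta)$-DP mechanism to the canonical $(\epsilon,\delta)$-randomized response with a disclosure symbol (the simulation/post-processing lemma, which is indeed the technical crux), the binomial evaluation of the hockey-stick divergence at level $e^{(k-2i)\epsilon}$ giving exactly $\delta_i$, and the mixture accounting that yields the slack $1-(1-\delta)^k(1-\delta_i)$. One small correction: no ``reflection'' pairing is needed for the negative terms—for an outcome with $\ell$ disagreeing coordinates the subtracted quantity is simply the threshold $e^{(k-2i)\epsilon}$ times $P_1^k(x)=e^{\ell\epsilon}/(1+e^{\epsilon})^k$, which directly produces $e^{(k-2i+\ell)\epsilon}$; otherwise the plan is sound, provided the dominating-mechanism lemma is stated so that the post-processing kernels may depend on earlier outputs, which is what handles adaptive composition.
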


\begin{definition}[$(\alpha, \epsilon)$-R\'enyi Differential Privacy \cite{mironov2017renyi}]
    For $\alpha \geq 1$, we say a privacy mechanism $\mc{M}: \mc{D} \rightarrow \mc{O}$ satisfies $(\alpha, \epsilon)$-R\'enyi-differential-privacy (or RDP) if for all pairs of neighbouring databases D and D' and all $O \subseteq \mc{O}$, we have
    $$
    \mathbb{E}_{O \sim \mc{M}(D)}[e^{(\alpha-1)\mc{L}_{D/D'}(O)}]\leq e^{(\alpha-1)\epsilon}.
    $$
\end{definition}

When $\alpha \rightarrow 1$, $(1, \epsilon)$-RDP is defined to be equivalent to $\epsilon$-KL privacy \cite{desfontaines2020sok,van2014renyi}.

\begin{theorem}[R\'enyi DP Composition \cite{mironov2017renyi}]
    Given two mechanisms $\mc{M}_1: \databasedom \rightarrow  \mc{O}_1$ and $\mc{M}_2: \databasedom \rightarrow  \mc{O}_2$, s.t. $\mc{M}_1$ satisfies $(\alpha, \epsilon_1)$-RDP and $\mc{M}_2$ satisfies $(\alpha, \epsilon_2)$-RDP.
The combination of the two mechanisms $\mc{M}_{1, 2}: \databasedom \rightarrow  \mc{O}_1 \times \mc{O}_2$, which is a mapping $\mc{M}_{1, 2}(D) = (\mc{M}_1(D), \mc{M}_2(D))$, is $(\alpha, \epsilon_1+\epsilon_2)$-RDP.
\end{theorem}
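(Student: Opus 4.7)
The plan is to prove this by directly computing the exponential moment of the privacy loss random variable for the combined mechanism and exploiting the independence of $\mc{M}_1$ and $\mc{M}_2$. I would first unpack the stated RDP definition in terms of the privacy loss random variable $\mc{L}_{D/D'}(O) = \log \frac{\Pr[\mc{M}(D)=O]}{\Pr[\mc{M}(D')=O]}$, so that the hypothesis becomes $\mathbb{E}_{O \sim \mc{M}_k(D)}[e^{(\alpha-1)\mc{L}^{(k)}_{D/D'}(O)}]\leq e^{(\alpha-1)\epsilon_k}$ for $k \in \{1,2\}$ and all neighbouring $(D,D')$.

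Next I would write out the privacy loss of the product mechanism $\mc{M}_{1,2}$ at an output $(o_1, o_2)$. Because $\mc{M}_1$ and $\mc{M}_2$ are applied independently to the same input database, the joint density factorizes as $\Pr[\mc{M}_{1,2}(D)=(o_1,o_2)] = \Pr[\mc{M}_1(D)=o_1]\cdot \Pr[\mc{M}_2(D)=o_2]$, and the same factorization holds for $D'$. Taking the log of the ratio then splits additively: $\mc{L}^{(1,2)}_{D/D'}(o_1,o_2) = \mc{L}^{(1)}_{D/D'}(o_1) + \mc{L}^{(2)}_{D/D'}(o_2)$. This is the key structural observation that turns addition of RDP parameters into the multiplicative behaviour of the moment generating function.

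The remaining calculation is to plug this decomposition into the RDP moment definition for $\mc{M}_{1,2}$. Using independence of $O_1 \sim \mc{M}_1(D)$ and $O_2 \sim \mc{M}_2(D)$ under the product distribution, the exponential of a sum becomes a product of independent exponentials, so
\[
\mathbb{E}_{(O_1,O_2)}\!\left[e^{(\alpha-1)\mc{L}^{(1,2)}_{D/D'}(O_1,O_2)}\right] = \mathbb{E}_{O_1}\!\left[e^{(\alpha-1)\mc{L}^{(1)}_{D/D'}(O_1)}\right] \cdot \mathbb{E}_{O_2}\!\left[e^{(\alpha-1)\mc{L}^{(2)}_{D/D'}(O_2)}\right].
\]
Applying the two RDP hypotheses bounds each factor by $e^{(\alpha-1)\epsilon_1}$ and $e^{(\alpha-1)\epsilon_2}$ respectively, giving $e^{(\alpha-1)(\epsilon_1+\epsilon_2)}$, which is precisely the RDP moment bound for parameter $\epsilon_1+\epsilon_2$ at the same order $\alpha$. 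Since this holds for every neighbouring pair $(D,D')$, the combined mechanism satisfies $(\alpha, \epsilon_1+\epsilon_2)$-RDP.

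I do not expect a genuine obstacle here, since the statement is the non-adaptive product composition: both mechanisms see the same database and their outputs are independent. The only point that merits care is making sure the factorization of the joint density, and hence the independence needed for the product of expectations, is invoked at the right step—this is immediate from the definition $\mc{M}_{1,2}(D)=(\mc{M}_1(D),\mc{M}_2(D))$ and does not require an adaptive tower-rule argument. The limiting case $\alpha \to 1$ (KL privacy) follows by the same calculation applied to the ordinary KL divergence, so no separate treatment is needed.
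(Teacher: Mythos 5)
Your argument is correct: for the non-adaptive product mechanism $\mc{M}_{1,2}(D)=(\mc{M}_1(D),\mc{M}_2(D))$ the joint density factorizes, the privacy loss random variable splits additively, and independence turns the exponential moment of the sum into a product of moments, each bounded by the hypothesis. The paper does not prove this statement at all — it is quoted as a known background result from Mironov's work — so there is nothing to compare against; your proof is the standard one, and you are right that no adaptive tower-rule argument is needed for the composition as stated.
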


\begin{theorem}[R\'enyi DP to $(\epsilon, \delta)$-DP \cite{mironov2017renyi}]
    If a mechanism $\mc{M}$ satisfies $(\alpha, \epsilon)$-RDP, it also satisfies $(\epsilon+\frac{\log 1/\delta}{\alpha -1}, \delta)$-DP, for any $\delta \in (0, 1)$.
\end{theorem}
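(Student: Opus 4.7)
The plan is to convert the moment-type guarantee of RDP into the tail-type guarantee of $(\epsilon',\delta)$-DP via a standard Markov/Chernoff argument on the privacy loss random variable. Fix neighbouring databases $D, D'$ and let
\[
Z(o) \;=\; \log \frac{\Pr[\mc{M}(D) = o]}{\Pr[\mc{M}(D') = o]}
\]
denote the privacy loss function, with $Z$ viewed as a random variable where $o \sim \mc{M}(D)$. The RDP hypothesis says precisely that $\mathbb{E}_{o \sim \mc{M}(D)}\!\left[e^{(\alpha-1)Z(o)}\right] \leq e^{(\alpha-1)\epsilon}$.

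First I would use Markov's inequality applied to the nonnegative random variable $e^{(\alpha-1)Z}$: for any threshold $t$,
\[
\Pr_{o \sim \mc{M}(D)}\!\big[Z(o) > t\big] \;=\; \Pr\!\big[e^{(\alpha-1)Z} > e^{(\alpha-1)t}\big] \;\leq\; e^{-(\alpha-1)t}\,\mathbb{E}\!\left[e^{(\alpha-1)Z}\right] \;\leq\; e^{(\alpha-1)(\epsilon-t)}.
\]
Setting $t := \epsilon + \tfrac{\log(1/\delta)}{\alpha-1}$ makes the right-hand side equal to $\delta$. Denote this threshold by $\epsilon'$; so the ``bad'' event $B = \{o : Z(o) > \epsilon'\}$ has probability at most $\delta$ under $\mc{M}(D)$.

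Next I would translate this tail bound into the DP inequality. For any measurable output set $O \subseteq \mc{O}$, split as $O = (O \cap B) \cup (O \setminus B)$. On the good part $O \setminus B$, the definition of $Z$ gives a pointwise ratio bound $\Pr[\mc{M}(D)=o] \leq e^{\epsilon'}\Pr[\mc{M}(D')=o]$, which after integrating yields $\Pr[\mc{M}(D) \in O \setminus B] \leq e^{\epsilon'} \Pr[\mc{M}(D') \in O \setminus B] \leq e^{\epsilon'} \Pr[\mc{M}(D') \in O]$. On the bad part, we simply use $\Pr[\mc{M}(D) \in O \cap B] \leq \Pr[\mc{M}(D) \in B] \leq \delta$. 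Adding the two contributions gives the desired $(\epsilon',\delta)$-DP inequality.

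The only mild obstacle is the $\alpha=1$ edge case (where the stated bound becomes vacuous) and the fact that the argument as written uses densities or pmfs; for general measurable $\mc{O}$ one should phrase $Z$ as a Radon--Nikodym derivative $\log(d\mu_D/d\mu_{D'})$, verify that $\mu_D$ restricted to the complement of the support of $\mu_{D'}$ has measure zero (standard under RDP, since otherwise the Rényi divergence would be infinite), and then carry out the same two-set decomposition. Once that measure-theoretic wrapping is in place, the rest is just Markov plus a one-line algebraic solve for the threshold $\epsilon'$.
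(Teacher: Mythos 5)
The paper states this result purely as imported background (it is Proposition~3 of the cited Mironov reference) and gives no proof of its own, so there is nothing internal to compare against; your argument is the standard and correct derivation --- Markov's inequality applied to $e^{(\alpha-1)Z}$ to isolate a bad event of mass at most $\delta$, followed by the good-set/bad-set decomposition of an arbitrary output event --- and it coincides with the proof in the cited source. The threshold algebra checks out, and the caveats you flag (the vacuous $\alpha\to 1$ limit, and absolute continuity of $\mc{M}(D)$ with respect to $\mc{M}(D')$, which is forced by finiteness of the order-$\alpha$ R\'enyi divergence for $\alpha>1$) are exactly the right ones.
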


\oursystem supports advanced composition and R\'enyi DP composition for privacy accounting techniques, other than the basic (sequential) composition.

\section{Omitted Proofs}

\subsection{Proof of Theorem \ref{thm:sequential_composition_multi_dp}}

\begin{proof}[Proof of Theorem \ref{thm:sequential_composition_multi_dp}]
    Note that by our definition of multi-analyst DP, if $\mc{M}_1$ and $\mc{M}_2$ satisfies the notion of multi-analyst DP, then on each coordinate (i.e., for each data analyst), the mechanism provides DP guarantee according to each data analyst's privacy budget.
    Applying the sequential composition theorem (Theorem \ref{thm:dp_seq_composition}) to each coordinate, we can get this composition upper bound for multi-analyst DP.
\end{proof}

\subsection{Proof of Theorem \ref{thm:additiveGM_privacy}}

\begin{proof}[Proof of Theorem \ref{thm:additiveGM_privacy}]
For each data analyst $A_j$, where $j \in [n]$, its corresponding privacy budget is $(\epsilon_j, \delta)$ and the additive Gaussian mechanism $\mathcal{M}_{aGM}$ returns $r_j$ to $A_j$.
We first prove the additive Gaussian mechanism $\mathcal{M}_{aGM}$ satisfies multi-analyst-DP, that is, we need to show that,
\begin{equation}
\label{eq:proof_aGM}
\Pr[\mathcal{M}_{aGM}(D) = r_j] \leq e^{\epsilon_j} \Pr[\mathcal{M}_{aGM}(D') = r_j] + \delta_j. 
\end{equation}

\noindent
Case 1. If $\epsilon_j = \max \{ \epsilon_1, \epsilon_2, \dots, \epsilon_n \}$, it is not hard to see the noise generation and addition to the query answer are the same as in the analytic Gaussian mechanism, and therefore equation \ref{eq:proof_aGM} holds.

\noindent
Case 2. If $\epsilon_j \neq \max \{ \epsilon_1, \epsilon_2, \dots, \epsilon_n \}$, w.l.o.g., we can assume $\epsilon_i = \max \{ \epsilon_1, \epsilon_2, \dots, \epsilon_n \}$.
If we use analytic Gaussian mechanism to calculate the Gaussian variance, we have,
$$
\Phi_{\mc{N}} \left( \frac{\Delta q}{2 \sigma_i} - \frac{\epsilon_i \sigma_i}{\Delta q} \right)
    - e^{\epsilon_i} \Phi_{\mc{N}} \left( - \frac{\Delta q}{2 \sigma_i} - \frac{\epsilon_i \sigma_i}{\Delta q} \right) 
    \leq \delta
$$
and
$$
\Phi_{\mc{N}} \left( \frac{\Delta q}{2 \sigma_j} - \frac{\epsilon_j \sigma_j}{\Delta q} \right)
    - e^{\epsilon_j} \Phi_{\mc{N}} \left( - \frac{\Delta q}{2 \sigma_j} - \frac{\epsilon_j \sigma_j}{\Delta q} \right) 
    \leq \delta.
$$
Then for the noisy answer returned to data analyst $A_j$, we have $r_j \coloneqq r_i + \eta_j = r + \eta_i + \eta_j$, where $\eta_i \sim \mathcal{N}(0, \sigma_i^2)$ and $\eta_j \sim \mathcal{N}(0, \sigma_j^2 - \sigma_i^2)$.
Since $\eta_i$ and $\eta_j$ are independent random variables drawn from Gaussian distribution, we obtain $\eta_i + \eta_j \sim \mathcal{N}(0, \sigma_j^2)$.
Thus, Equation (\ref{eq:proof_aGM}) holds according to the analytic Gaussian mechanism (Def. \ref{def:analytic_gaussian}).

Then we need to prove $\mathcal{M}_{aGM}$ satisfies $(\max \{ \epsilon_1, \epsilon_2, \dots, \epsilon_n \}, \delta)$-DP.
In the additive Gaussian mechanism, we look at the true query answer (and the data) only once and add noise drawn from $\mathcal{N}(0, \sigma_i^2)$ to it (assuming $\epsilon_i = \max \{ \epsilon_1, \epsilon_2, \dots, \epsilon_n \}$).
According to the post-processing theorem of DP (Theorem \ref{thm:dp_post_processing}), the overall privacy guarantee is bounded by $(\max \{ \epsilon_1, \epsilon_2, \dots, \epsilon_n \}, \delta)$-DP. 
\end{proof}

\subsection{Proof of Theorem \ref{thm:privacy_guarantee}}

\begin{proof}[Proof of Theorem \ref{thm:privacy_guarantee}]
We analyze the privacy guarantee for the \baseline approach and the additive Gaussian approach separately.

\noindent
\emph{\underline{Analysis of the \baseline approach.}}
In \baseline approach, the accuracy requirement is translated to the minimal privacy cost (as in Def. \ref{def:analytic_gaussian_translation}) and recorded in the \provT $P$ by composition tools if the sanity checking passes.
The $[\ldots, (A_i, \psi_{A_i},\delta),\ldots]$-multi-analyst-DP is guaranteed by the analyst constraint checking on $P$.
The privacy loss on each view is either bounded by its own constraint $\psi_{V_j}$ or bounded by the table constraint $\psi_P$ for water-filling specification (Def. \ref{def:view_constraint_water_filling}), so \oursystem guarantees $\min (\psi_{V_j}, \psi_P)$-DP for view $V_j$.
The table constraint $\psi_P$ bounds the overall information disclosure on the private database. Hence, even when all data analysts $A_1, \ldots, A_m$ collude, the \baseline mechanism is $\psi_P$-DP.

\noindent
\emph{\underline{Analysis of the additive Gaussian approach.}}
Recall that in the additive Gaussian approach, the noise added to different analysts' queries that can be answered on the same view is calibrated using additive GM.
By Theorem \ref{thm:additiveGM_privacy}, for each view $V_j$, the mechanism achieves $[\ldots, (A_i, \min(\psi_{A_i}, \psi_{V_j}),\delta),\ldots]$-multi-analyst-DP and $\psi_{V_j}$-DP for collusion.
Composing for all views, the mechanism provides $[\ldots, (A_i, \psi_{A_i},\delta),\ldots]$-multi-analyst-DP and provides $\min(\max \psi_{A_i}, \sum \psi_{V_j}) = \psi_P$-DP guarantee for the protected DB.
Note that additive Gaussian approach use the same view constraint specification as in \baseline approach, therefore, provides the same per view privacy guarantee.
\end{proof}

\subsection{Proof of Theorem \ref{thm:fairness_guarantee}}

\begin{proof}[Proof of Theorem \ref{thm:fairness_guarantee}]
Recall the proportional fairness says that
for system or mechanism $\mc{M}$, $\forall A_i, A_j$ ($i\neq j$), $l_i \leq l_j$, we have
$
\frac{Err_i(M, A_i, Q)}{\mu(l_i)} \leq \frac{Err_j(M, A_j, Q)}{\mu(l_j)}.
$
Note that $Err_i(M, A_i, Q)$ represents the amount of privacy budgets spent by analyst $A_i$.
When a sufficient number of queries are submitted and answered, we have $\lim_{|Q|\to \infty} Err_i(M, A_i, Q) = \psi_{A_i}$.
Then we show that the  inequality holds for both analyst constraint specifications.
    
\noindent
\emph{\underline{Analysis of the \baseline approach.}}
(Def. \ref{def:vanilla_analyst_constraint})
In \baseline approach, the analyst constraint of $A_i$ is set as $\psi_{A_i} = \frac{l_i}{\sum_{j\in [n]} l_j} \psi_P$.
For a given set of analysts and their privilege levels, $\frac{\psi_P}{\sum_{j\in [n]} l_j}$ is a constant, and therefore $\psi_{A_i}$ is a linear function of $l_i$.
Then for $l_i \leq l_j$, $
\frac{Err_i(M, A_i, Q)}{\mu(l_i)} \leq \frac{Err_j(M, A_j, Q)}{\mu(l_j)}.
$ holds for any linear $\mu(\cdot)$.

\noindent
\emph{\underline{Analysis of the additive Gaussian approach.}}
(Def. \ref{def:agm_analyst_constraint})
In the additive Gaussian approach, the analyst constraint of $A_i$ is set as $\psi_{A_i} = \frac{l_i}{l_{max}}\psi_P$.
Note that $l_{max}$ is the maximum privilege in the system, not the maximum privilege of a set of analysts.
Thus $\frac{\psi_P}{l_{max}}$ is also a constant which makes the inequality hold as well.
\end{proof}

\section{Other Experimental Evaluation}
\label{app_exp}

\subsection{Implementation Details}

We implement \oursystem in Scala 2.12.2 and use sbt as the system dependency management tool.
First, we re-implement the analytic Gaussian mechanism \cite{BW18analytic} library in Scala.
Second, we use PostgreSQL as the underlying database platform and utilize the Breeze library\footnote{Breeze \cite{Breeze} is a numerical processing and optimization library for Scala.} for noise calibration and solving the optimization problem in privacy translation (c.f. additive Gaussian approach).
\oursystem works as a middleware between the data analysts and the existing DP SQL query processing system, and provides advanced functionalities such as privacy provenance, query control, query answering over the cached views, and privacy translation.
\oursystem enables additional mechanisms like additive Gaussian mechanism to allow existing DP SQL systems to answer more queries with the same level of overall privacy protection under the multi-analyst DP framework.
To demonstrate this as a proof-of-concept, we build \oursystem over Chorus \cite{johnson2020chorus}, which is an open-source DP query answering system.
That is, \oursystem involves Chorus as a sub-module and uses its built-in functions, such as database connection and management, query rewriting (i.e., we use Chorus queries to construct the histogram DP synopses), and privacy accountant (including the basic composition and the R\'enyi composition).
Other than the algorithms and necessary data structures supported by \oursystem, we build the view management and \provT management modules, and implement a query transformation module that can transform the incoming queries into linear queries over the DP views/synopses.

\stitle{Implementation of the Provenance Table.}
The \provT as a matrix can be sparse.
Similar to the conventional wisdom applied in access control \cite{samarati2000access}, we can store the \provT by row or column to reduce the storage overhead.

\subsection{Other Results on TPC-H Dataset}

We present the missing empirical results on TPC-H dataset, including the end-to-end comparison (Fig. \ref{fig:tpch_e2e}) and the additive GM component analysis (Fig. \ref{fig:agm_tpch}).

\subsection{Other Results on Adult Dataset}

We present the missing run-time performance comparison on Adult dataset in Table~\ref{tab:tpc_run_time_performance}.

\begin{table}[tbp]
    \centering
    \caption{Runtime Performance Comparison over Different Mechanisms on Adult Dataset (running on a Linux server with 64 GB DDR4 RAM and AMD Ryzen 5 3600 CPU, measured in milliseconds, average over 4 runs)}
    \resizebox{1.0\linewidth}{!}{%
    \begin{tabular}{ccccc}
    \toprule \hline
    Systems  & Setup Time   & Running Time & No. of Queries & Per Query Perf  \\ \hline
    \oursystem     &  8051.06 ms    &  541.77 ms  & 294.5  & 1.84 ms \\ 
    Vanilla &  8051.06 ms    & 393.42 ms  & 289.5   & 1.36 ms  \\ 
    sPrivateSQL &  8051.06 ms &  564.68 ms & 295.8 & 1.91 ms \\ 
    Chorus &  N/A   &  3583.47 ms  & 212.0   & 16.90 ms \\ 
    ChorusP & N/A   &  3424.79 ms  & 212.0  & 16.15 ms \\ \hline
    \bottomrule
\end{tabular}
    }
    \label{tab:run_time_performance}
\end{table}

\begin{figure*}
    \centering
    \includegraphics[width=\linewidth]{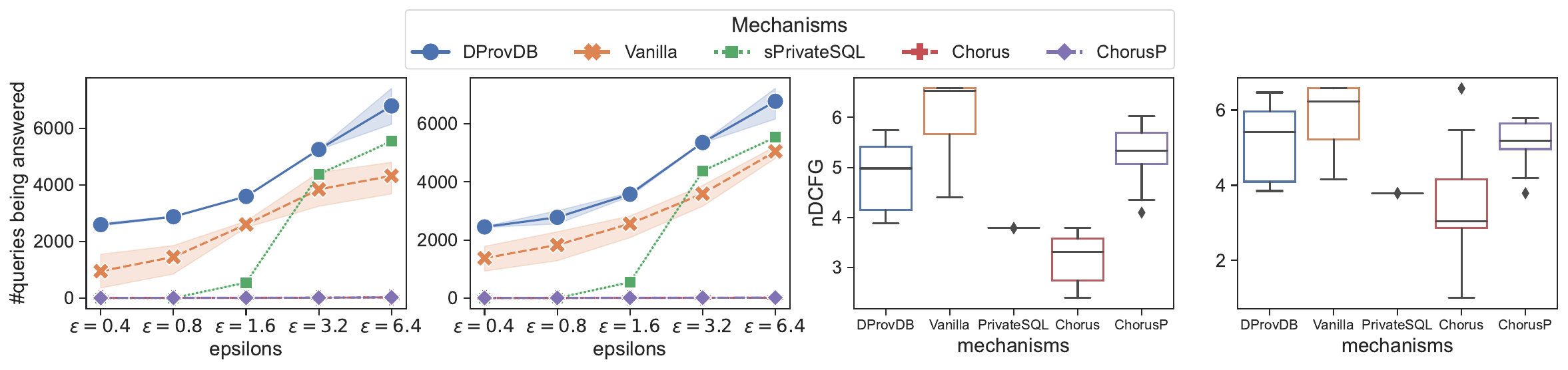}
    \caption{End-to-end Comparison (\emph{RRQ} task, over \emph{Adult} dataset), from left to right: a) utility v.s. overall budget, round-robin; b) utility v.s. overall budget, randomized; c) fairness against baselines, round-robin; d) fairness against baselines, randomized.}
    \label{fig:tpch_e2e}
\end{figure*}

\begin{figure}
    \centering
    \includegraphics[width=\linewidth]{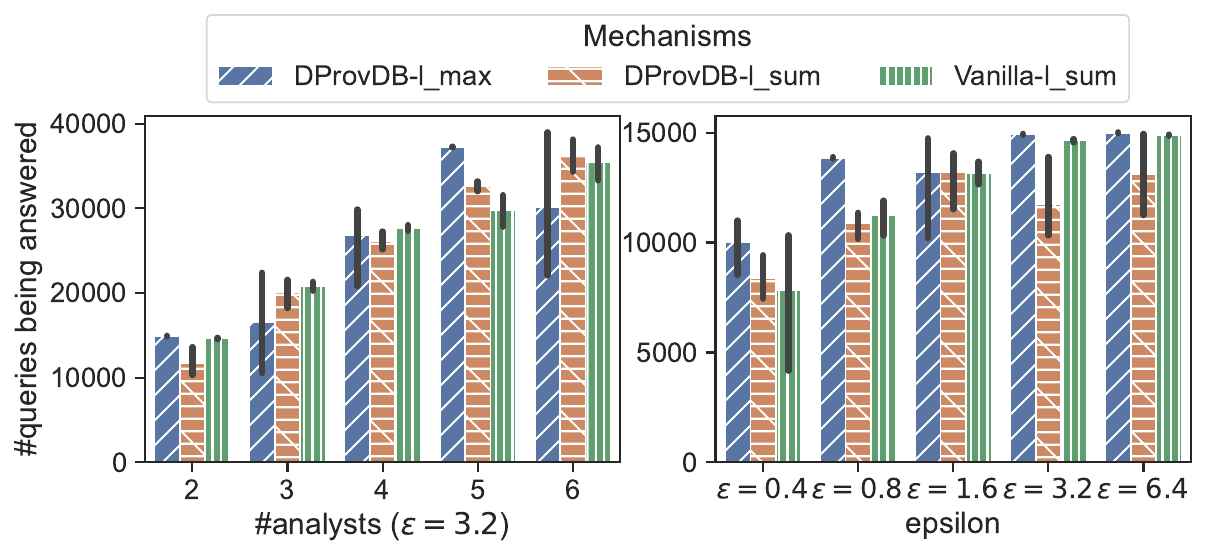}
    \caption{Component Comparison (\emph{RRQ} task, over \emph{Adult} dataset): Additive GM v.s. \Baseline. Left: utility v.s. \#analysts, round-robin; Right: utility v.s. overall budgets, round-robin.}
    \label{fig:agm_tpch}
\end{figure}

\section{Discussion on Design of Views and Supported Queries}

As discussed earlier, \oursystem can properly select a set of materialized histogram views such that incoming queries can be answered using the noisy synopses (corresponding to the views).
Hereby we discuss the design choices of such views for some representative SQL queries.
In particular, we first focus on using full domain histogram to handle GROUP BY operator and then discuss how to handle aggregation queries.

\stitle{Handling the GROUP BY Operator.}
The standard GROUP BY operator in database can leak the information about the activate domain in the database when enforcing differential privacy \cite{mcsherry2009privacy,near2021differential}. Consider the following histogram query that asks the number of employees from each state in a company.

\begin{verbatim}
    SELECT state, COUNT(*)
    FROM employee
    GROUP BY state
\end{verbatim}

By adding noise to the result of this query, the domain knowledge of the State is leaked -- the adversary can clearly know from which states there are no employees of this company.
Existing works \cite{mcsherry2009privacy} modify the GROUP BY operator into GROUP BY* which requires the data analyst to enumerate the domain of the attribute.
This GROUP BY* modification does not directly apply to \oursystem since \oursystem answers queries using DP views.
To enable this operator or this type of histogram query, we explicitly bring up the definition of histogram and full domain histogram. 

\begin{definition}[Full-domain Histogram View]
For a given database instance $D$ and an attribute of the database relation $a_i \in attr(R_i)$, the {histogram} view of this attribute is the 1-way marginal $h(a_i) \in \mathbb{N}^{|Dom(a_i, D)|}$ where each entry of the histogram $h(a_i)[j]$ is the number of elements in the database instance $D$ of value $j \in Dom(a_i, D)$.
The {full-domain histogram} view of this attribute $a_i$ is the histogram view built upon the domain of $a_i$ rather than the active domain.
That is, $h_{f}(a_i) \in \mathbb{N}^{|Dom(a_i)|}$, where $h_f(a_i)[j]$ is the number of elements in the database instance $D$ of value $j \in Dom(a_i)$.
\end{definition}

Note that the non-private full-domain histogram view can be sparse, with a number of 0 entries.
\oursystem generates DP synopses for the full-domain histogram views by injecting independent Gaussian noise to each entry of the view.
Then these DP synopses are able to be extended to handle the differentially private GROUP BY operator by either of the following approaches:
1) in compatible with the GROUP BY* operator, \oursystem can allow data analyst to enumerate the attribute domain and return the corresponding part (or subset) of the DP synopsis to the data analyst; 2) similar to \cite{wilson2020differentially}, \oursystem can allow data analyst to specify a threshold parameter $\tau$, where the noisy entries of the DP synopsis with counts less than $\tau$ will not be returned to the data analyst.

\stitle{Handling Aggregation Queries.}
Aggregation queries requires SQL operators including SUM, AVG, MAX, and MIN.
This type of queries is not supported by PrivateSQL \cite{kotsogiannis2019privatesql}.
A typical running example of aggregation query is to ask the sum of salaries of all employees in the database, which is shown as follows.
The sensitivity of this query equals the maximum value in the domain minus the minimum value in the attribute domain (as we consider the bounded DP).

\begin{verbatim}
    SELECT SUM(salary)
    FROM employee
\end{verbatim}

There are two main challenges for the DP SQL engine to support this type of queries.
First, the DP SQL engine should keep track of the range of the maximum and the minimum value that an attribute can take to correctly calculate the sensitivity of the query.
Second, answering the aggregation queries with the simple (full-domain) histogram view incurs large error variances, since the error accumulates when combining the involved bins in the histogram.

To solve the first issue, we incorporate the clipping and rewriting technique \cite{johnson2020chorus} into the design of the views.
\oursystem allows data curator to enforce a clipping lower bound $lb$ and upper bound $ub$ on the domain of the targeting attribute when generating/initializing the views.
Then the value of the attribute $a_i$ in the database is regarded as $\max (lb, \min (ub, a_i))$.
Therefore, after injecting noise to the histogram (similar to what we described for the GROUP BY operator), the maximum sensitivity of the sum query answered over the histogram is bounded by $(ub - lb)$\footnote{For histogram with domain discretization, the maximum sensitivity is bounded by $\frac{(ub - lb)}{s}$ where $s$ denotes the bin size.}.
We call these generated views as \emph{clipped histogram views}, or clipped views, with bins only in the clipping boundary.
Answering the SUM query over the clipped histogram views is just applying the linear combination of the bin value and the bin counts to (a subset of) the histogram bins.
Since this clipping operation satisfies the property of constant Lipschitz transformation \cite{blocki2013differentially}, generating clipped views satisfy $(\epsilon, \delta)$-DP, derived by applying for the Lipschitz extension on DP \cite{raskhodnikova2015efficient}.

\section{Discussion on Fairness Metrics}

While our system can guarantee proportional fairness with the privacy budget consumed per analyst (when analysts ask a sufficient number of queries) as the quality function, it may be interesting for future work to consider other quality functions such as the average expected error as per workload or the number of queries being answered per analyst.
We coined new fairness metrics (as briefed in experiment section) inspired by the discounted cumulative gain (DCG) in information retrieval \cite{jarvelin2002cumulated,wang2013theoretical}, namely the following discounted cumulative fairness gain (DCFG).
We now provide a bit more information and discussion on these fairness metrics.

\begin{definition}[Discounted Cumulative Fairness Gain]
\label{def:dcfg}
Given $n$ data analysts $A_1, \dots, A_n$, where the privilege level of the $i$-th data analyst is denoted by $p_i$, the fairness scoring of the query-answering of the query engine $\mc{M}$ when the overall privacy budget exhausts is calculated as
$$
\text{DCFG}_\mc{M} = \sum^n_{i=1} \frac{| Q_{A_i} |}{\log_2 (\frac{1}{p_i} +1)}
$$
where $| Q_{A_i} |$ denotes the number of queries answered to the data analyst $A_i$.
\end{definition}

Note that the discounted cumulative fairness gain (DCFG) is proportional to the overall number of queries being answered by the mechanism.
If the number of queries being answered is not comparable to each other for different mechanisms, it is not fair to directly compare and justify this DCFG metric.
Thus a normalized version of DCFG would be useful in practice.

\begin{definition}[Normalized Discounted Cumulative Fairness Gain]
\label{def:ndcfg}
Let $|Q| = \sum^n_{i=1} | Q_{A_i} |$ be the total number of queries answered by the system or mechanism to all data analysts from $A_1, \dots, A_n$:
    The normalized discounted cumulative fairness gain (nDCFG) is calculated as normalizing DCFG by $|Q|$.
    $$
    \text{nDCFG}_\mc{M} = \frac{\text{DCFG}_\mc{M}}{|Q|}
    $$
\end{definition}

\begin{example}
We illustrate the DCFG metric by considering the example where there are three data analysts $A_1, A_2, A_3$ with privilege level $p_1=1, p_2=2, p_3=4$ (i.e., $A_1$ has the lowest privilege level while $A_3$ has the highest privilege level).
Supposing there are two mechanisms $\mc{M}_1, \mc{M}_2$ and the outcomes of them are the following.
$\mc{M}_1$ answers 10 queries to $A_1$, 3 queries to $A_2$ and 0 queries to $A_3$, whereas $\mc{M}_2$ answers 2 queries to $A_1$, 4 queries to $A_2$, and 7 queries to $A_3$.
The DCFG score of the first mechanism is calculated as $\frac{10}{1} + \frac{3}{0.585} + \frac{0}{0.3219} = 15.13$ while that of the second mechanism is $\frac{2}{1} + \frac{4}{0.585} + \frac{7}{0.3219} = 30.58$.
After normalization, the nDCFG scores of these two mechanisms are 1.16 and 2.35, respectively.
Though both mechanisms can answer the same number of queries to the \textit{group of data analysts}, the second one attains higher scores.
\end{example}

\eat{
\section{Yet Another $(t,n)$-Analyst Model}

average-case privacy \cite{barber2014privacy,cuff2016differential} \xh{Is there any average-case analysis in t out of n participating parties in mpc when the n participating parties have different corruption power, which we can borrow. average case of DP never looks the case with different privacy budgets. the adversaries are considered to know different information (\url{https://differentialprivacy.org/average-case-dp/}), but here we refer to adversaries who know the same information but to different extents (budget).}

\stitle{Averaging the compromisation loss.} 

Assuming each analyst $A_i$ is associated with a probability $p_i$ of the extent to which he/she can be compromised.
Under the $(t, n)$-compromised setting, the probabilities of any $t$ out of $n$ analysts sum to less than 1. \xh{Again, can we check in mpc setup, how is t out of n case related to the case with different corruptibility $p_i$?} \xh{$p_i$ is used for privilege level, may use a different term instead}
We would like to bound the average risk of privacy loss in this model.

\begin{definition}[Bounding Average Risk]
Given $n$ analysts with $(t, n)$-compromisation assumption, we say a mechanism satisfies $(\epsilon, t, n)$-DP \xh{the set of $\{p_i\}$ is not modeled in the DP def, I feel it is a different model as the (t,n)-assumption.}, if for all $D, D'$ differing in one tuple,
    $$
    \mathbb{E}_{O \coloneqq (O_1, \dots, O_n) \sim \mc{M}(D)}[p_i \cdot \mathcal{L}_{D/D'}(\mathcal{M}, A_i, O_i)] \leq \epsilon
    $$
    where $\mathcal{L}_{D/D'}(\mathcal{M}, A_i, O_i) = \ln(\frac{\Pr[\mathcal{M}_{A_i}(D)=O_i]}{\Pr[\mathcal{M}_{A_i}(D')=O_i]})$ is the privacy loss random variable of the mechanism's output to analyst $A_i$.
\end{definition}

Note that the privacy notion under the average privacy loss is weaker than DP.
It takes the arithmetic mean of the privacy loss random variable for all data analysts.
The lower bound of the corruption privacy loss, in the average sense, is therefore bounded by $\Sigma_t (p_i \cdot \epsilon_i)$, which is the sum of the $t$ largest $p_i \cdot \epsilon_i$.
Since for any $t$ corruption probability $\Sigma_t p_i \leq 1$, we have $\Sigma_t (p_i \cdot \epsilon_i) \leq \Sigma_t p_i \cdot (\max_t \epsilon_i) \leq \max_t \epsilon_i \leq \max_n \epsilon_i$.
This shows that the lower bound of compromisation under this privacy notion is lower than the bound for all-compromisation setting, indicating an optimal mechanism, if there exists one, can perform better than our proposed algorithm with $(t, n)$-compromisation assumption. \xh{I feel we may not include this averaging the compromisation loss in this version, as it is a quite different notion as (t,n) model, and there is no actual algorithms proposed for this model. Theorem 6.2 seems a good reason to us does not proceed with a new algorithm in this paper. But the purpose of introducing (t,n) is to not waste the privacy budget as discussed in the (t,n)-analysts corruption graph.  }

}
\fi

\balance

\end{document}